\newcolumntype{d}[1]{D{.}{.}{#1} }
\newcolumntype{s}[1]{D{/}{/}{#1} }
\newcommand\peqcheck{{\scshape{PEQ\-check}}\xspace}
\newcommand\cpachecker{{\scshape{CPA\-checker}}\xspace}
\spnewtheorem*{theorem1}{Theorem~1}{\normalfont\bfseries}{\itshape}
\spnewtheorem*{theorem2}{Theorem~2}{\normalfont\bfseries}{\itshape}
\spnewtheorem*{theorem3}{Theorem~3}{\normalfont\bfseries}{\itshape}
\spnewtheorem*{theorem3b}{Theorem~3}{\normalfont\bfseries}{\itshape}
\begin{document}
%
\title{\peqcheck: Localized and Context-aware Checking of Functional Equivalence ~~~~~~(Technical Report)}

\titlerunning{Localized and Context-aware Checking of Functional Equivalence}
%
\author{Marie-Christine Jakobs
}
\authorrunning{M.-C. Jakobs}

\institute{Technical University of Darmstadt, Department of Computer Science,\\ Darmstadt, Germany}
\maketitle              

\begin{abstract}
Refactorings must not alter the program's functionality.
However, not all refactorings fulfill this requirement.
Hence, one must explicitly check that a refactoring does not alter the functionality.
Since one rarely has a formal specification of the program's behavior, we utilize the original program as functional specification.
Then, we check whether the original and refactored program are \emph{functionally equivalent}.
To this end, we apply a common idea and reduce equivalence checking to program verification.
To increase efficiency, our equivalence checker \peqcheck constructs one verification task per refactored code segment instead of one per function as typically done by prior work.
In addition, \peqcheck considers the context of the code segments.
For instance, only variables that are modified and live are required to be equivalent and read-only variables may be shared between original and refactored code segments.
We show that \peqcheck is sound.
Moreover, our evaluation testifies that the localized and context-aware checking performed by \peqcheck can indeed be beneficial.

\keywords{Functional equivalence \and Equivalence checking \and Functional Equivalence Checking  \and Software Verification \and Program Generation.}
\end{abstract}

\section{Introduction}
Developers perform refactoring~\cite{RefactoringSurvey,RefactoringFowler} to improve the quality of their software, e.g., the software's performance.
To improve the software's performance, one may parallelize execution hot spots, e.g., using OPenMP~\cite{OpenMP}.
Indeed, code parallelization with OpenMP is the motivation for our work. 
While parallelization aims at improving the software's performance and in general a refactoring aims at improving the software's quality, the refactoring must ensure that the software's functionality is not altered.
To prevent that a refactoring inadvertently changes the software's functionality, a verification of the refactored software should check that the software's functionality is preserved.

Various approaches exist that aim to guarantee that a refactored program preserves the functionality.
One approach that only works for (semi-)automatic refactorings is to prove the correctness of the applied refactoring rules~\cite{SCAMVerifyRefactoring,DBLP:conf/pepm/SultanaT08,ParameterizedEQCheck}.
In industry, regression testing~\cite{Test} is used, but testing typically does not examine all program paths and, thus, may miss regressions.
An alternative to testing is formal software verification~\cite{SWVerificationSurvey}.
Incremental and regression verification techniques~\cite{RegresionMC,ExtremeMC,EvolCheck,PrecisionReuse,DiSE,RV-DAC} propose solutions to efficiently re-verify modified programs.
However, many of those techniques rely on a specification of the functional behavior, which is rarely available.
In contrast, regression verification techniques that check the functional equivalence of the original and refactored software do not require a specification.

Different approaches exist to check functional equivalence of two programs (or functions).
For example, one can apply relational program verification~\cite{DBLP:conf/popl/Benton04,DBLP:journals/tcs/Yang07,ProductPrograms}, establish a (bi)simulation relation~\cite{CoVaC,DBLP:conf/oopsla/0001SCA13,DBLP:conf/aplas/DahiyaB17,DBLP:conf/pldi/ChurchillP0A19}, translate the programs into models and show model equivalence~\cite{PromelaFEQ,TASS,PRESGen,DBLP:conf/cav/VerdoolaegeJB09,DBLP:conf/date/ShashidharBCJ05}, compute symbolic summaries and check if the summaries are equivalent~\cite{DiffSymExe,ARDiff}, translate the equivalence problem into a Horn constraint problem~\cite{Reve}, or combine program generation with verification~\cite{RV-DAC,RV-Journal,UCKLEE,SymDiff,RIE,RV-thread,CIVL,DBLP:journals/ijpp/AbadiKPV19}. 
The last solution translates the equivalence problem into a program verification task  (a program with assertions) and uses a standard verifier to prove the verification task.
Since this solution is independent of the proof technique, it directly profits from existing verification technologies and their improvements.
This makes it particularly appealing and is one of the reasons why we want to use this approach to check functional equivalence of a sequential program and its OpenMP parallelization.

Unfortunately, most of the existing approaches~\cite{RV-DAC,RV-Journal,UCKLEE,SymDiff,RIE} that reduce functional equivalence checking to program verification focus on sequential programs and are unsound for parallel programs.
For example, they assume that a function returns the same result whenever it is called with the same inputs (including global variables).
Based on this assumption, the approaches replace function calls by uninterpreted functions. 
However, this assumption is no longer guaranteed when another thread interferes with the function execution.
While CIVL~\cite{CIVL} and RVT~\cite{RV-thread} support parallel programs, they perform equivalence checking on program or function level.
To reduce the complexity of equivalence checking, e.g., to reduce the state space that needs to be considered during verification, we aim at equivalence checking on the level of (parallelized) code segments.
Currently, only the approach of Abadi et al.~\cite{DBLP:journals/ijpp/AbadiKPV19} reduces equivalence checking to program verification and supports (parallelized) code segments.
Their approach only works if there exist a bijection between inputs of the code segments and a bijection between the outputs.
However, this assumption is unnecessarily strict, e.g., a bijection between inputs may prohibit that one of the code segments applies the strategy pattern.

To overcome this problem, we propose \peqcheck, a sound approach that generates verification tasks to check equivalence of code segments.
While motivated by OpenMP parallelization, \peqcheck cannot only check equivalence of sequential and parallelized code segment, but also supports equivalence checking of sequential code segments and equivalence checking of parallel code segments.
To determine the context of the code segments, \peqcheck utilizes dataflow analyses to find out how variables are used in and after the code segments.
Based on the context information, \peqcheck employs a fine-grained differentiation of variables and, thus, reduces the complexity of the generated verification task. 
For instance, variables that are not modified are shared, inputs are only equalized when they are used before they are written in the code segment, and equivalence checking is restricted to modified variables that are used after the code segments.
While \peqcheck is the first regression verification approach that uses such a fine-grained differentiation of variables, existing approaches use some of these optimizations..
For example, SymDiff~\cite{SymDiff} only checks equivalence of modified variables and
RVT~\cite{RV-DAC,RV-Journal} only initializes global variables that are written to by at least one of the programs.

We show soundness of our \peqcheck approach, implemented it in a prototype tool, and evaluated it on several examples.
Our evaluation testifies that \peqcheck detects non-equivalence and that \peqcheck's localized and context-aware checking can be beneficial.

This technical report is an extension of our conference paper~\cite{PEQcheck} and enhances our conference paper with the soundness proofs.
To be self-contained, the technical report presents all contributions of the conference paper~\cite{PEQcheck}.
\subsection{Illustration}\label{ssec:example}
We use the sequential and parallelized program shown on the left-hand side of Fig.~\ref{fig:example} to explain the idea of our \peqcheck approach.
Both programs, \texttt{sum2\_seq} and \texttt{sum2\_par}, iteratively sum up the first $N$ numbers and then add~2.
To check that \texttt{sum2\_seq} and \texttt{sum2\_par} are functionally equivalent, we inspect the equivalence of the two highlighted code segments. 
The verification task generated to inspect equivalence is shown on the right-hand side of Fig.~\ref{fig:example}.
In the following, we explain how to generate this task.

\begin{figure}[t]%
{
\begin{minipage}{0.48\textwidth}
\begin{lstlisting}[caption=Sequential program]
int sum2_seq(unsigned char N)
{
 int j, sum;
\end{lstlisting}
\vspace{-\baselineskip}
\begin{lstlisting}[backgroundcolor=\color{cyan!20}]
 sum = N;
\end{lstlisting}
\vspace{-1.2\baselineskip}
\begin{lstlisting}[backgroundcolor=\color{cyan!20}]
 for(j=N-1; j>=0; j--)
 {
  sum += j;
 }
\end{lstlisting}
\vspace{-\baselineskip}
\begin{lstlisting}
 return sum + 2;
}
\end{lstlisting}

\begin{lstlisting}[caption=Parallelized program]
int sum2_par(unsigned char N)
{
 int i, sum;
\end{lstlisting}
\vspace{-0.5\baselineskip}
\begin{lstlisting}[backgroundcolor=\color{yellow!40}]
 sum = 0;
\end{lstlisting}
\vspace{-1.2\baselineskip}
\begin{lstlisting}[backgroundcolor=\color{yellow!40},basicstyle=\tiny]
#pragma omp parallel for reduction(+:sum)
\end{lstlisting}
\vspace{-1.2\baselineskip}
\begin{lstlisting}[backgroundcolor=\color{yellow!40}]
 for(i=1; i <= N; i++)
 {
  sum += i;
 }
\end{lstlisting}
\vspace{-\baselineskip}
\begin{lstlisting}
 return sum + 2;
}
\end{lstlisting}
\end{minipage}
\hfill
\begin{minipage}{0.48\textwidth}
\begin{lstlisting}[caption=Verification task]
int main()
{
 unsigned char N;
 int i, j, sum_s, sum;

 N = random_uchar();
\end{lstlisting}
\begin{lstlisting}[backgroundcolor=\color{cyan!20}]
 sum_s = N;
\end{lstlisting}
\vspace{-1.2\baselineskip}
\begin{lstlisting}[backgroundcolor=\color{cyan!20}]
 for(j=N-1; j>=0; j--)
 {
  sum_s += j;
 }
\end{lstlisting}
\begin{lstlisting}[backgroundcolor=\color{yellow!40}]
 sum = 0;
\end{lstlisting}
\vspace{-1.2\baselineskip}
\begin{lstlisting}[backgroundcolor=\color{yellow!40},basicstyle=\tiny]
#pragma omp parallel for reduction(+:sum)
\end{lstlisting}
\vspace{-1.2\baselineskip}
\begin{lstlisting}[backgroundcolor=\color{yellow!40}]
 for(i=1; i <= N; i++)
 {
  sum += i;
 }
\end{lstlisting}
\vspace{-0.25\baselineskip}
\begin{lstlisting}[language=C]	
 assert(sum_s == sum);
	
 return 0;
}
\end{lstlisting}
\end{minipage}
}
\caption{Example sequential program, its parallelization, and the generated verification task for equivalence checking (taken from \cite{PEQcheck})}%
\label{fig:example}%
\end{figure}

First, we determine the context of the code segments.
Therefore, we find out which variables are used in the code segments and how.
More concretely, we collect the variables $\mathcal{V}$ used by the code segments, determine which variables are modified~($\mathcal{M}$) in the code segment, which variables are used in the code segment before they are defined~($\mathcal{UB}$), and which variables are live after the code segment~($\mathcal{L}$).\footnote{Note that it is safe to ignore variables that are only defined in the scope of the code segment because the neither need to be declared nor initialized and cannot be live after the code segment because they cannot be accessed after the code segment.}
Four our example, we get $\mathcal{V}_\mathrm{seq}=\{j, N, sum\}$, $\mathcal{V}_\mathrm{par}=\{i, N, sum\}$, $\mathcal{M}_\mathrm{seq}=\{j, sum\}$, $\mathcal{M}_\mathrm{par}=\{i, sum\}$, $\mathcal{UB}_\mathrm{seq}=\mathcal{UB}_\mathrm{par}=\{N\}$, and $\mathcal{L}_\mathrm{seq}=\mathcal{L}_\mathrm{par}=\{sum\}$.
Based on this information, we then determine which variables (1)~may be shared, (2)~need to be declared, (3)~need to be equivalent, and (4)~whether and how to initialize the variables.

To decide this, we also need to relate the variables of the two segments.
So far, we relate variables by their name.
Thus, our approach fails if there exist variables with the same name, but different types.\footnote{One can overcome this limitation by providing the relation of the variables.} 
However not that we allow both code segments to use additional (input) variables, e.g., the parallelized code segment uses additional variable $i$.

Given the sets $\mathcal{V}_\mathrm{seq}, \mathcal{V}_\mathrm{par}, \mathcal{M}_\mathrm{seq}$, and $ \mathcal{M}_\mathrm{par}$, we identify the shared variables.
This is important for code generation because variables that occur in both code segments (i.e., $\mathcal{V}_\mathrm{seq}\cap\mathcal{V}_\mathrm{par}$) and are not modified can be shared safely and our approach shares them.
In contrast, modified variables that occur in both programs must be duplicated.
We decided that the sequential code segment will use the duplicated variables.
Hence, our example shares common variable $N$ and duplicates variable $sum$.
We use $sum\_s$ for the duplicated variable.

At last, we construct the verification task.
At the beginning, a verification task declares the variables $\mathcal{V}=\mathcal{V}_\mathrm{seq}\cup\mathcal{V}_\mathrm{par}$ and the duplicated variables.
In our example, we declare variables $\{i, j, N, sum, sum\_s\}$.
Thereafter, the verification task initializes variables non-deterministically when they are used before they are defined by the code segment.
More concretely, the task must initialize all variables \(\mathcal{UB}^*_\mathrm{seq}\cup\mathcal{UB}_\mathrm{par}\), where \(\mathcal{UB}^*_\mathrm{seq}\) is obtained from \(\mathcal{UB}_\mathrm{seq}\) by replacing duplicated variables by their duplicate. 
In case a variable and its duplicate must be initialized, the task will initialize the duplicate  with the same value as the original variable.
In our example verification task, we call function \texttt{random\_ushort} to non-deterministically initialize variable $N$.
After preparing the inputs, the verification task executes the sequential and parallelized code segment.
Thereby, the sequential segment uses the duplicated variables wherever necessary.
Finally, the task uses one assert statement per relevant output variable, which checks the equivalence of this variable and its duplicate.
The relevant output variables are all variables that (1)~are shared, (2)~are modified, and (3)~may be live afterward, i.e., the variables in the set 
$(\mathcal{M}_\mathrm{seq}\cup\mathcal{M}_\mathrm{par})\cap(\mathcal{L}_\mathrm{seq}\cup\mathcal{L}_\mathrm{par})$.
Our example contains one assert statement that inspects the equivalence of variables $sum$ and $sum\_s$.
\section{Programs} 
We present our approach on a simple imperative language on integer variables that excludes synchronization primitives because we do not study synchronization issues.
The following grammar describes our programs\footnote{Our implementation supports C programs with OpenMP pragmas for parallelization.}.
\[ \begin{array}{l l l} S &:=  & E ~|~ v:=_\ell aexpr; ~|~ \mathbf{assert}_\ell~bexpr; ~|~\\ && \mathbf{if}_\ell~bexpr~\mathbf{then}~ S_1~\mathbf{else}~S_2 ~|~ \mathbf{while}_\ell~bexpr~\mathbf{do}~ S ~|~ S_1;S_2 ~|~ [S_1 \| \dots \| S_n]\end{array}\]
Program~\(E\) denotes the empty program.
Arithmetic expressions~\(aexpr\) and boolean expressions~\(bexpr\) are assumed to be constructed by applying standard operators on integers.
Furthermore, subprograms~\(S_i\) can be composed to build more complex programs~\(S\).
Note that we annotate each basic statement with a label~\(\ell\), which is assumed to be unique in the complete program.
Thus, subprograms of a program can be identified unambiguously.

We use \(\mathcal{V}\) to denote the set of all program variables and subset~\(\mathcal{V}(S)\subseteq\mathcal{V}\) describes the variables of (sub)program \(S\), i.e., all variables that either occur in an arithmetic or boolean expression of \(S\) or occur on the left-hand side of an assignment in \(S\).
Similarly, subset~\(\mathcal{V}(expr)\subseteq\mathcal{V}\) denotes the variables that are used in expression~\(expr\).

To generate accurate verification tasks, \peqcheck renames certain program variables in the code segments of the sequential program.
For the sake of renaming, \peqcheck relies on a bijective, renaming function~\(\rho:\mathcal{V}\mapsto\mathcal{V}\) and replaces all occurrences of any variable~\(v\) by \(\rho(v)\).
The result of the replacement is the renamed program~\(\mathcal{R}(S, \rho)\).
Similarly, \(\mathcal{R}(expr, \rho)\) represents the renaming of \(expr\).
For the example in Fig.~\ref{fig:example}, we use the renaming function \(\rho_\mathrm{sum2}\), where \(\rho_\mathrm{sum2}(sum)=sum\_s\), \(\rho_\mathrm{sum2}(sum\_s)=sum\), and \(\rho_\mathrm{sum2}(v)=v\) otherwise.

\begin{figure}[t]
	\centering
\[\begin{array}{l}
\frac{}{(v:=_\ell aexpr;, \sigma)\xrightarrow{v:=aexpr;}(E,\sigma[v:=\sigma(aexpr)])} \quad \frac{\sigma(bexpr)=\mathrm{true}}{(\mathbf{assert}_\ell~bexpr;, \sigma)\xrightarrow{bexpr}(E,\sigma)}\quad\frac{}{(E;S,\sigma)\xrightarrow{\textbf{nop}}(S, \sigma)} \\
\\
\frac{\sigma(bexpr)=\mathrm{true}}{( \mathbf{if}_\ell~bexpr~\mathbf{then}~ S_1~\mathbf{else}~S_2,\sigma)\xrightarrow{bexpr}(S_1,\sigma)} \quad
\frac{\sigma(bexpr)=\mathrm{false}}{( \mathbf{if}_\ell~bexpr~\mathbf{then}~ S_1~\mathbf{else}~S_2,\sigma)\xrightarrow{\neg bexpr}(S_2,\sigma)}\\
\\
\frac{\sigma(bexpr)=\mathrm{true}}{(\mathbf{while}_\ell~bexpr~\mathbf{do}~ S, \sigma)\xrightarrow{bexpr}(S;\mathbf{while}_\ell~bexpr~\mathbf{do}~ S, \sigma)} \quad
\frac{\sigma(bexpr)=\mathrm{false}}{(\mathbf{while}_\ell~bexpr~\mathbf{do}~ S, \sigma)\xrightarrow{\neg bexpr}(E,\sigma)}\\
\\
\frac{(S_1,\sigma)\xrightarrow{op}(S'_1,\sigma')}{(S_1;S_2,\sigma)\xrightarrow{op}(S'_1;S_2, \sigma')} \hfill~
\hfill
\frac{(S_i,\sigma)\xrightarrow{op}(S'_i,\sigma')}{([S_1 \| \dots \| S_i\|\dots \| S_n],\sigma)\xrightarrow{op}([S_1 \| \dots \| S'_i\|\dots \| S_n],\sigma')}\hfill
\frac{}{([E \| \dots \| E],\sigma)\xrightarrow{\textbf{nop}}(E,\sigma)}\\

\end{array}
\]
	\caption{Rules for operational semantics}
	\label{fig:semRules}
\end{figure}

For the program semantics, we consider an operational semantics that defines a program's executions.
The semantics describes executions as transitions between execution states.
An execution state is a pair of a program plus a data state.
A data state \(\sigma: \mathcal{V}\mapsto \mathbb{Z}\) assigns an integer value to each variable.
As usual, we denote the set of all data states by \(\Sigma\) and write \(\sigma(expr)\) to denote the evaluation of \texttt{expr} in data state \(\sigma\in\Sigma\).
Furthermore, we define \(\rho(\sigma)\in\Sigma\) such that for all \(v\in \mathcal{V}: \rho(\sigma)(v)=\sigma(\rho^{-1}(v))\) and introduce \(\sigma=_{|_{V}}\sigma'\) to describe that the variables of subset \(V\subseteq \mathcal{V}\) are identical in states \(\sigma\in\Sigma\) and \(\sigma'\in\Sigma\), i.e., \(\forall v\in V: \sigma(v)=\sigma'(v)\).
Furthermore, for any \(\sigma, \sigma'\in \Sigma\) and any subset \(V\subseteq \mathcal{V}\), we write \(\sigma=_{|_{V}}\sigma'\) if for all \(v\in V: \sigma(v)=\sigma'(v)\).

The 10~rules shown in Fig.~\ref{fig:semRules} define the execution steps.
The state update~\(\sigma[v:=\sigma(aexpr)]\), which is used in the rule for the assignment, returns a new data state~\(\sigma'\) with \(\sigma'(w)=\sigma(w)\) for all \(w\in\mathcal{V}\) with \(w\neq v\) and \(\sigma'(v)=\sigma(aexpr)\).
Since we have not fixed the expression syntax, we also do not specify the expression evaluation.
However, our approach requires that expression evaluation (a)~is deterministic, (b)~only depends on the variables used in the expression, i.e., \mbox{\(\forall \sigma, \sigma'\in\Sigma:\)} \mbox{\(\sigma=_{|_{\mathcal{V}(expr)}}\sigma'\implies \sigma(expr)=\sigma'(expr)\)}, and (c)~is consistent with renaming, i.e., \(\sigma(expr)=\rho(\sigma)(\mathcal{R}(expr, \rho))\).
In addition, we assume that in all states~\(\sigma\in\Sigma\) an expression~$v$ that references variable~$v$ evaluates to the variable's value in state~\(\sigma\)  and that the equivalence of two variables \(v\) and \(v'\) (encoded as boolean expression \(v==v'\)) checks that their values are identical, i.e., \(\forall v, v'\in\mathcal{V}: \sigma(v == v')\implies\sigma(v)=\sigma(v')\).

Next, we inductively define the executions \(ex(S)\) of a program \(S\).
{\footnotesize\[\frac{\sigma\in\Sigma}{(S,\sigma)\in ex(S)}\quad
\frac{(S_0,\sigma_0)\xrightarrow{op_1}\dots\xrightarrow{op_n}(S_n,\sigma_n)\in ex(S), \quad (S_n,\sigma_n)\xrightarrow{op_{n+1}}(S_{n+1},\sigma_{n+1}) }{(S_0,\sigma_0)\xrightarrow{op_1}\dots\xrightarrow{op_n}(S_n,\sigma_n)\xrightarrow{op_{n+1}}(S_{n+1},\sigma_{n+1})\in ex(S)}\]}
We write \((S,\sigma)\rightarrow^*(S',\sigma')\) if we are not interested in the intermediate steps of the execution.
Furthermore, execution \((S,\sigma)\rightarrow^*(S',\sigma')\) (i)~terminates normally if \(S'=E\) and (ii)~violates an assertion if \(S'\) violates an assertion in~\(\sigma'\).
A program~\(S\) violates an assertion in state~\(\sigma\) if (a)~there exists an assert statement \(S_{a}=\textbf{assert}_\ell~bexpr\) whose assertion is violated in state \(\sigma\) (i.e., \(\sigma(bexpr)=false\)), and \(S=S_a\) or \(S=S_a;S'\) or (b)~\(S=[S_1 \| \dots \| S_i\|\dots \| S_n]\) or \(S=[S_1 \| \dots \| S_i\|\dots \| S_n];S'\) and there exists an \(S_i\) that violates an assertion in \(\sigma\).

Analogous to executions, we define syntactic paths~\(syn_P(S)\) of a program~\(S\).
However, syntactic paths ignore the data state.
{\footnotesize\[\frac{}{S\in syn_P(S)}\quad
\frac{S_0\xrightarrow{op_1}\dots\xrightarrow{op_n}S_n\in syn_P(S), \quad \exists\sigma,\sigma'\in\Sigma: (S_n,\sigma)\xrightarrow{op_{n+1}}(S_{n+1},\sigma') }{S_0\xrightarrow{op_1}\dots\xrightarrow{op_n}S_n\xrightarrow{op_{n+1}}S_{n+1}\in syn_P(S)}\]}
Again, we write \(S\rightarrow^*S'\) if we are not interested in the intermediate steps.

Next, we use the introduced semantics to define when two (sub)programs are equivalent.
We focus on \emph{partial equivalence}, i.e., we limit equivalence to executions that terminate normally.
In addition, we are only interested in equivalence of output variables, i.e., variables that contain the computation results, and ignore the values of intermediate variable.
Given the set of output variables, two programs are equivalent if all executions of both programs that start in the same data state \(\sigma\) and terminate normally agree on the values of the output variables.
\begin{definition}
Let \(S_1\) and \(S_2\) be two (sub)programs and \(V\subseteq\mathcal{V}\) be the output variables.
\(S_1\) and \(S_2\) are\emph{ partially equivalent} w.r.t.\ \(V\) (denoted by \(S_1\equiv_V S_2\)) if 
\[\forall \sigma,\sigma',\sigma''\in\Sigma, v\in V: ((S_1,\sigma)\rightarrow^*(E,\sigma')\wedge(S_2,\sigma)\rightarrow^*(E,\sigma''))\Rightarrow\sigma'(v)=\sigma''(v).\]
\end{definition}
Our goal is to translate partial equivalence into verification tasks, each of the tasks encoding partial equivalence of subprograms.
As explained in Sec.~\ref{ssec:example}, the encoding relies on information about how variables are used in a subprogram.
Based on the above semantics, we formally define the required usage sets. 
The set of modified variables contains all variables whose value changes.
\begin{definition}
Let \(S\) be a (sub)program. The variables \emph{modified} by \(S\) are:
\[\mathcal{M}(S):=\{v\in\mathcal{V}\mid \exists \sigma, \sigma'\in\Sigma: (S,\sigma)\rightarrow^*(\cdot,\sigma') \wedge \sigma(v)\neq\sigma'(v) \}.\]
\end{definition}
In side-effect free programs only assignments modify variables.
For those programs, the set~\(\mathcal{M}(S)\) of modified variables can be overapproximated by the set of variables that occur in \(S\) on the left-hand side of an assignment.  

The set \(\mathcal{UB}\) contains all variables~$v$ that may be used before they are defined.
For programs, these are the variables that are used uninitialized on some program execution.
Formally, there exists a path such that variable~$v$ occurs in an expression of an operation \(op_i\) on the path and $v$ does not occur on the left-hand side of an assignment on the first i-1 steps of the path.
\begin{definition}
Let \(S\) be a (sub)program. Its variables \emph{used before definition} in execution \(p=(S_0,\sigma_0)\stackrel{op_1}{\rightarrow}\dots\stackrel{op_n}{\rightarrow}(S_n,\sigma_n)\in ex(S)\) are:
\[\begin{array}{l l}\mathcal{UB}_{ex}(p):=&\{v\in\mathcal{V}\mid \exists 1\leq i\leq n:\forall 1\leq j<i: op_j\not\equiv v:=aexpr;  \\
& \wedge (op_i\equiv v':=aexpr; \wedge v\in \mathcal{V}(aexpr) \vee op_i\equiv bexpr\wedge v\in\mathcal{V}(bexpr)) \}\end{array}\]
The variables \emph{used before definition} in \(S\) are \(\mathcal{UB}(S):=\bigcup_{p\in ex(S)}\mathcal{UB}_{ex}(p)\).
\end{definition}
In practice, one can approximate \(\mathcal{UB}\) using an uninitialized variable analysis on subprogram \(S_i\) or perform a reaching definition analysis~\cite{ProgramAnalysis} on the program~\(S\).

Finally, we define the set~\(\mathcal{L}\) that includes all variables that are live after a subprogram~\(S_1\) of program~\(S\), i.e., all variables that live at at least one program \(S''\) that can be reached from \(S\) after executing \(S_1\).
Variables are live at a (sub)program if they may be used in the (sub)program before they are redefined.
To correctly consider output variables, we assume that they are used after the program terminated normally.
Hence, the set~\(\mathcal{L}\) depends on the subprogram, the program, and the output variables. 
Note that the following definition of \(\mathcal{L}\) is only adequate for subprograms that do not occur in parallel statements, which applies to all subprograms that we may compare in our equivalence checks. 

\begin{definition}
Let \(S\) be program, \(S_1\) a subprogram of \(S\) and \(V\subseteq\mathcal{V}\) the output variables. 
We define the \emph{variables live at} \(S_1\) by 
\[\begin{array}{l l}\mathcal{L}(S_1, V):= &\{v\in\mathcal{V}\mid \exists S_1^0\stackrel{op_1}{\rightarrow}\dots, \stackrel{op_n}{\rightarrow}S_1^n\in syn_P(S_1): \forall i\in[1,k]: op_k\not\equiv v:=expr;\\  & \wedge ((k=n-1\wedge (op_n\equiv v':=aexpr; \wedge v\in \mathcal{V}(aexpr)\\ & \vee op_n\equiv bexpr\wedge v\in\mathcal{V}(bexpr)))\vee k=n\wedge S_1^n=E \wedge v\in V ) \}.\end{array} \]
The variables \emph{live} in \(S\) after \(S_1\) are:
\[\begin{array}{l l}\mathcal{L}(S_1, S, V):=&\{v\in\mathcal{V}\mid \exists\sigma,\sigma'\in\Sigma: (S,\sigma)\rightarrow^*(S',\sigma')\in ex(S)\wedge\\ 
&(S'=S_1\wedge v\in \mathcal{L}(E, V)\vee S'=S_1;S''\wedge v\in \mathcal{L}(S'', V))\}\end{array}\]
\end{definition}
In practice, one may use a live variable analysis~\cite{ProgramAnalysis} to compute the set~\(\mathcal{L}\).

\section{Encoding Partial Equivalence of Subprograms}\label{sec:encoding}
In this section, we describe how our \peqcheck approach encodes the partial equivalence of two subprograms into a veri\-fi\-cation task.
Then, we prove that \peqcheck is sound. 
At the end, we discuss limitations of \peqcheck.

To encode partial equivalence of two subprograms, we need to make sure that both subprograms get the same inputs.
Two solution are proposed in the literature.
The first solution saves the data state before executing the first subprogram, save the result (data state) of the execution of the first subprogram, and loads the state~\cite{SymDiff,UCKLEE} before executing the second subprogram.
The second solution assigns equal values to the inputs of the two subprograms~\cite{RV-DAC,RIE}.
To save and load the state, one can either use dedicated methods~\cite{UCKLEE} to write and read the states from (persistent) memory or copy the variable values to and from additional variables that do not occur in the program.
The first option requires the verifier to understand the dedicated methods, which arbitrary verifiers likely will not.
Therefor, we exclude this option.
Saving and loading with additional variables as well as the second solution need to duplicate (shared, modified) variables.
Although when saving and loading  one does not need to rename variables, we think that assigning equal values (as done by the second solution) allows the verifier to more easily learn about the relation of the variables in the two subprograms.
Thus, our encoding will take up the second solution. 

As demonstrated in Section~\ref{ssec:example}, the \peqcheck encoding consists of three parts: (1)~the (equal) initialization of variables, (2)~the execution of the two subprograms, and (3)~checking equivalence of output variables.
We begin with a description of the general construction of this three parts and later discuss proper inputs required for a sound task generation.
Furthermore, note that our description forgoes to label program statements.

The initialization part is responsible for providing equal inputs to common variables in both code segments.\footnote{In practice, the initialization part also declares variables and due to default initialization, initializes variables non-deterministically . This is not required for our programs.  Furthermore, note that the initialization part is not required to guarantee soundness, but it is important to reliable detect equivalences.}
Our initialization part aims at equalizing duplicated input variables and assumes that non-duplicated (input) variables will not be modified by any of the two subprograms.
\peqcheck will guarantee this assump\-tion.
Before we can describe the initialization in detail, we have to decide whether to assign (1)~the duplicated variable the value of the variable or (2)~the variable the value of the duplicated variable.
Basically, it does not matter which option we choose because both variables are contained in \(\mathcal{V}\), the initialization is the first part of the encoding, and at the beginning of a program all variables are unconstrained.\footnote{In practice, the difference between the two options is whether the variable or its duplication are initialized non-deterministically before this initialization part.}
We chose option~(2) simply because then we initialize the variables of the parallelized code segment with the values of the counterparts in the sequential code segment.
Following option~(2), our initialization part adds one assignment per variable that should be equalized such that the assignment assigns to the variable the valude of duplicated variable.
To know which variables to equalize the initialization encoding is provided with a sequence~$V$ of these variables.
In addition, the initialization encoding requires the renaming function to identify the duplicated variable.
%
\[init(\rho, V):=\left\{\begin{array}{l l}E & \textrm{if~} V=\left<\right>\\ v:=\rho(v); & \textrm{if~} V=\left<v\right>\\ v:=\rho(v);init(\rho, V') & \textrm{if~} V=\left<v\right>\circ V'\\\end{array}\right.\]
Next, we describe how the equalization part checks that output variables have identical values, i.e., are equal.
Similar to the initialization part, we only check output variables that are duplicated and use assert statements for checking.
To find out which output variables may violate the partial equivalence property, we generate one assert statement per output variable such that the boolean expression in the assert statement compares the value of the original and duplicated variable.
Again, we require a renaming function~\(\rho\) to identify the duplicated variables and a sequence~$V$ of variables which should be checked for equivalence.
Given this information, the following definition summarizes our idea for the equalization part.
\[equal(\rho, V):=\left\{\begin{array}{l l}E & \textrm{if~} V=\left<\right>\\ \textbf{assert}~\rho(v)==v; & \textrm{if~} V=\left<v\right>\\ \textbf{assert}~\rho(v)==v;equal(\rho, V') & \textrm{if~} V=\left<v\right>\circ V'\\\end{array}\right.\]
After defining the initialization and equalization, we have everything at hand to define the verification task for equivalence checking.
From initialization and equalization, we know that we need a renaming function and two sets of variables.
Set~\(I\) describes the variables that should be equally initialized and set~\(C\) denotes the variables that should be checked for equivalence.
Given this information, the verification task becomes a  sequential composition of the initialization, the renamed subprogram \(S_1\), the subprogram \(S_2\), and the equalization.
To make a set of variables available for definitions \(init\) and \(equal\), we use a function \(\mathrm{toSeq}\) that transforms a set of variables into a sequence.
For example, one implementation of \(\mathrm{toSeq}\) could use the lexical ordering of the variables.
\[eq\_task(S_1, S_2, \rho, I,C):=init(\rho, \mathrm{toSeq}(I));\mathcal{R}(S_1,\rho);S_2;equal(\rho, \mathrm{toSeq}(C))\]
So far, we only presented how to encode an equivalence task, but left out the constraints on the inputs.
To be sound, inputs \(\rho\) and \(I\) must be chosen carefully. 

First, let us discuss the constraints on the renaming function~\(\rho\).
To guarantee that the initialization part equalizes \(v\) and its duplicate~\(\rho(v)\) for all variables~\(v\) in the set~\(I\)\footnote{Although the initialization part is not required for soundness, it must work properly if we include it in the \peqcheck approach.}, we require that (a) renaming does not mess up the initialization, i.e., \(\forall v\in I: \rho(v)=v\vee \rho(v)\notin I\). 
To ensure that the executions of subprograms \(\mathcal{R}(S_1,\rho)\) and \(S_2\) do not interfere with each other, the renaming function must 
(b)~prohibit interfering, i.e., 
\(\forall v\in\mathcal{V}(S_1)\cup\mathcal{M}(S_2):\rho(v)\notin\mathcal{M}(S_2)\) and \(\forall v\in\mathcal{M}(S_1): \rho(v)\notin\mathcal{V}(S_2)\cup\mathcal{M}(S_1)\).
We call renaming functions fulfilling the latter constraints \emph{appropriate for renaming}.
The renaming function~\(\rho_\mathrm{sum2}\) introduced in the previous section is appropriate for renaming.

To ensure that subprograms~\(S_1\) and \(S_2\) get the same input, the initialization must consider all duplicated variables that \(S_1\) or \(S_2\) use before definition.

In practice, we consider overapproximations \(\mathcal{UB}(S_1)\subseteq U_1\subseteq \mathcal{V}(S_1)\) and \(\mathcal{UB}(S_2)\subseteq U_2\subseteq \mathcal{V}(S_2)\) of the variables used before definition and  overapproximations \(\mathcal{M}(S_1)\subseteq M_1\subseteq \mathcal{V}(S_1)\) and \(\mathcal{M}(S_2)\subseteq M_2\subseteq \mathcal{V}(S_2)\) of the modified variables.
We use the overapproximations of the modified variables limit the equivalence check to a  subset \(C\subseteq M_1\cup M_1\) of the possibly modified variables.
Variable liveness and, thus, the subset of output variables, will further determine the set~\(C\).
To generate a bijective renaming function~\(\rho_\mathrm{switch}\), we rely on an injective function \(\mathrm{switch}: M_1\cup M_2\rightarrow \mathcal{V}\setminus(\mathcal{V}(S_1)\cup\mathcal{V}(S_2))\) that defines the duplicate variables.
Based on such an injective function \(\mathrm{switch}\), the renaming function~\(\rho_\mathrm{switch}\) switches all modified variables (\(M_1\cup M_2\)) with a non-program variable and keeps all other variables, i.e.,
for all variables \(v\in\mathcal{V}\) the renamed variable is \(\rho_\mathrm{switch}(v)=\mathrm{switch}(v)\) if \(v\in M_1\cup M_2\), \(\rho_\mathrm{switch}(v)=v_m\) if there exists \(v_m\in M_1\cup M_2\) and \(\mathrm{switch}(v_m)=v\), and \(\rho_\mathrm{switch}(v)=v\) in all other cases.\footnote{We rename the variables in \(img(\mathrm{switch})\) to guarantee bijectivity.} 
Note that \(\rho_\mathrm{switch}\) is a renaming function appropriate for renaming and fulfills condition (a) for \(I=(U_1\cap U_2)\cap(M_1\cup M_2)\)\footnote{Proved by Lemma~\ref{lem:corrRenaming} in the appendix.}, which we use to generate our tasks.
In our example, we use \(\mathrm{switch}_\mathrm{sum2}:sum\mapsto sum\_s\) to generate \(\rho_\mathrm{sum2}\). 

Next, we discuss soundness of our encoding.
To be sound, our encoding must ensure that if the verification of the encoded equivalence task succeeds, i.e., none of its executions violates an assertion, then the two subprogram \(S_1\) and \(S_2\) will be partially equivalent with respect to the unmodified variables and the variables~\(C\), which are checked for equivalence.
The following theorem ensures this property for equivalence tasks created with the inputs discussed above.
\begin{theorem}\label{theo:EquivTotal}
Let \(S_1\) and \(S_2\) be two (sub)programs, \(\mathcal{UB}(S_1)\subseteq U_1\subseteq \mathcal{V}(S_1)\), \(\mathcal{UB}(S_2)\subseteq U_2\subseteq \mathcal{V}(S_2)\), \(\mathcal{M}(S_1)\subseteq M_1\subseteq \mathcal{V}(S_1)\), \(\mathcal{M}(S_2)\subseteq M_2\subseteq \mathcal{V}(S_2)\), \(\rho_\mathrm{switch}\) a renaming function, and \(C\subseteq M_1\cup M_2\). Define the equivalence task to be \(S=eq\_task(S_1, S_2, \rho_\mathrm{switch}, (U_1\cap U_2)\cap(M_1\cup M_2) ,C)\).

If all execution \((S,\sigma)\rightarrow^*(S',\sigma')\in ex(S)\) do not violate an assertion, then \(S_1\equiv_{\mathcal{V}\setminus((\mathcal{M}(S_1)\cup\mathcal{M}(S_2)\setminus C)} S_2\).
\end{theorem}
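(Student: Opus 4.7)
My plan is to fix arbitrary $\sigma,\sigma',\sigma''\in\Sigma$ with $(S_1,\sigma)\rightarrow^*(E,\sigma')$ and $(S_2,\sigma)\rightarrow^*(E,\sigma'')$, pick an arbitrary $v\in\mathcal{V}\setminus((\mathcal{M}(S_1)\cup\mathcal{M}(S_2))\setminus C)$, and show $\sigma'(v)=\sigma''(v)$. The easy case is $v\notin C$: then $v\notin\mathcal{M}(S_1)\cup\mathcal{M}(S_2)$ as well, so the definition of $\mathcal{M}$ yields $\sigma'(v)=\sigma(v)=\sigma''(v)$ directly. For $v\in C$, my strategy is to exhibit one particular execution of $S$ that reaches the assertion $\rho_\mathrm{switch}(v)==v$ emitted by $equal$ in a state where the assertion's truth is literally $\sigma'(v)=\sigma''(v)$; the hypothesis that no execution of $S$ violates an assertion will then force that equality.

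Concretely, I build a witnessing initial state $\sigma_0$ by setting $\sigma_0(w)=\sigma(w)$ for every $w\in\mathcal{V}(S_1)\cup\mathcal{V}(S_2)$ and $\sigma_0(\mathrm{switch}(w))=\sigma(w)$ for every $w\in M_1\cup M_2$; these clauses are consistent because $\mathrm{switch}$'s image is disjoint from $\mathcal{V}(S_1)\cup\mathcal{V}(S_2)$. I then trace the four stages of $S=init;\mathcal{R}(S_1,\rho_\mathrm{switch});S_2;equal$ from $\sigma_0$ in order. In $init$ each assignment $w:=\rho_\mathrm{switch}(w)$ targets some $w\in I\subseteq M_1\cup M_2$ and reads the fresh variable $\mathrm{switch}(w)\notin I$, so successive assignments do not interfere, and the resulting $\sigma_1$ still satisfies $\sigma_1=_{|_{\mathcal{V}(S_1)\cup\mathcal{V}(S_2)}}\sigma$ together with $\sigma_1(\mathrm{switch}(w))=\sigma(w)$ for every $w\in M_1\cup M_2$. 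For $\mathcal{R}(S_1,\rho_\mathrm{switch})$, the renaming-consistency assumption $\sigma(expr)=\rho(\sigma)(\mathcal{R}(expr,\rho))$ yields a step-by-step simulation: an execution $(\mathcal{R}(S_1,\rho),\sigma_1)\rightarrow^*(E,\sigma_2)$ exists iff $(S_1,\rho^{-1}(\sigma_1))\rightarrow^*(E,\rho^{-1}(\sigma_2))$ does. A direct computation using the definition of $\rho_\mathrm{switch}$ then shows $\rho_\mathrm{switch}^{-1}(\sigma_1)=_{|_{\mathcal{V}(S_1)}}\sigma$, so the assumed $(S_1,\sigma)\rightarrow^*(E,\sigma')$ lifts to a termination of $\mathcal{R}(S_1,\rho_\mathrm{switch})$ from $\sigma_1$ that reaches some $\sigma_2$ with $\sigma_2(\mathrm{switch}(w))=\sigma'(w)$ for every $w\in\mathcal{V}(S_1)\cap(M_1\cup M_2)$.

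For the third stage, the appropriateness conditions on $\rho_\mathrm{switch}$ give $\mathcal{M}(\mathcal{R}(S_1,\rho_\mathrm{switch}))\subseteq\mathrm{img}(\mathrm{switch})$, which is disjoint from $\mathcal{V}(S_2)$, so $\sigma_2=_{|_{\mathcal{V}(S_2)}}\sigma_1=_{|_{\mathcal{V}(S_2)}}\sigma$; therefore the given $(S_2,\sigma)\rightarrow^*(E,\sigma'')$ lifts to a termination $(S_2,\sigma_2)\rightarrow^*(E,\sigma_3)$ with $\sigma_3(w)=\sigma''(w)$ for every $w\in\mathcal{V}(S_2)$. In the final $equal$ stage, to compute $\sigma_3(\mathrm{switch}(v))$ and $\sigma_3(v)$ I split on whether $v$ lies in $\mathcal{V}(S_1)$ and whether it lies in $\mathcal{V}(S_2)$, tracing each value back through the preceding three stages and using that $S_i$ cannot modify a variable outside $\mathcal{V}(S_i)$ (so $v\notin\mathcal{V}(S_1)$ gives $\sigma'(v)=\sigma(v)$ and $v\notin\mathcal{V}(S_2)$ gives $\sigma''(v)=\sigma(v)$) to fill in any missing side. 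Every sub-case collapses to $\sigma_3(\mathrm{switch}(v))=\sigma'(v)$ and $\sigma_3(v)=\sigma''(v)$, so the non-violation hypothesis forces $\sigma'(v)=\sigma''(v)$, as required.

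The principal obstacle I expect is the simulation lemma of the $\mathcal{R}(S_1,\rho_\mathrm{switch})$ stage and its attendant bookkeeping: promoting the pointwise renaming-consistency of expressions into a statement about complete executions under the inverse renaming of states, and verifying that the appropriateness of $\rho_\mathrm{switch}$ together with the disjointness $\mathrm{img}(\mathrm{switch})\cap(\mathcal{V}(S_1)\cup\mathcal{V}(S_2))=\emptyset$ really does block every read-write interference between $init$, $\mathcal{R}(S_1,\rho_\mathrm{switch})$, and $S_2$. Beyond that, the proof is just careful tracking of which variables are read or written at each stage, plus the standard fact that program behaviour depends only on variables appearing in the (sub)program.
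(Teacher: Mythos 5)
Your proposal is correct and follows essentially the same route as the paper's proof: both handle unmodified variables directly from the semantic definition of $\mathcal{M}$, and for $v\in C$ both construct a witnessing initial state (your $\sigma_0$ is the paper's $\sigma_r$) from which the generated task executes the renamed $S_1$ and then $S_2$ without interference and reaches the assertion $\rho_\mathrm{switch}(v)==v$ in a state evaluating exactly to $\sigma'(v)=\sigma''(v)$. The auxiliary facts you flag as the main obstacles are precisely the paper's Lemmas on renaming simulation, dependence of executions only on variables used before definition, and non-interference of the composed segments, so the plan is sound as stated.
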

\begin{proof} See~appendix~\ref{ssec:proof:theo:EquivTotal}.\end{proof}

So far, we learnt how to soundly apply \peqcheck to the complete program.
However, our goal is to split equivalence checking of two programs into equivalence checking of pairs of subprograms.

To split equivalence checking of programs~\(S\) and \(S'\), we assume that there exists a partial, injective \emph{replacement function} \(\gamma\) such that \(S'\) can be derived from \(S\) by replacing all subprograms~\(S_1\) of \(S\) with \(S_1\in dom(\gamma)\) by \(\gamma(S_1)\). 
We write \(\Gamma(S,\gamma)\) to denote the result of this replacement and make the following assumptions about the replacement: 
Programs \(E, [E \| \dots \| E]\notin dom(\gamma)\).
The domain \(dom(\gamma)\) only contains subprograms of \(S\) and all subprograms in the domain \(dom(\gamma)\) do not occur in a parallel statement of \(S\).
Similarly, we assume that all subprograms in the image \(im(\gamma)\) of \(\gamma\) do not occur in a parallel statement of \(S'\).
Thus, we e.g.\, ensure that thread interference cannot invalidate the result of \peqcheck's equivalence checking.
Note that such a replacement function always exists. One can always use \(\gamma=\{(S,S')\}\).

Given a replacement function~\(\gamma\) and the set~\(V\) of output variables, \peqcheck generates one equivalence task per pair \((S_1, S_2)\in\gamma\). 
Thereby, it utilizes overapproximations \(\mathcal{UB}(S_1)\subseteq U_1\subseteq \mathcal{V}(S_1)\) and \(\mathcal{UB}(S_2)\subseteq U_2\subseteq \mathcal{V}(S_2)\) of the variables used before definition,  overapproximations \(\mathcal{M}(S_1)\subseteq M_1\subseteq \mathcal{V}(S_1)\) and \(\mathcal{M}(S_2)\subseteq M_2\subseteq \mathcal{V}(S_2)\) of the modified variables, and overapproximations \(\mathcal{L}(S_1, S, V)\subseteq L_1\subseteq \mathcal{V}\) and \(\mathcal{L}(S_2, S', V)\subseteq L_2\subseteq \mathcal{V}\)  of the variables live after \(S_1\) and \(S_2\).
Based on these sets, \peqcheck builds the equivalence task 
{\footnotesize\(eq\_task(S_1, S_2, \rho_\mathrm{switch}, (U_1\cap U_2)\cap(M_1\cup M_2), (M_1\cup M_2)\cap (L_1\cup L_2))\).}

After the generation of the verification tasks, \peqcheck analyzes each verification task and returns that \(S\) and \(S'\) are equivalent if none of the tasks violates an assertion.
The subsequent theorem shows that this behavior of \peqcheck is sound when the variables used before definition are computed precisely.

\begin{theorem}\label{theo:EquivNoApprox}
Let \(S\) and \(S'\) be two programs, \(\gamma\) be a replacement function such that \(S'=\Gamma(S,\gamma)\), and  \(V\subseteq\mathcal{V}\) be a set of outputs.
If for all \((S_1, S_2)\in\gamma\) there exists \(\mathcal{M}(S_1)\subseteq M_1\subseteq \mathcal{V}(S_1)\), \(\mathcal{M}(S_2)\subseteq M_2\subseteq \mathcal{V}(S_2)\), \(\mathcal{L}(S_1, S, V)\subseteq L_1\subseteq \mathcal{V}\), \(\mathcal{L}(S_2, S', V)\subseteq L_2\subseteq \mathcal{V}\), and renaming function \(\rho_\mathrm{switch}\) such that the equivalence task \(eq\_task(S_1, S_2, \rho_\mathrm{switch}, (\mathcal{UB}(S_1)\cap\mathcal{UB}(S_2))\cap(M_1\cup M_2), (M_1\cup M_2)\cap (L_1\cup L_2))\) does not violate an assertion, then \(S\equiv_V S'\).
\end{theorem}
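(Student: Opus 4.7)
The plan is to lift the local equivalence guaranteed by Theorem~1 at each replacement point to a global equivalence of $S$ and $S'$ by means of a liveness-based induction on executions. First, for every $(S_1,S_2)\in\gamma$ the hypothesis lets us apply Theorem~1 with $U_1=\mathcal{UB}(S_1)$, $U_2=\mathcal{UB}(S_2)$ and the given $M_1,M_2$, yielding $S_1\equiv_{W_{12}}S_2$ where $W_{12}:=\mathcal{V}\setminus((\mathcal{M}(S_1)\cup\mathcal{M}(S_2))\setminus C_{12})$ and $C_{12}=(M_1\cup M_2)\cap(L_1\cup L_2)$. A brief set-theoretic check establishes that $L_1\cup L_2\subseteq W_{12}$: a variable in $L_1\cup L_2$ is either outside $\mathcal{M}(S_1)\cup\mathcal{M}(S_2)$ (and then trivially in $W_{12}$) or it also lies in $M_1\cup M_2$, placing it in $C_{12}$ and again in $W_{12}$. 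This is the pointwise local guarantee the proof will propagate.

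Next, I would fix an initial state $\sigma$ and two matched terminating executions $(S,\sigma)\to^*(E,\tau)$ and $(S',\sigma)\to^*(E,\tau')$. Aligning them so that steps outside $\gamma$ correspond pointwise and replacements correspond block-by-block is possible because the hypotheses forbid $\gamma$'s domain and image from sitting inside parallel constructs, so the outer control skeleton of $S$ and $S'$ is literally identical. I would then prove by induction on the joint execution length that at every matched pair of intermediate configurations $(T,\rho)$, $(T',\rho')$ the states coincide on the variables live at $T$ with respect to $V$. The base case $T=S$, $T'=S'$, $\rho=\rho'=\sigma$ is immediate, and a non-replacement step is handled by a standard live-variable lemma: the operation reads only currently live variables, hence computes identical values in both executions, and liveness after the step is obtained by the usual kill/gen update.

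The crux is the replacement step. At such a point $S$ is about to execute $S_1$ and $S'$ is about to execute $S_2=\gamma(S_1)$; the two incoming states $\rho$ and $\rho'$ agree on the live set, which includes $\mathcal{UB}(S_1)\cup\mathcal{UB}(S_2)$ because each variable there is read before any overwrite and therefore live at the entry. I would construct a common start state $\rho^{\star}$ that agrees with $\rho$ on $\mathcal{V}(S_1)$ and with $\rho'$ on $\mathcal{V}(S_2)$; consistency on the intersection follows because only $\mathcal{UB}$-variables there are relevant to either subprogram, and these are already synchronized. A semantic congruence lemma (executing a subprogram is insensitive to the initial value of any variable it does not read before writing) then relates the post-states of $S_1$ from $\rho$ and from $\rho^{\star}$, and similarly for $S_2$ from $\rho'$ and from $\rho^{\star}$. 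Invoking $S_1\equiv_{W_{12}}S_2$ on the common start $\rho^{\star}$ and chaining the congruences yields agreement of the true post-states on $L_1\cup L_2$, which covers the live set at the new configuration, preserving the invariant. When both executions terminate, the residual configuration is $(E,\cdot)$, whose live set is exactly $V$, so the invariant delivers $\tau=_{|_{V}}\tau'$, i.e.\ $S\equiv_V S'$.

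The hard part is the replacement step: the congruence lemma and the consistent construction of $\rho^{\star}$ must be stated and proved precisely, and the induction hypothesis must be formulated strongly enough to guarantee $\rho=_{|_{\mathcal{UB}(S_1)\cup\mathcal{UB}(S_2)}}\rho'$ on entry. This is exactly where the precision hypothesis of the theorem is used: with the exact sets $\mathcal{UB}(S_i)$ the variables read first by each subprogram are certified to be live at entry and thus already synchronized, whereas an overapproximation of $\mathcal{UB}$ would leave us unable to guarantee this agreement and the whole lifting argument would break down.
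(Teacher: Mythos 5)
Your proposal follows essentially the same route as the paper: an induction over the execution maintaining agreement on the variables live with respect to $V$, with non-replacement steps handled by a lock-step liveness argument and replacement steps handled by constructing a common start state, using $\mathcal{UB}$-insensitivity of executions to transfer to it, and applying the local guarantee there (the paper packages this replacement-step argument as a standalone relativized variant of Theorem~1, its Lemma~\ref{lem:equivCheck}, rather than invoking Theorem~1 as a black box, but the content is the same). You also correctly identify that the exactness of $\mathcal{UB}(S_1)\cap\mathcal{UB}(S_2)$ is what guarantees the equalized variables are live, hence already synchronized, at segment entry.
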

\begin{proof}See appendix~\ref{ssec:proof:theo:EquivNoApprox}.\end{proof}

Computing the precise set of variables used before definition might costly and or even impossible in practice.
Therefore, one typically computes overapprxomiations of these sets.
However, we learnt from our proof attempts that not all overapproximations are appropriate because modifications are defined semantically while live variables are defined syntactically.
The precise problem is that during initialization the verification task could equalizes a variable~$v$ whose value is not identical before, $v$ is assigned in code segment~\(S_2\), but the value of $v$ does not change (i.e., it is not modified in \(S_2\)), and $v$ becomes live in \(S_2\) afterwards.
In this particular case, the comparison in the equalization will consider the wrong value for the variable~$v$ in \(S_2\).
One can avoid this problem if the overapproximation of the modified variables for \(S_2\) always considers all assignments in \(S_2\).

\begin{theorem}\label{theo:EquivApprox}
Let \(S\) and \(S'\) be two programs, \(\gamma\) be a replacement function such that \(S'=\Gamma(S,\gamma)\), and  \(V\subseteq\mathcal{V}\) be a set of outputs.
If for all \((S_1, S_2)\in\gamma\) there exists overapproximations \(\mathcal{UB}(S_1)\subseteq U_1\subseteq \mathcal{V}(S_1)\), \(\mathcal{UB}(S_2)\subseteq U_2\subseteq \mathcal{V}(S_2)\),
 \(\mathcal{M}(S_1)\subseteq M_1\subseteq \mathcal{V}(S_1)\), 
\(\mathcal{M}(S_2)\cup\{v\in\mathcal{V}\mid\exists S_2\rightarrow^*S_k\stackrel{v:=aexpr}{\rightarrow}S_r\in syn_P(S_2)\}\subseteq M_2\subseteq \mathcal{V}(S_2)\), 
 \(\mathcal{L}(S_1, S, V)\subseteq L_1\subseteq \mathcal{V}\), \(\mathcal{L}(S_2, S', V)\subseteq L_2\subseteq \mathcal{V}\), and renaming \(\rho_\mathrm{switch}\) s.t.\ \mbox{\(eq\_task(S_1, S_2, \rho_\mathrm{switch}, (U_1\cap U_2)\cap(M_1\cup M_2), (M_1\cup M_2)\cap (L_1\cup L_2))\)} does not violate an assertion, then \(S\equiv_V S'\).

\end{theorem}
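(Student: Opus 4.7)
My plan is to use Theorem~1 locally and then lift the per-pair partial equivalences inductively along an aligned execution of \(S\) and \(S'\). For each pair \((S_1,S_2)\in\gamma\), the given \(U_i\), \(M_i\), \(L_i\) and \(\rho_\mathrm{switch}\) satisfy the hypotheses of Theorem~1 (which already tolerates overapproximations of \(\mathcal{UB}\)), so the non-failure of the generated equivalence task yields \(S_1\equiv_{X(S_1,S_2)} S_2\) with \(X(S_1,S_2)=\mathcal{V}\setminus((\mathcal{M}(S_1)\cup\mathcal{M}(S_2))\setminus C)\) and \(C=(M_1\cup M_2)\cap(L_1\cup L_2)\). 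A routine inclusion check, using \(\mathcal{M}(S_i)\subseteq M_i\) and \(\mathcal{L}(S_i,\cdot,V)\subseteq L_i\), shows \(\mathcal{L}(S_1,S,V)\cup\mathcal{L}(S_2,S',V)\subseteq X(S_1,S_2)\), so the local partial equivalences agree on exactly the variables that matter for any continuation.

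For the lifting I would run \(S\) and \(S'\) in lock-step from a common initial state, using \(\gamma\) to identify crossing points at which \(S\) enters some \(S_1\in dom(\gamma)\) and \(S'\) enters \(\gamma(S_1)=S_2\). By induction on these crossings, paired with the standard liveness fact that two states agreeing on the live variables at a program point produce the same values of \(V\) after any continuation, the post-subprogram states obtained from the local equivalences can be carried through to show that the two executions end in states equal on \(V\). The assumption that no subprogram in \(dom(\gamma)\) or \(im(\gamma)\) sits inside a parallel composite is what keeps the crossings well-defined and rules out thread interference invalidating the local reasoning.

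The hard part---and the reason for the strengthened hypothesis on \(M_2\)---is making sure Theorem~1's hypothesis really is discharged once \(U_1, U_2\) are strict overapproximations. As the paragraph preceding the theorem describes, a variable \(v\in(U_1\cap U_2)\setminus(\mathcal{UB}(S_1)\cap\mathcal{UB}(S_2))\) that is syntactically assigned inside \(S_2\) but lies outside \(\mathcal{M}(S_2)\), with \(v\) live afterward, is the critical case: without the enhanced condition \(v\) can escape \(M_1\cup M_2\), causing \(\rho_\mathrm{switch}\) to leave \(v\) shared between \(\mathcal{R}(S_1,\rho_\mathrm{switch})\) and \(S_2\). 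The task's extra initialization \(v:=\rho_\mathrm{switch}(v)\) is then observed by \(S_2\), which writes \(v\) without modifying it, and the assertion \(\rho_\mathrm{switch}(v)==v\) is evaluated on an artefact of the task rather than on \(S_2\)'s actual output, so Theorem~1's conclusion no longer captures equivalence on \(v\). The hypothesis \(\mathcal{M}(S_2)\cup\{v\in\mathcal{V}\mid\exists S_2\rightarrow^*S_k\xrightarrow{v:=aexpr}S_r\in syn_P(S_2)\}\subseteq M_2\) rules this case out by forcing every syntactic assignee of \(S_2\) into \(M_1\cup M_2\): \(\rho_\mathrm{switch}\) now duplicates \(v\), decouples the \(v\) read by \(\mathcal{R}(S_1,\rho_\mathrm{switch})\) from the \(v\) written by \(S_2\), and restores the soundness of Theorem~1's local assertion. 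With this obstacle resolved, applying Theorem~1 pair by pair and then performing the lifting yields \(S\equiv_V S'\).
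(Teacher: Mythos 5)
Your overall architecture (a local per-pair argument followed by an induction over the ``crossing'' points where the two executions enter $S_1$ and $\gamma(S_1)$) matches the paper's proof, and your diagnosis of why the strengthened hypothesis on $M_2$ is needed is essentially the paper's own motivation. However, there is a genuine gap in the local step. Theorem~1 concludes $S_1\equiv_{X} S_2$, and partial equivalence by definition only constrains pairs of executions of $S_1$ and $S_2$ that start from the \emph{same} data state $\sigma$. At a crossing point in your lifting, the two executions have already diverged: the state reached by $S$ before $S_1$ and the state reached by $S'$ before $S_2$ agree only on the live variables (and on the variables used before definition), not on all of $\mathcal{V}$ --- they differ on dead variables and on variables touched differently by earlier replaced segments. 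So $S_1\equiv_X S_2$ simply does not apply to the pair of runs you need to compare, and the ``standard liveness fact'' cannot bridge this, because it transfers agreement on live variables through \emph{identical} continuations, whereas here the continuations are the two \emph{different} segments $S_1$ and $S_2$.

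The paper closes this gap not by citing Theorem~1 but by proving a dedicated relativized lemma (Lemma~\ref{lem:equivCheck3}): if the pre-states agree on $\mathcal{L}(S'_i,V)$ and the task does not violate an assertion, then the post-states agree on $\mathcal{L}(S'_j,V)$. Its proof re-runs the entire Theorem-1-style construction under the weaker hypothesis --- it builds a combined state $\sigma_r$ from the two distinct pre-states, and then chains Lemmas~\ref{lem:ubBehave}, \ref{lem:biRen}, \ref{lem:noInf}, \ref{lem:init}, and \ref{lem:equiv} --- and it is precisely inside this argument that the extra syntactic-assignee condition on $M_2$ is consumed (to handle output variables $u\in\mathcal{L}(S'_j,V)\setminus\mathcal{L}(S'_i,V)$, which are assigned but not semantically modified in $S_2$ and would otherwise escape the checked set $C$). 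To repair your proposal you would need to either prove this relativized local lemma yourself or show how to reduce the two-distinct-pre-states situation to Theorem~1's single-pre-state setting (e.g., by constructing a common witness state and invoking the $\mathcal{UB}$-insensitivity of executions); as written, ``apply Theorem~1 pair by pair'' does not discharge the induction step.
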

\begin{proof}See~appendix~\ref{ssec:proof:theo:EquivApprox}.\end{proof}

\begin{figure}[t]%
{\scriptsize
\begin{minipage}{0.48\textwidth}
\begin{lstlisting}[ caption=Original program]
int foo_orig(int x, int y)
{
\end{lstlisting}
\vspace{-\baselineskip}
\begin{lstlisting}[backgroundcolor=\color{cyan!20}]
  if(x<1)
    y=0;
\end{lstlisting}
\vspace{-\baselineskip}
\begin{lstlisting}
  return y;
}
\end{lstlisting}
\vspace{-0.5\baselineskip}
\begin{lstlisting}[caption=Modified program]
int foo_mod(int x, int y)
{
\end{lstlisting}
\vspace{-\baselineskip}
\begin{lstlisting}[backgroundcolor=\color{yellow!40}]
  if(!(x>0))
    y=0;
\end{lstlisting}
\vspace{-\baselineskip}
\begin{lstlisting}
  return y;
}
\end{lstlisting}
\end{minipage}
\hfill
\begin{minipage}{0.48\textwidth}
\begin{lstlisting}[caption=Verification task]
int main()
{
 int x, y_s, y;
 x = random_int();
\end{lstlisting}
\vspace{-0.5\baselineskip}
\begin{lstlisting}[backgroundcolor=\color{cyan!20}]	
  if(x<0)
    y_s=1;
\end{lstlisting}
\vspace{-1\baselineskip}
\begin{lstlisting}[backgroundcolor=\color{yellow!40}]
  if(!(x>0))
    y=1;
\end{lstlisting}
\vspace{-0.5\baselineskip}
\begin{lstlisting}[language=C]	
 assert(y_s == y);
 return 0;
}
\end{lstlisting}
\end{minipage}
}
\caption{Behaviorally equivalent original and modified subprograms for which our \peqcheck approach fails to show equivalence}%
\label{fig:probEnc}%
\end{figure}
\subsection{Discussion} 
As has been shown above, our \peqcheck approach is sound, i.e., it never approves two inequivalent programs. 
However, it cannot be complete because functional equivalence of two programs is undecidable~\cite{RV-DAC}.
Thus, our \peqcheck approach may not testify all equivalent programs and there exist equivalent code segments for which the generated verification task violates an assertion.
An example is shown in Fig.~\ref{fig:probEnc}.
For this example, the generated verification task violates the assertion because variable~\(y\) is live, is not used before definition, but is not defined on all program paths.
Hence, variable~\(y\) is duplicated, but not equally initialized and when following the else branch the assertion could be violated.
One could avoid this issue by also equally initializing all variables that are only modified on some paths.
Another completeness issue is a code segment that may violate an assertion.
To deal with assertions \(\mathbf{assert}~bexpr;\) in code segments, the task encoding can replace them by \(\mathbf{while}~bexpr~\mathbf{do}~E\).
Another reason why the equivalence is not detected is that an equivalence task considers more input values to a code segment than can be reached by all program executions.
To improve on this problem, one could aim at computing (an overapproximation of) the input ranges for code segments and restrict the initialization with to the computed input ranges.
However, it is unlikely that one succeeds to always compute the precise range of input values.

\begin{figure}[t]%
\begin{minipage}{0.48\textwidth}
\begin{lstlisting}[ caption=Sequential program]
int sum_seq(int N)
{
  int sum = 0, a[N];
\end{lstlisting}
\vspace{-\baselineskip}
\begin{lstlisting}[backgroundcolor=\color{orange!20}]
  for(int i = 0; i < N; i++)
    a[i] = i;
\end{lstlisting}
\vspace{-\baselineskip}		
\begin{lstlisting}[backgroundcolor=\color{green!20}]		
  for(int i = 0; i < N; i++)
    sum += a[i];
\end{lstlisting}
\vspace{-\baselineskip}
\begin{lstlisting}
  return sum;
}
\end{lstlisting}
\end{minipage}
\hfill
\begin{minipage}{0.48\textwidth}
\begin{lstlisting}[caption=Parallelized program]
int sum_par(int N)
{
  int sum = 0, a[N];
\end{lstlisting}
\vspace{-\baselineskip}
\begin{lstlisting}[backgroundcolor=\color{orange!20}, basicstyle=\tiny]
#pragma omp parallel for
\end{lstlisting}
\vspace{-1.2\baselineskip}
\begin{lstlisting}[backgroundcolor=\color{orange!20}]
  for(int j = N-1; j>=0; j--)
    a[j] = j;
\end{lstlisting}
\vspace{-\baselineskip}	
\begin{lstlisting}[backgroundcolor=\color{green!20}, basicstyle=\tiny]
#pragma omp parallel for reduction(+:sum)
\end{lstlisting}
\vspace{-1.2\baselineskip}
\begin{lstlisting}[backgroundcolor=\color{green!20}]		
  for(int j = N-1; j>=0; j--)
    sum += a[j];
\end{lstlisting}
\vspace{-\baselineskip}
\begin{lstlisting}
  return sum;
}
\end{lstlisting}
\end{minipage}
\caption{Behaviorally equivalent sequential and parallelized program whose first code segments (highlighted in orange) are not identical}%
\label{fig:probSegments}%
\end{figure}
A further aspect is the choice of code segments.
Structurally, code segments must be subprograms and they must not occur in a parallel statement, which limits the granularity of code segments, but not the applicability of the approach.
However when choosing the wrong code segments, one may miss equivalent programs.
For example, consider the sequential and the parallelized program shown in Fig.~\ref{fig:probSegments} whose for loops are not identical.
If we use two code segments, one per for loop, then equivalence checking fails.
In contrast, it succeeds if we choose the code segment to contain both for loops.
\section{\peqcheck Implementation}
To check functional equivalence with \peqcheck, one must (1)~identify the code segments, (2)~generate the verification tasks for the code segments, and (3)~verify the tasks.
Currently, we perform steps~(2) and (3) automatically and execute step~(1) manually, i.e., we manually insert pragma statements \texttt{\#pragma scope\_i} and \texttt{\#pragma epocs\_i} to specify the start and end of code segment \(i\).

\textbf{Step~1: Identifying code segments.}
When checking the equivalence of a sequential program and its OpenMP parallelization, manually identifying the code segments is simple. 
Using the code blocks associates with the outermost OpenMP directives often works well and we applied this strategy to determine the code segments for our parallelized examples.
When checking two sequential versions, selecting adequate code segments is more challenging. 
Naively using each statement that differs in both versions as a single code segment likely results in many inequivalent code segments, even if the versions are equivalent.
Also, using functions as code segments might be a bad choice as we will see in our experiments, especially if the functions contain multiple independent changes.
Generally, a developer should have deeper insights in which code parts to select.
Therefore, it might be a good idea to combine all changes of a function that belong to the same commit.

\textbf{Step~2: Generation of verification tasks.}
We developed a proto\-type tool that implements the approach from Section~\ref{sec:encoding}.
Our prototype tool is part of the framework for equivalence checking of parallelized code (FECheck)\footnote{\url{https://git.rwth-aachen.de/svpsys-sw/FECheck}} and in our experiments we use tag version \texttt{PEQcheck-Formalise2021}.
The tool builds on the ROSE compiler framework~\cite{ROSE} (v0.9.13.0) and it uses ROSE's live analysis to identify the variables live afterwards.
In addition, it executes ROSE's reaching definition analysis to determine which variables are modified and which are used before definition.
Both analyses are intra\-procedural.
Thus, we overapproximate the behavior of global variables and parameters passed.
For example, we assume that global variables and non-scalar parameters are always live and that a called function always use all global variables and non-scalar parameters before they are defined and also modifies them.
   
\textbf{Step~3: Verification.}
To verify the generated tasks, we utilize the verifiers CIVL~\cite{CIVL} (version 1.20\_5259 with theorem prover Z3~\cite{Z3} (version~4.8.10)) and \cpachecker\footnote{\url{https://cpachecker.sosy-lab.org/download.php}}~\cite{CPACHECKER} (version~2.0).
CIVL is developed to verify parallelized programs like OpenMP programs while \cpachecker is a successful verifier for sequential programs.
To verify a task~\texttt{task.c} with OpenMP constructs, we execute CIVL with the following command.
\begin{quote}
\noindent\texttt{civl verify -input\_omp\_thread\_max=2 -checkDivisionByZero=false -checkMemoryLeak=false -timeout=300 nondet\_funs.c task.c}
\end{quote}
The command limits CIVL's verification to 5\,min and two threads. 
File~\texttt{nondet\_funs.c} implements the random input functions, which return elements from [-5;5].
To verify sequential tasks~\texttt{task.c}, we rely on \cpachecker's default analysis and execute the following command line.
\begin{quote}
\texttt{scripts/cpa.sh -default -noout -timelimit 300s -preprocess -spec config/specification/Assertion.spc task.c}
\end{quote}
In our experiments, we utilize additional scripts to automatically perform steps~(2) and (3) on our examples.
\section{Experiments}
In our experiments, we plan to demonstrate \peqcheck's generality and to examine whether localized equivalence checking is beneficial.
To demonstrate \peqcheck's generality, we apply it to different benchmark sets: one for parallelized programs and one for different versions of sequential programs.
Unfortunately, we could not compare \peqcheck with existing approaches that use a similar encoding idea because these approaches are not available~\cite{UCKLEE,DBLP:journals/ijpp/AbadiKPV19} (for C~programs~\cite{SymDiff,RIE}) or the tool compilation failed~\cite{RV-DAC,RV-Journal}.

\textbf{Environmental Set Up.}
Our experiments are executed on a machine with an Intel i7-8565U CPU (frequency of 1.8\,GHz) and 32\,GB RAM, which runs an Ubuntu~18.04.
Furthermore, we count the lines of codes with the tool~\texttt{cloc}~v1.74\footnote{\url{https://github.com/AlDanial/cloc}}.

\textbf{Benchmark.}
We consider two sets of benchmarks.
Our first benchmark set aims at checking equivalence of sequential and parallelized programs.
It contains four own examples and our parallelizations of the \texttt{*\_spec.c} files  from the functional equivalence suite (FEVS)~\cite{FEVS}.
Note that we did not parallelize programs \texttt{diffusion1d-gd}, \texttt{diffusion2d-gd}, and \texttt{nbody} because their header files are missing.
Furthermore, we failed to parallelize \texttt{fib}.
In addition, we parallelized the iterative instead of the recursive factorial implementation.
To deal with I/O inputs, we replaced them by calls to random functions and we also replaced the assert statements.
As described earlier, the local code segments are the parallel code segmenets. 
Our second benchmark set focuses on checking equivalence of two sequential program versions.
It contains the non-recursive programs considered by R\^{e}ve~\cite{Reve} (except for \texttt{loop4} and \texttt{loop5}, which were not available).
The local code segments are the smallest subprogram that is influenced by a change.
To examine whether localized equivalence checking is bene\-ficial, we use another set of code segments, named~$\mathrm{all}$, which contains one code segment per program that covers the complete program.

\subsection{\peqcheck on Parallelized Programs}
Table~\ref{tab:OpenMPStudy} shows the results of our \peqcheck evaluation on the first benchmark set, the benchmark tasks with the parallelized programs. 
The first four tasks are our own examples (\texttt{ex} is the example from Fig.~\ref{fig:example}) and the remaining tasks represent the FEVS examples. 
Benchmark tasks that end on \texttt{-e} and are highlighted in light red represent incorrect parallelizations.
For each benchmark task, Tab.~\ref{tab:OpenMPStudy} shows the number of local code segments\footnote{By construction, the set~$\mathrm{all}$ contains one segment per task.}, the lines of code of the sequential program, the parallelized program  and the verification tasks (for both configurations of code segments).
If the local configuration~seg contains more than one verification task, the table reports the maximal number of lines of code among all tasks.
In addition, the table shows the total time spent on generating the verification tasks plus the total time spent on verification and the verification results.
Again, the times are provided for  both configurations of code segments.

\setlength{\tabcolsep}{5pt}
\begin{table}[t]
  \caption{Evaluation results of \peqcheck on a sequential program and its parallelization (taken from \cite{PEQcheck})}
	\label{tab:OpenMPStudy}
	\centering
	\scalebox{0.75}{
		\begin{tabular}{l d{0} d{0} d{0} d{0} d{0} d{0} d{0} d{0} d{0} c c}
		\toprule
		 &  & \multicolumn{4}{c}{LOC}  & \multicolumn{2}{c}{time$_\mathrm{enc}$ (s)} & \multicolumn{2}{c}{time$_\mathrm{CIVL}$ (s)}& \multicolumn{2}{c}{status}\\
    \cmidrule(l{5pt}r{5pt}){3-6}\cmidrule(l{5pt}r{5pt}){7-8}\cmidrule(l{5pt}r{5pt}){9-10}\cmidrule(l{5pt}r{5pt}){11-12}
		Benchmark  & \multicolumn{1}{c}{\#seg-} & \multicolumn{1}{c}{$P_\mathrm{seq}$} & \multicolumn{1}{c}{$P_\mathrm{par}$} & \multicolumn{1}{c}{$P_\mathrm{seg}$} & \multicolumn{1}{c}{$P_\mathrm{all}$} & \multicolumn{1}{c}{$t_\mathrm{seg}^\mathrm{E}$}  & \multicolumn{1}{c}{$t_\mathrm{all}^\mathrm{E}$} & \multicolumn{1}{c}{$t_\mathrm{seg}^V$} & \multicolumn{1}{c}{$t_\mathrm{all}^V$} & \multicolumn{1}{c}{$s_\mathrm{seg}$} & \multicolumn{1}{c}{$s_\mathrm{all}$}\\
	tasks  & \multicolumn{1}{c}{ments}  &   &   & (max) &  &   & &  &  & & \\

		\midrule
adder-s2			 & 1 & 10 & 11 & 21 & 44 & 5 & 6 & 5 & 4 & \checkmark & NA\\ 
\rowcolor{red!10}adder-s-e			 & 1 & 10 & 11 & 21 & 44 & 4 & 6 & 4 & 4 & $\times$ & NA\\ 
adder-s				 & 1 & 10 & 11 & 21 & 44 & 4 & 5 & 6 & 4 & \checkmark & NA\\ 
ex             & 1 & 12 & 13 & 25 & 25 & 5 & 5 & 5 & 4 & \checkmark & \checkmark\\ 
\midrule
adder2-nd			 & 1 & 15 & 16 & 30 & 54 & 6 & 6 & 78 & 4 & \checkmark & NA\\ 
adder2				 & 2 & 17 & 20 & 28 & 28 & 8 & 6 & 607 & 304 & TO & TO\\
\rowcolor{red!10}adder-e				 & 2 & 17 & 19 & 26 & 26 & 8 & 5 & 608 & 305 & TO & TO\\ 
\rowcolor{red!10}adder-nd-e		 & 1 & 15 & 16 & 30 & 54 & 6 & 5 & 18 & 5 & $\times$ & NA\\ 
adder-nd			 & 1 & 15 & 16 & 31 & 54 & 6 & 6 & 73 & 5 & \checkmark & NA\\
adder					 & 2 & 17 & 19 & 26 & 26 & 8 & 5 & 608 & 304 & TO & TO \\ 
diffusion1d-nd & 3 & 44 & 51 & 61 & 197 & 12 & 6 & 49 & 4 & NA/$\times_\mathrm{mem}$/$\times$& EX\\ 
diffusion1d		 & 2 & 43 & 45 & 42 & 153 &  8 & 6 & 11 & 42 & $\times_\mathrm{mem}$ & $\times$\\ 
diffusion2d-nd & 2 & 50 & 59 & 91 & 237 & 8 & 6 & 11 & 5 & NA &  NA \\ 
diffusion2d		 & 2 & 63 & 67 & 106 & 242 & 8 & 6 & 18 & 305 & $\times_\mathrm{mem}$ & TO\\ 
factorial2		 & 1 & 11 & 15 & 28 & 26 & 6 & 6 & 7 & 6 & \checkmark & \checkmark\\ 
\rowcolor{red!10}factorial-e		 & 1 & 11 & 14 & 27 & 25 & 6 & 6 & 5 & 6 & $\times$ & $\times$\\ 
factorial			 & 1 & 11 & 12 & 25 & 23 & 6 & 6 & 7 & 6 & \checkmark & \checkmark\\ 
\rowcolor{red!10}gausselim-e		 & 2 & 98 & 108 & 64 & 228 & 8 & 6 & 19 & 6 & $\times_\mathrm{mem}$ & NA\\
gausselim			 & 3 & 100 & 111 & 64 & 229 & 12 & 6 & 25 & 6 & $\times_\mathrm{mem}$ & NA\\
integrate			 & 1 & 59 & 60 & 150 & 163 & 7 & 7 & 1 & 1 & EX & EX\\ 
laplace				 & 3 & 48 & 55 & 71 & 148 & 12 & 6 & 11 & 4 & EX/$\times_\mathrm{mem}$& EX \\ 
matmat				 & 1 & 33 & 37 & 78 & 133 & 6 & 6 & 304 & 5 & TO & NA\\ 
\rowcolor{red!10}mean-e				 & 1 & 17 & 18 & 31 & 54 & 5 & 6 & 8 & 4 & $\times$ & NA\\ 
mean					 & 1 & 17 & 18 & 31 & 54 & 6 &  6 & 79 & 5 & \checkmark & NA\\ 
wave1d-nd			 & 2 & 99 & 101 & 61 & 315 & 9 & 7 & 13 & 7 & $\times_\mathrm{mem}$/NA & NA\\ 
wave1d				 & 2 & 87 & 89 & 136 & 295 & 9 & 6 & 17 & 306 & $\times_\mathrm{mem}$ & TO\\ 
\bottomrule		
		\end{tabular}
		}
\end{table}

First, we study \peqcheck's results for the local code segments (configuration~seg).
Looking at the lines of code~(LOC), we observe that an encoded verification task is often larger than the sequential program and than the parallelized program.
One explanation is that the tasks contain the code of the sequential and the parallelized code segment and in our examples the code segments often dominate the program code.
In addition, the verification tasks contain code that initializes input variables and code that checks equivalence of output variables, which is not present in the sequential and parallelized program. 
Inspecting  the time for generating all verification tasks of a benchmark task (column~$t_\mathrm{seg}^\mathrm{E}$), we recognize that task generation only tasks a few seconds.
Thus, it is rather fast.
However, the generation time may slow down if the input programs get significantly larger.
Now, let us look at the verification of the generated equivalence tasks (columns~$t_\mathrm{seg}^V$ and $s_\mathrm{seg}$). 
For one third of the benchmark tasks, CIVL fails with an exception (EX), a time out (TO), or no available result (NA)\footnote{A result is not available if CIVL returns result null.}.
In addition, the verification of 8 of 26~benchmark task results in status~$\times_\mathrm{mem}$, which means that CIVL detects a memory violation, either an out of bounds access or an invalid dereference.
These memory violations exist because the size of pointer-based arrays assumed by the generated tasks and the program mismatch.  
Note that this is not a general problem of the \peqcheck approach because program executions with memory violations do not terminate normally.
Thus these executions are not considered for partial equivalence.
Furthermore, the problems is a C specific issue of our \peqcheck implementation.
To fix the issue, we must integrate an additional program analysis that aims to find out which variable stores the size of a particular pointer-based array. 
Then, our encoding  must guarantee  that the variable in which the encoding stores the size of the pointer-based array and the program variable storing the size of that pointer-based array contain the same value.
Next, we look at the 12~benchmark tasks with status~\checkmark and $\times$. 
We observe that (a)~equivalence~(\checkmark) is only reported for equivalent tasks and (b)~inequivalence~($\times$) is detected for 4 of the 6~inequivalent tasks. %
Thus, localized equivalence checking with \peqcheck can correctly detect (in)equivalence.

%

Finally, we compare \peqcheck with localized equivalence checking (configuration~seg) against all at once checking (configuration~all).
First, we observe that the times for the generation of the verification tasks is similar.
Nevertheless, the tasks for configuration~all are typically larger than the local tasks (73\% of the tasks are larger and only 3 are smaller).
Thus, the size of the generated tasks is a first indication that localized checking reduces complexity.
Another indicator is the verification itself.
For configuration~all, the verification fails in 80\% of the benchmark tasks, while for configuration~seg the verification fails for about one third of the cases and reports a memory violation in another 20\% of the benchmark tasks.
One reason for more failures are the encoding of (random) input functions. 
The encoding uses static local variables to ensure that the sequential and parallelized code get the same identical value for their ith call to a random function and
CIVL seems to have problems with static local variables.
Furthermore, configuration~seg also performs better in terms of correct results.
Configuration~seg determines the correct result for 12 of the 26~tasks, while configuration~all reports the correct result in 4 cases and these cases are also correctly handled by configuration~seg.
In addition, we notice little difference for tasks for which both configurations reported either \checkmark or $\times$. 
In summary, localized equivalence checking seems to be beneficial.

\begin{table}[t]
  \caption{Evaluation results of \peqcheck on pairs of sequential programs \\(taken from \cite{PEQcheck})}
	\label{tab:SeqStudy}
	\centering
	\scalebox{0.75}{
			\begin{tabular}{l d{0} d{0} d{0} d{0} d{0} d{0} d{0} d{0} d{0} c c}
		\toprule
		 &  & \multicolumn{4}{c}{LOC}  & \multicolumn{2}{c}{time$_\mathrm{enc}$ (s)} & \multicolumn{2}{c}{time$_\mathrm{CPA\checkmark}$ (s)}& \multicolumn{2}{c}{status}\\
    \cmidrule(l{5pt}r{5pt}){3-6}\cmidrule(l{5pt}r{5pt}){7-8}\cmidrule(l{5pt}r{5pt}){9-10}\cmidrule(l{5pt}r{5pt}){11-12}
		Benchmark  & \multicolumn{1}{c}{\#seg-} & \multicolumn{1}{c}{$P_\mathrm{orig}$} & \multicolumn{1}{c}{$P_\mathrm{mod}$} & \multicolumn{1}{c}{$P_\mathrm{seg}$} & \multicolumn{1}{c}{$P_\mathrm{all}$} & \multicolumn{1}{c}{$t_\mathrm{seg}^\mathrm{E}$}  & \multicolumn{1}{c}{$t_\mathrm{all}^\mathrm{E}$} & \multicolumn{1}{c}{$t_\mathrm{seg}^V$} & \multicolumn{1}{c}{$t_\mathrm{all}^V$} & \multicolumn{1}{c}{$s_\mathrm{seg}$} & \multicolumn{1}{c}{$s_\mathrm{all}$}\\
	 tasks &   \multicolumn{1}{c}{ments} &   &   & (max) &  &   & &  &  & & \\
\midrule
barthe & 1 & 16 & 16 & 39 & 35 & 5 & 5 & 5 & 303 & $\times$ & TO\\
\rowcolor{red!10}barthe-e & 1 & 19 & 22 & 44 & 44 & 5 & 5 & 5 & 302 & $\times$ & TO \\
barthe2 & 1 & 14 & 14 & 31 & 29 & 5 & 5 & 305 & 305 & TO & TO\\
barthe2-big & 1 & 19 & 19 & 33 & 39 & 5 & 5 & 303 & 304 & TO & TO\\
barthe2-big2 & 1 & 24 & 24 & 33 & 49 & 6 & 5 & 306 & 305 & TO & TO \\
bug15 & 2 & 13 & 13 & 24 & 25 & 8 & 5 & 55 & 5 & \checkmark & \checkmark\\
digits10 & 1 & 29 & 32 & 75 & 75 & 5 & 5 & 32 & 30 & \checkmark & \checkmark \\ 
\rowcolor{red!10}digits10-e & 1 & 26 & 29 & 62 & 62 & 6 & 6 & 4 & 4 & $\times$ & $\times$\\
loop & 1 & 11 & 11 & 24 & 24 & 5 & 5 & 305 & 305 & TO & TO\\
loop2 & 1 & 11 & 11 & 26 & 24 & 5 & 5 & 306 & 304 & TO &  TO\\
loop3 & 1 & 14 & 14 & 26 & 32 & 5 & 5 & 4 & 305 & $\times$ & TO\\
\rowcolor{red!10}loop5-e & 1 & 14 & 14 & 29 & 29 & 5 & 5 & 6 & 5 & $\times$ & $\times$\\
nested-while & 2 & 21 & 19 & 26 & 43 & 8 & 5 & 7 & 306 & \checkmark & TO\\
\rowcolor{red!10}nested-while-e & 2 & 20 & 18 & 26 & 41 & 8 & 5 & 8 & 7 & $\times$ & $\times$\\
simple-loop & 1 & 9 & 9 & 17 & 17 & 5 & 5 & 5 & 5 & \checkmark & \checkmark\\
\rowcolor{red!10}simple-loop-e & 1 & 10 & 10 & 18 & 18 & 5 & 5 & 4 & 4 & $\times$ & $\times$ \\
while-if  & 1 & 20 & 20 & 55 & 53 & 5 & 6 & 4 & 4 & \checkmark & \checkmark \\
\rowcolor{red!10}while-if-e & 1 & 17 & 17 & 42 & 40 & 5 & 5 & 5 & 5 & $\times$ & $\times$ \\
\bottomrule		
		\end{tabular}
		}
\end{table}

\subsection{\peqcheck on Sequential Versions}
Our second experiment, which uses the second benchmark set, demonstrates that the \peqcheck approach is not restricted to parallelization.
Table~\ref{tab:SeqStudy} shows our evaluation results for the pairs of sequential programs from our second benchmark set. 
The structure of Tab.~\ref{tab:SeqStudy} is similar to Tab.~\ref{tab:OpenMPStudy}.

Again, we first look at the results for \peqcheck with local code segments (configuration seg).
Studying Tab.~\ref{tab:SeqStudy}, we recognize that the sequential verification tasks are always larger than the two input programs and the generation of the verification task takes a few seconds, i.e., it is fast.
Looking at the verification (columns~$t_\mathrm{seg}^V$ and $s_\mathrm{seg}$), we observe that the verifier \cpachecker times out (status~TO) for 5 of 18~tasks.
In addition, we notice that the verifier~\cpachecker reports an incorrect status for the tasks \texttt{barthe} and \texttt{loop3}.
Our inspection reveals that \cpachecker correctly detects the inequivalence of the generated tasks, but the two programs execute the two segments with a restricted set of inputs and are therefore equivalent.
Although \peqcheck incorrectly detects equivalence in two cases, it correctly reports equivalence (status \checkmark) for 5 of the 12~equivalent tasks.
Moreover, \peqcheck correctly reports status~$\times$, i.e., inequivalence, for all pairs of programs that are inequivalent (i.e., benchmark task with suffix \texttt{-e}, which are highlighted in light red).
Thus, localized equivalence checking with \peqcheck can also correctly detect (in)equivalence of sequential programs.

Next, we compare \peqcheck with localized equivalence checking (configuration~seg) against all at once checking (configuration~all).
In the second set of benchmarks, which we currently study, configuration~all is identical to checking equivalence of functions, which is also typically done by the approach in the related work.
First, let us look at the generated verification tasks.
Again, the generation times are similar.
Furthermore, their sizes do not differ significantly because also the localized segments contain most of the functions' code and often only leave out the declaration and initialization of variables. 
Looking at the status columns, we observe that either the status is the same or configuration~all times out while configuration~seg returns a result (either status $\times$ or \checkmark).
If both configurations do not time out, the verification times are similar.
The only exception is \texttt{bug15}, for which the verification in configuration~all profits from restricted input values.
We conclude that also for our sequential examples, localized equivalence checking is beneficial.
\section{Related Work} 
Functional equivalence checking is a particular instance of relational program verification~\cite{DBLP:conf/popl/Benton04,DBLP:journals/tcs/Yang07}. 
To verify relational properties between two programs, Barthe et al.\ proposes to construct and verify product programs~\cite{ProductPrograms}. 
A product program merges the two input programs such that synchronous steps are executed in lockstep.
Thus, product programs integrate the two programs tighter than a sequential composition. 

Nevertheless, many approaches are tailored to check function equivalence. 
There exist model-based approaches~\cite{PromelaFEQ,TASS,PRESGen,DBLP:conf/cav/VerdoolaegeJB09,DBLP:conf/date/ShashidharBCJ05} that translate the two programs, which should be proven equivalent, into models and inspect model equivalence.
Simulation-based approaches, e.g., \cite{CoVaC,DBLP:conf/oopsla/0001SCA13,DBLP:conf/aplas/DahiyaB17,DBLP:conf/pldi/ChurchillP0A19}, try to establish a (bi)\-simu\-lation relation between the two programs. 
Pathg~\cite{OpenMPConsistency} is an eqivalence checker for OpenMP program, which checks that the OpenMP program is equivalent with its sequential version (the program without the OpenMP directives). 
Pathg assumes that only race conditions may cause inequivalence and, hence, applies symbolic simulation on segments with race conditions to inspect whether the races affect the output.
Fractal symbolic analysis~\cite{FractalSymboliAna} transforms the two programs, which should be shown to be equivalent, into two simpler programs.
The transformation ensures that equivalence of the simplified programs implies the equivalence of the original programs.
After the transformation, the guarded symbolic expressions (descriptions of the effect of a program on a variable) of all variables that are modified and live are compared.
R\^{e}ve~\cite{Reve} translates the equivalence of two deterministic functions into Horn constraints with uninterpreted symbols. 
DSE~\cite{DiffSymExe} and ARDiff~\cite{ARDiff} employ symbolic execution to compute function summaries.
To determine functional equivalence between two functions, they check the logical equivalence of their summaries. 
While DSE and ARDiff abstract certain common code regions by uninterpreted functions, we analyze the equivalence of code regions that differ.

A widely-used idea, which we apply as well, is to encode the equivalence check as a program~\cite{RV-DAC,RV-Journal,UCKLEE,SymDiff,RIE,RV-thread,CIVL,DBLP:journals/ijpp/AbadiKPV19}.
Often, the encoded program initializes the same inputs (e.g., global variables, parameters) with equal, but non-deterministic values, then sequentially executes the two functions, and finally inspects whether the same output variables (e.g., global variables, return values) have identical values.
Regression verification~\cite{RV-DAC,RV-Journal}, which is realized in the tool~RVT, Sym\-Diff~\cite{SymDiff}, and RIE~\cite{RIE} employ this idea for each matched pair of sequential functions. 
RVT and SymDiff utilize uninterpreted functions for function calls\footnote{RVT only replaces calls that are recursive or already proven equivalent.} 
and RIE considers function summaries.
Similar to \peqcheck, RVT~\cite{RV-DAC,RV-Journal} and RIE~\cite{RIE} rename variables and initialize matching variables with equal input values.
In contrast, Sym\-Diff~\cite{SymDiff} and UC-Klee~\cite{UCKLEE} save and load the initial state and store the state after each function execution.
Moreover, RIE~\cite{RIE} uses heap equivalence instead of equivalence of output variables.
While the RVT, SymDiff, RIE, and UC-Klee focus on equivalence between two sequential programs, there also exist approaches checking parallel programs.
Chaki et al.~\cite{RV-thread} suggest an approach that encodes equivalence of multi-threaded programs into one sequential verification task per function pair. 
CIVL~\cite{CIVL} supports functional equivalence checking for concurrent programs using pthreads, OpenMP, MPI, etc.
Given the input and output variables, CIVL encodes the functional equivalence check in a single composite program, which equalizes the inputs.
Abadi et al.~\cite{DBLP:journals/ijpp/AbadiKPV19} propose an approach that encodes the functional equivalence of a sequential and a parallelized code segment.
Their approach enfolds the two code segments into two separate functions.
The two functions share the input variables, but use separate output variables.
Input and output variable are determined by dataflow analyses, which are not further specified
However, the approach requires that the input and output variables of the two segments are identical.

Some approaches~\cite{DBLP:conf/sas/PartushY13,DBLP:conf/oopsla/PartushY14,DiffSymExe,conditional-equivalence} go further than equivalence checking and determine when the two programs are equivalent.
\section{Conclusion}
Program refactorings are performed regularly in software development and ensuring that a refactoring is performed correctly, i.e., ensuring that the refactoring is behavior preserving, is crucial.
To deal with the problem that often no formal specification is available, one typically uses the original program as behavior specification and checks whether the original program and the refactored program are functionally equivalent.

We presented \peqcheck, an approach to check functional equivalence of original and refactored program, and proved its soundness.
\peqcheck is motivated by rather local OpenMP parallelizations.
Therefore, it reduces equivalence checking to generating and verifying one verification task per (parallelized) code segment. 
During the generation task, it furthermore considers the context of the code segments, i.e., how variables are used in and after the code segments.
Both, localized checking and context-awareness reduce the complexity of the verification task, which is confirmed by our experiments.
While designed for equivalence checking of sequential programs and their parallelization, \peqcheck is not limited to these checks.
As demonstrated, it can also be applied to pairs of sequential programs.
Although we have seen in our experiments that \peqcheck is incomplete (overapproximation of the input space may lead to a missed equivalence), a problem that common in for modular verification approaches, our experiments show that \peqcheck is feasible and that it can be beneficial. 

\subsubsection*{Acknowledgements}
This work was funded by the Hessian LOEWE initiative within the Software-Factory 4.0 project.
%
%
%
 \bibliographystyle{splncs04}
 \bibliography{literature}

\appendix
\section{Proofs}
This appendix contains the proofs of our theorems.

\subsection{Auxiliary Lemmas on Program Executions}

\begin{lemma}\label{lem:ubBehave}
Let \(S\) be a program.
\[\begin{array}{l}\forall p=(S_0,\sigma_0)\stackrel{op_1}{\rightarrow}\dots\stackrel{op_n}{\rightarrow}(S_n, \sigma_n)\in ex(S):\forall \sigma'_0\in\Sigma: \sigma_0=_{|_{\mathcal{UB}_{ex}(p)}}\sigma'_0:\\\exists(S_0,\sigma'_0)\stackrel{op_1}{\rightarrow}\dots\stackrel{op_n}{\rightarrow}(S_n,\sigma'_n)\in ex(S): \sigma_n=_{|_{\{v\in\mathcal{V}\mid \sigma_0(v)=\sigma'_0(v)\vee \exists 1\leq i\leq n: op_i\equiv v:=expr;\}}}\sigma'_n\end{array}\]
\end{lemma}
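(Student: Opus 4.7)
The plan is to prove the lemma by induction on \(n\), the length of the execution \(p\). The base case \(n=0\) is immediate: \(\sigma_n=\sigma_0\) and \(\sigma'_n=\sigma'_0\), while the claimed witness set simplifies to \(\{v\in\mathcal{V}\mid \sigma_0(v)=\sigma'_0(v)\}\), on which the two initial states agree by definition.

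For the inductive step, consider the prefix \(p'\) of length \(n\). Since \(\mathcal{UB}_{ex}(p')\subseteq\mathcal{UB}_{ex}(p)\), we still have \(\sigma_0=_{|_{\mathcal{UB}_{ex}(p')}}\sigma'_0\). The induction hypothesis yields a matching execution \((S_0,\sigma'_0)\stackrel{op_1}{\rightarrow}\dots\stackrel{op_n}{\rightarrow}(S_n,\sigma'_n)\) such that \(\sigma_n\) and \(\sigma'_n\) agree on \(V'=\{v\mid \sigma_0(v)=\sigma'_0(v) \lor \exists\, 1\le i\le n:\ op_i\equiv v:=expr;\}\). I then extend by one step labelled \(op_{n+1}\), showing first that this step is enabled from \((S_n,\sigma'_n)\) and leads to the same \(S_{n+1}\), and second that the witness property is preserved.

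The crucial auxiliary fact is: every variable \(u\) read by \(op_{n+1}\) (i.e., appearing on the right-hand side of the assignment or in the boolean guard of an \texttt{if}/\texttt{while}/\texttt{assert}) satisfies \(\sigma_n(u)=\sigma'_n(u)\). Indeed, either \(u\) was assigned by some \(op_j\) with \(j\le n\) and hence lies in \(V'\), so the induction hypothesis gives the equality; or \(u\) was never assigned on the prefix, in which case the definition of \(\mathcal{UB}_{ex}\) (taking \(i=n+1\)) places \(u\in\mathcal{UB}_{ex}(p)\), yielding \(\sigma_0(u)=\sigma'_0(u)\). Since the assignment rule is the only semantic rule that modifies the data state, an unmodified variable retains its initial value, so \(\sigma_n(u)=\sigma_0(u)=\sigma'_0(u)=\sigma'_n(u)\). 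Property~(b) of expression evaluation (dependence only on the variables of the expression) then yields \(\sigma_n(expr)=\sigma'_n(expr)\) for the expression occurring in \(op_{n+1}\).

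A case analysis on \(op_{n+1}\) completes the argument. For an assignment \(v:=aexpr\) the step always fires, \(\sigma'_{n+1}(v)=\sigma'_n(aexpr)=\sigma_n(aexpr)=\sigma_{n+1}(v)\), and all other variables are unchanged; thus agreement propagates from \(V'\) to the enlarged witness set \(V'\cup\{v\}\). For a boolean operation (branching of \texttt{if} or \texttt{while}, or a passing \texttt{assert}), the equality \(\sigma'_n(bexpr)=\sigma_n(bexpr)\) ensures the matching semantic rule fires with the same label, producing the same \(S_{n+1}\) and an unchanged state, so the witness set does not grow and agreement follows directly from the induction hypothesis. The \textbf{nop} case is trivial since neither state nor witness set changes. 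The main obstacle is precisely this bookkeeping: making sure that in every case the step from \(\sigma'_n\) is enabled with the matching operation label and that the witness set evolves correctly, which rests on the observation that no rule other than assignment alters the data state, so variables read by \(op_{n+1}\) but never assigned earlier retain their agreeing initial values.
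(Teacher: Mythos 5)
Your proof is correct, but it decomposes the induction in the opposite direction from the paper. The paper peels off the \emph{first} operation $op_1$ and applies the induction hypothesis to the residual execution $p'=(S_1,\sigma_1)\rightarrow\dots\rightarrow(S_n,\sigma_n)\in ex(S_1)$, i.e.\ to a \emph{different} program; this is why its statement is quantified over all programs $S$. The payoff is that the variables read by $op_1$ lie in $\mathcal{UB}_{ex}(p)$ for free (nothing precedes them), and the work shifts to bookkeeping: relating $\mathcal{UB}_{ex}(p')$ to $\mathcal{UB}_{ex}(p)$ and rewriting the witness set after the first step (e.g.\ $\mathcal{UB}_{ex}(p')\subseteq\mathcal{UB}_{ex}(p)\cup\{v\}$ in the assignment case). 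You instead peel off the \emph{last} operation and apply the induction hypothesis to the prefix of the \emph{same} program, using $\mathcal{UB}_{ex}(p')\subseteq\mathcal{UB}_{ex}(p)$ for prefixes. The work then shifts to your auxiliary fact that every variable read by $op_{n+1}$ agrees in $\sigma_n$ and $\sigma'_n$ --- either because it was assigned earlier and hence lies in the witness set $V'$, or because it is placed in $\mathcal{UB}_{ex}(p)$ by the index $i=n+1$ and, being never assigned along either run (the runs carry the same labels), retains its agreeing initial value. That preservation observation is the one extra ingredient your route needs (it is, in spirit, the paper's Lemma~\ref{lem:change}), and your justification of it from the semantics is sound. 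Both arguments are equally rigorous at the level the paper operates (neither formally inducts on the derivation of a single step to establish that the matching step is enabled); your version is arguably slightly more self-contained in how it exposes why $\mathcal{UB}_{ex}$ is exactly the right set, while the paper's version composes more cleanly with its other suffix-oriented lemmas.
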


\begin{proof}
Prove by induction on the length of the executions.
Show for all programs~\(S\) that for all executions of length $n$ the following holds:
\[\begin{array}{l}\forall p=(S_0,\sigma_0)\stackrel{op_1}{\rightarrow}\dots\stackrel{op_n}{\rightarrow}(S_n, \sigma_n)\in ex(S):\forall \sigma'_0\in\Sigma: \sigma_0=_{|_{\mathcal{UB}_{ex}(p)}}\sigma'_0:\\\exists(S_0,\sigma'_0)\stackrel{op_1}{\rightarrow}\dots\stackrel{op_n}{\rightarrow}(S_n,\sigma'_n)\in ex(S): \sigma_n=_{|_{\{v\in\mathcal{V}\mid \sigma_0(v)=\sigma'_0(v)\vee \exists 1\leq i\leq n: op_i\equiv v:=expr;\}}}\sigma'_n\end{array}\]

\textbf{Base case (n=0):} 
Let \(S\) and \(\sigma'_0\) be arbitrary. 
By definition, \((S_0,\sigma'_0)\in ex(S)\). Since \(n=0\), we have \(\{v\in\mathcal{V}\mid \sigma_0(v)=\sigma'_0(v)\vee \exists 1\leq i\leq n: op_i\equiv v:=expr;\}=\{v\in\mathcal{V}\mid \sigma_0(v)=\sigma'_0(v)\}\) and \(\sigma_0=\sigma_n\) and \(\sigma'_0=\sigma'_n\). 
The hypothesis follows.

\textbf{Step case (n-1$\rightarrow$n):}
Let \(S\), \(p=(S_0,\sigma_0)\stackrel{op_1}{\rightarrow}\dots\stackrel{op_n}{\rightarrow}(S_n, \sigma_n)\in ex(S)\) and \(\sigma'_0\in\Sigma\) with \(\sigma_0=_{|_{\mathcal{UB}_{ex}(p)}}\sigma'_0\) be arbitrary.
By definition, \(p'=(S_1,\sigma_1)\dots\stackrel{op_n}{\rightarrow}(S_n, \sigma_n)\in ex(S_1)\).
Consider three cases:

\textit{Case \(op_1\equiv \textbf{nop}\)}
Due to semantics, \(S_1\) starts with \(E\) or \([E \| \dots \| E]\), \(\sigma_0=\sigma_1\), and \((S_0,\sigma'_0)\stackrel{\textbf{nop}}{\rightarrow}(S_1,\sigma'_0)\).
Furthermore, \(\{v\in\mathcal{V}\mid \sigma_0(v)=\sigma'_0(v)\vee \exists 1\leq i\leq n: op_i\equiv v:=expr;\}=\{v\in\mathcal{V}\mid \sigma_0(v)=\sigma'_0(v)\vee \exists 2\leq i\leq n: op_i\equiv v:=expr;\}\) and by definition  \(\mathcal{UB}_{ex}(p')\subseteq\mathcal{UB}_{ex}(p)\).
By induction hypothesis, \(\exists(S_1,\sigma'_1)\stackrel{op_2}{\rightarrow}\dots\stackrel{op_n}{\rightarrow}(S_n,\sigma'_n)\in ex(S_1): \sigma'_1=\sigma'_0\wedge \sigma_n=_{|_{\{v\in\mathcal{V}\mid \sigma_0(v)=\sigma'_0(v)\vee \exists 2\leq i\leq n: op_i\equiv v:=expr;\}}}\sigma'_n\). The induction hypothesis follows.

\textit{Case \(op_1\equiv bexpr\)\footnote{Since \(\neg bexpr\) is boolean expression, too, it is covered by this case.}}
By definition, \(\mathcal{V}(bexpr)\subseteq \mathcal{UB}_{ex}(p)\). Thus, \(\sigma_0(bexpr)=\sigma'_0(bexpr)\).
Due to semantics, \((S_0,\sigma'_0)\stackrel{bexpr}{\rightarrow}(S_1,\sigma'_0)\) and \((S_0,\sigma_0)\stackrel{bexpr}{\rightarrow}(S_1,\sigma_1)\) implies \(\sigma_0=\sigma_1\).
Furthermore, \(\{v\in\mathcal{V}\mid \sigma_0(v)=\sigma'_0(v)\vee \exists 1\leq i\leq n: op_i\equiv v:=expr;\}=\{v\in\mathcal{V}\mid \sigma_0(v)=\sigma'_0(v)\vee \exists 2\leq i\leq n: op_i\equiv v:=expr;\}\) and by definition  \(\mathcal{UB}_{ex}(p')\subseteq\mathcal{UB}_{ex}(p)\).
By induction hypothesis, \(\exists(S_1,\sigma'_1)\stackrel{op_2}{\rightarrow}\dots\stackrel{op_n}{\rightarrow}(S_n,\sigma'_n)\in ex(S_1): \sigma'_1=\sigma'_0\wedge \sigma_n=_{|_{\{v\in\mathcal{V}\mid \sigma_0(v)=\sigma'_0(v)\vee \exists 2\leq i\leq n: op_i\equiv v:=expr;\}}}\sigma'_n\). The induction hypothesis follows.

\textit{Case \(op_1\equiv v:=aexpr;\)} 
By definition, we get \(\mathcal{V}(aexpr)\subseteq \mathcal{UB}_{ex}(p)\). Thus, \(\sigma_0(aexpr)=\sigma'_0(aexpr)\).
Due to semantics, \((S_0,\sigma_0)\stackrel{v:=aexpr}{\rightarrow}(S_1,\sigma_1)\) implies that \(\sigma_1=\sigma_0[v:=\sigma_0(aexpr)]\) and furthermore, \((S_0,\sigma'_0)\stackrel{v:=aexpr}{\rightarrow}(S_1,\sigma'_1)\) with \(\sigma'_1=\sigma'_0[v:=\sigma'_0(aexpr)]=\sigma'_0[v:=\sigma_0(aexpr)]\).
Thus, we infer the following: \(\sigma_1=_{|_{\{v\in\mathcal{V}\mid \sigma_0(v)=\sigma'_0(v)\vee \exists 1\leq i\leq 1: op_i\equiv v:=expr;\}}}\sigma'_1\).
Moreover, \(\{v\in\mathcal{V}\mid \sigma_0(v)=\sigma'_0(v)\) \(\vee \exists 1\leq i\leq n: op_i\equiv v:=expr;\}=\{v\in\mathcal{V}\mid \sigma_1(v)=\sigma'_1(v)\vee \exists 2\leq i\leq n: op_i\equiv v:=expr;\}\) and by definition  \(\mathcal{UB}_{ex}(p')\subseteq(\mathcal{UB}_{ex}(p)\cup\{v\})\).
By induction hypothesis, \(\exists(S_1,\sigma'_1)\stackrel{op_2}{\rightarrow}\dots\stackrel{op_n}{\rightarrow}(S_n,\sigma'_n)\in ex(S_1): \sigma_n=_{|_{\{v\in\mathcal{V}\mid \sigma_1(v)=\sigma'_1(v)\vee \exists 2\leq i\leq n: op_i\equiv v:=expr;\}}}\sigma'_n\). The hypothesis follows.

\end{proof}

\begin{lemma}\label{lem:change}
Let \(S\) be a program. \[\forall\sigma\in\Sigma: (S,\sigma)\rightarrow^*(E,\sigma')\in ex(S)\implies \sigma=_{|_{V\setminus\mathcal{M}(S)}}\sigma'\]
\end{lemma}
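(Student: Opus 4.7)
The plan is a direct unpacking of the definition of \(\mathcal{M}(S)\). Fix an arbitrary \(\sigma\in\Sigma\) and assume \((S,\sigma)\rightarrow^*(E,\sigma')\in ex(S)\); I need to establish that \(\sigma(v)=\sigma'(v)\) for every \(v\in V\setminus\mathcal{M}(S)\). Pick any such \(v\).

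Recall that \(\mathcal{M}(S)=\{w\in\mathcal{V}\mid \exists \sigma_1,\sigma_2\in\Sigma:(S,\sigma_1)\rightarrow^*(\cdot,\sigma_2)\wedge\sigma_1(w)\neq\sigma_2(w)\}\). Taking the contrapositive, \(v\notin\mathcal{M}(S)\) means that for every pair of states \(\sigma_1,\sigma_2\in\Sigma\) and every execution \((S,\sigma_1)\rightarrow^*(\cdot,\sigma_2)\), we must have \(\sigma_1(v)=\sigma_2(v)\). The execution hypothesized in the lemma, \((S,\sigma)\rightarrow^*(E,\sigma')\), is an instance of \((S,\sigma_1)\rightarrow^*(\cdot,\sigma_2)\) with \(\sigma_1=\sigma\) and \(\sigma_2=\sigma'\) (since the wildcard \(\cdot\) covers the terminal program \(E\)), so the contrapositive immediately yields \(\sigma(v)=\sigma'(v)\). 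Since \(v\) was arbitrary, \(\sigma=_{|_{V\setminus\mathcal{M}(S)}}\sigma'\) follows.

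There is no real obstacle to overcome here: the definition of \(\mathcal{M}(S)\) already quantifies over all start states and every reachable state, not just terminating ones, so the lemma is essentially a corollary of the definition rather than a theorem requiring induction or case analysis on the semantics. The only point worth making explicit in the write-up is that a normally terminating execution \((S,\sigma)\rightarrow^*(E,\sigma')\) is a special case of the family of executions \((S,\sigma)\rightarrow^*(\cdot,\sigma')\) ranged over in the definition of \(\mathcal{M}\). This lemma serves as a convenience for the soundness proofs of Theorems~\ref{theo:EquivTotal}--\ref{theo:EquivApprox}, where it isolates the fact that non-modified variables retain their initial values on any terminating run.
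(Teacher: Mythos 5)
Your proof is correct, and it takes a genuinely different --- and in fact more economical --- route than the paper. You simply unpack the definition of \(\mathcal{M}(S)\): since it is defined \emph{semantically} as the set of variables whose value differs between the start state and \emph{some} reachable state \((\cdot,\sigma')\) of \emph{some} execution, a normally terminating execution \((S,\sigma)\rightarrow^*(E,\sigma')\) is just one instance of the quantified family, and the contrapositive gives \(\sigma(v)=\sigma'(v)\) for every \(v\notin\mathcal{M}(S)\) immediately. The paper instead proves the lemma by induction on the length of the execution, with a case split on whether the first operation is an assignment, a guard/\(\textbf{nop}\), and it leans on the inclusions \(\mathcal{M}(S_1)\subseteq\mathcal{M}(S)\) and \(\mathcal{M}(S_1)\cup\{v\}\subseteq\mathcal{M}(S)\) for the residual program \(S_1\). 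That inductive argument is the natural one if \(\mathcal{M}\) were the \emph{syntactic} overapproximation (variables occurring on the left of an assignment), and it would then yield the statement for any such overapproximating set; but for the semantic \(\mathcal{M}\) actually defined in the paper, those monotonicity inclusions are not themselves immediate (an execution of the residual \(S_1\) need not be realizable as a suffix of an execution of \(S\), because the first step constrains the reachable states), whereas your direct argument sidesteps them entirely. In short: your proof is sound, shorter, and arguably more faithful to the definition as written; the paper's proof buys a form of the lemma that would survive replacing \(\mathcal{M}(S)\) by a syntactic superset, at the cost of extra machinery.
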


\begin{proof}
Prove by induction on the length of the executions.
Show for all programs~\(S\) that \(\forall\sigma\in\Sigma: (S,\sigma)\rightarrow^n(E,\sigma')\in ex(S)\implies \sigma=_{|_{V\setminus\mathcal{M}(S)}}\sigma'\).

\textbf{Base case (n=0)}
Then, \(S=E\) and \(\sigma=\sigma_0\).
The hypothesis follows.

\textbf{Step case (n-1$\rightarrow$n):}
Let \(S\) be arbitrary and \(p=(S_0,\sigma_0)\stackrel{op_1}{\rightarrow}\dots\stackrel{op_n}{\rightarrow}(E, \sigma_n)\in ex(S)\).
 By definition \(p'=(S_1,\sigma_1)\stackrel{op_2}{\rightarrow}\dots\stackrel{op_n}{\rightarrow}(E, \sigma_n)\in ex(S_1)\) and \(\mathcal{M}(S_1)\subseteq\mathcal{M}(S)\).
Consider two cases:

\textit{Case \(op_1\equiv bexpr\) or \(op_1=\textbf{nop}\)}
Due to semantics, \(\sigma_0=\sigma_1\).
By induction, \(\sigma_1=_{|_{V\setminus\mathcal{M}(S_1)}}\sigma_n\).
Hence, \(\sigma_0=_{|_{V\setminus\mathcal{M}(S)}}\sigma_n\).

\textit{Case \(op_1\equiv v:=aexpr;\)} 
Due to semantics, \(\sigma_1=\sigma_0[v:=\sigma_0(aexpr)]\) and, hence, \(\sigma_0=_{|_{\mathcal{V}\setminus\{v\}}}\sigma_1\).
By induction, \(\sigma_1=_{|_{V\setminus\mathcal{M}(S_1)}}\sigma_n\).
Since  \(\mathcal{M}(S_1)\cup\{v\}\subseteq\mathcal{M}(S)\), we conclude \(\sigma_0=_{|_{V\setminus\mathcal{M}(S)}}\sigma_n\).

\end{proof}

\begin{lemma}\label{lem:binSing}
Let \(S\) be a program and \(\rho\) be a renaming function.
\[\begin{array}{l}\forall (S_0,\sigma_0)\stackrel{op_1}{\rightarrow}(S_1,\sigma_1)\in ex(S):\\\exists (\mathcal{R}(S_0,\rho),\rho(\sigma_0))\stackrel{\mathcal{R}(op_1,\rho)}{\rightarrow}(S_1,\rho),\rho(\sigma_1))\in ex(\mathcal{R}(S,\rho))\end{array}\]
\end{lemma}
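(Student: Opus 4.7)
The plan is to prove the lemma by a case analysis on the semantic rule used to derive the single step \((S_0,\sigma_0)\xrightarrow{op_1}(S_1,\sigma_1)\), i.e., on the structure of \(S_0\). For each of the ten rules in Fig.~\ref{fig:semRules}, I will exhibit a matching derivation whose left-hand side is \((\mathcal{R}(S_0,\rho),\rho(\sigma_0))\) and whose conclusion is \((\mathcal{R}(S_1,\rho),\rho(\sigma_1))\) under the label \(\mathcal{R}(op_1,\rho)\). The only auxiliary facts I need are (i) the definition \(\rho(\sigma)(v)=\sigma(\rho^{-1}(v))\), (ii) the expression–renaming consistency assumption \(\sigma(expr)=\rho(\sigma)(\mathcal{R}(expr,\rho))\), and (iii) the bijectivity of \(\rho\).

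For the atomic rules with boolean guards (assert, if-then-else, while), renaming leaves the shape of the statement unchanged but replaces \(bexpr\) by \(\mathcal{R}(bexpr,\rho)\). Assumption~(ii) gives \(\rho(\sigma_0)(\mathcal{R}(bexpr,\rho))=\sigma_0(bexpr)\), so exactly the same branch of the semantic rule fires, and the data state is preserved by these rules, matching \(\rho(\sigma_0)=\rho(\sigma_1)\). The \textbf{nop} rule for \((E;S,\sigma)\) and the rule for \(([E\|\dots\|E],\sigma)\) are immediate because \(\mathcal{R}\) distributes over sequential and parallel composition and preserves \(E\).

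The main case — and the only one that is not purely syntactic — is the assignment \(v:=_\ell aexpr\). Here \(\mathcal{R}(v:=_\ell aexpr,\rho)\) is \(\rho(v):=_\ell \mathcal{R}(aexpr,\rho)\), and the step produced by the semantic rule is
\[
(\rho(v):=\mathcal{R}(aexpr,\rho),\rho(\sigma_0))\xrightarrow{\mathcal{R}(v:=aexpr,\rho)}\bigl(E,\rho(\sigma_0)[\rho(v):=\rho(\sigma_0)(\mathcal{R}(aexpr,\rho))]\bigr).
\]
I will then show that \(\rho(\sigma_0)[\rho(v):=\rho(\sigma_0)(\mathcal{R}(aexpr,\rho))]=\rho(\sigma_0[v:=\sigma_0(aexpr)])=\rho(\sigma_1)\). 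The right equality follows from (ii). For the left equality, I evaluate the update at an arbitrary \(w\in\mathcal{V}\) using \(\rho(\sigma)(w)=\sigma(\rho^{-1}(w))\): if \(w=\rho(v)\) then \(\rho^{-1}(w)=v\) by bijectivity, so both sides give \(\sigma_0(aexpr)\); otherwise \(\rho^{-1}(w)\neq v\), so both sides give \(\sigma_0(\rho^{-1}(w))=\rho(\sigma_0)(w)\).

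Finally, the two congruence rules — for sequential composition \(S_1;S_2\) and parallel composition \([S_1\|\dots\|S_n]\) — are handled by a straightforward structural induction, using that \(\mathcal{R}(S_1;S_2,\rho)=\mathcal{R}(S_1,\rho);\mathcal{R}(S_2,\rho)\) and analogously for parallel composition, so the inductive step on the relevant subprogram lifts directly to the composite. The chief obstacle is purely bookkeeping around the assignment case, where the interplay between \(\rho\) acting pointwise on data states and the update operator \(\sigma[v:=\cdot]\) must be checked with care; once that equality is established, every other case is a direct syntactic match.
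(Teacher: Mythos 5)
Your proposal is correct and follows essentially the same route as the paper: the paper formalizes your ``case analysis on the rule, with induction for the congruence cases'' as an induction on the length of the derivation of the step, with the same eight atomic cases resolved by the renaming-consistency assumption \(\sigma(expr)=\rho(\sigma)(\mathcal{R}(expr,\rho))\) and the same key equality \(\rho(\sigma_0)[\rho(v):=\rho(\sigma_0)(\mathcal{R}(aexpr,\rho))]=\rho(\sigma_0[v:=\sigma_0(aexpr)])\) in the assignment case. Your pointwise verification of that equality via \(\rho(\sigma)(w)=\sigma(\rho^{-1}(w))\) and bijectivity is in fact slightly more explicit than the paper's, which asserts it directly.
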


\begin{proof}
Prove by induction over the length \(n\) of the derivation of \((S_0,\sigma_0)\stackrel{op_1}{\rightarrow}(S_1,\sigma_1)\) that there exists \((\mathcal{R}(S_0,\rho),\rho(\sigma_0))\stackrel{\mathcal{R}(op_1,\rho)}{\rightarrow}(S_1,\rho),\rho(\sigma_1))\).

\textbf{Base case (n=1)}
Due to semantics, \(S_0\) is an assignment, assert statement, if- or while-statement, empty parallel statement, or a sequence starting with an empty program.
Consider eight cases.

\emph{Case 1 (\(S_0\equiv v:=_\ell aexpr;\))}
Then, \(\mathcal{R}(S_0,\rho)=\rho(v):=_\ell \mathcal{R}(aexpr,\rho);\).
Due to the semantics, we conclude that \((S_0,\sigma_0)\stackrel{v:=aexpr;}{\rightarrow}(E,\sigma_0[v:=\sigma_0(expr)])\) and  \((\mathcal{R}(S,\rho), \rho(\sigma_0))\stackrel{\rho(v):=\mathcal{R}(aexpr,\rho);}{\rightarrow}(E,\rho(\sigma_0)[\rho(v):=\rho(\sigma_0)(\mathcal{R}(aexpr,\rho))])\in ex(\mathcal{R}(S_0,\rho))\)
By definition of \(\rho(\sigma_0)\), \(\rho(\sigma_0)(\mathcal{R}(aexpr,\rho))=\sigma_0(aexpr)\).
We conclude that \(\rho(\sigma_0)[\rho(v):=\rho(\sigma_0)(\mathcal{R}(aexpr,\rho))]=\rho(\sigma_0)[\rho(v):=\sigma_0(aexpr)]=\rho(\sigma_0[v:=\sigma_0(aexpr)])\).
Since \(\mathcal{R}(E,\rho)=E\) and \(\mathcal{R}(S_0,\rho)=\rho(v):=_\ell \mathcal{R}(aexpr,\rho);\), the hypothesis follows.

\emph{Case 2 (\(S_0\equiv \mathbf{assert}_\ell~bexpr;\))}
Then, \(\mathcal{R}(S_0,\rho)=\mathbf{assert_\ell}~\mathcal{R}(bexpr,\rho);\).
Due to the semantics, \((S_0,\sigma_0)\stackrel{bexpr}{\rightarrow}(S_1, \sigma_1)\) implies \(\sigma_1=\sigma_0\), \(\sigma_0(bexpr)=true\), and \(S_1=E\).
Since \(true=\sigma_0(bexpr)=\rho(\sigma_0)(\mathcal{R}(bexpr,\rho))\), we conclude from the semantics that \((\mathcal{R}(S,\rho),\rho(\sigma_0))\stackrel{\mathcal{R}(bexpr,\rho)}{\rightarrow}(E, \sigma'_1)\in ex(\mathcal{R}(S_0,\rho))\) and \(\sigma'_1=\rho(\sigma_0)\).
Since \(\mathcal{R}(E,\rho)=E\), the hypothesis follows.

\emph{Case 3 (\(S_0\equiv \mathbf{if}_\ell~bexpr~\mathbf{then}~ S'~\mathbf{else}~S'' \wedge\sigma_0(bexpr)\))}
Then, \(\mathcal{R}(S,\rho)=\mathbf{if}_\ell~\mathcal{R}(bexpr,\rho)~\mathbf{then}~ \mathcal{R}(S',\rho)~\mathbf{else}~\mathcal{R}(S'',\rho)\).
Due to the semantics, \((S_0,\sigma_0)\stackrel{bexpr}{\rightarrow}(S_1, \sigma_1)\) with \(\sigma_1=\sigma_0\), and \(S_1=S'\).
Since \(true=\sigma_0(bexpr)=\rho(\sigma_0)(\mathcal{R}(bexpr,\rho))\), we conclude from the semantics that  \((\mathcal{R}(S_0,\rho),\rho(\sigma_0))\stackrel{\mathcal{R}(bexpr,\rho)}{\rightarrow}(\mathcal{R}(S',\rho), \rho(\sigma_0))\).
Since \(\sigma_0=\sigma_1\), the induction hypothesis follows. 

\emph{Case 4 (\(S_0\equiv  \mathbf{if}_\ell~bexpr~\mathbf{then}~ S'~\mathbf{else}~S'' \wedge\neg\sigma_0(bexpr)\))}
Analogously to case~3.

\emph{Case 5 (\(S_0\equiv \mathbf{while}_\ell~bexpr~\mathbf{do}~ S' \wedge\sigma_0(bexpr)\))}
We infer that\(\mathcal{R}(S,\rho)=\mathbf{while}_\ell~\rho(bexpr)~\mathbf{do}~ \mathcal{R}(S',\rho)\).
Due to the semantics, \((S_0,\sigma_0)\stackrel{bexpr}{\rightarrow}(S_1, \sigma_1)\) with \(\sigma_1=\sigma_0\), and \(S_1=S';S_0\).
By definition of \(\rho(\sigma_0)\), \(\rho(\sigma_0)(\mathcal{R}(bexpr,\rho))=\sigma_0(bexpr)=true\).
Thus, \((\mathcal{R}(S_0,\rho),\rho(\sigma_0))\stackrel{\mathcal{R}(bexpr,\rho)}{\rightarrow}(\mathcal{R}(S',\rho);\mathcal{R}(S_0,\rho), \rho(\sigma_0))\).
Since \(\mathcal{R}(S_1,\rho)=\mathcal{R}(S';S_0,\rho)=\mathcal{R}(S',\rho);\mathcal{R}(S_0,\rho)\), the induction hypothesis follows. 

\emph{Case 6 (\(S_0\equiv \mathbf{while}_\ell~bexpr~\mathbf{do}~ S' \wedge\neg\sigma_0(bexpr)\))}
We infer that \(\mathcal{R}(S,\rho)=\mathbf{while}_\ell~\mathcal{R}(bexpr,\rho)~\mathbf{do}~ \mathcal{R}(S',\rho)\).
Due to the semantics, \((S_0,\sigma_0)\stackrel{\neg bexpr}{\rightarrow}(E, \sigma_0)\).
By definition of \(\rho(\sigma_0)\), \(\rho(\sigma_0)(\mathcal{R}(bexpr,\rho))=\sigma_0(bexpr)=false\).
Due to the semantics, \((\mathcal{R}(S,\rho),\rho(\sigma_0))\stackrel{\neg \mathcal{R}(bexpr,\rho)}{\rightarrow}(E, \rho(\sigma_0))\).
Since \(\mathcal{R}(E,\rho)=E\) and \(\neg\mathcal{R}(bexpr,\rho)=\mathcal{R}(\neg bexpr,\rho)\), the hypothesis follows.

\emph{Case 7 (\(S_0\equiv [E \| \dots \| E]\))}
Then, \(\mathcal{R}(S_0,\rho)=[E \| \dots \| E]\).
Due to the semantics, \((S_0,\sigma_0)\stackrel{\textbf{nop}}{\rightarrow}(S_1, \sigma_1)\) implies \(\sigma_1=\sigma_0\) and \(S_1=E\).
Due to \(\mathcal{R}(\textbf{nop},\rho)=\textbf{nop}\), also \((\mathcal{R}(S_0,\rho),\rho(\sigma_0))\stackrel{\mathcal{R}(\textbf{nop},\rho)}{\rightarrow}(E, \rho(\sigma_0))\in ex(\mathcal{R}(S_0,\rho))\).
Since \(\mathcal{R}(E,\rho)=E\), the hypothesis follows.

\emph{Case 8 (\(S_0\equiv E;S\))}
Then, \(\mathcal{R}(S_0,\rho)=E;\mathcal{R}(S,\rho)\).
Due to the semantics, \((S_0,\sigma_0)\stackrel{\textbf{nop}}{\rightarrow}(S_1, \sigma_1)\) implies \(\sigma_1=\sigma_0\) and \(S_1=S\).
Due to \(\mathcal{R}(\textbf{nop},\rho)=\textbf{nop}\), also \((\mathcal{R}(S_0,\rho),\rho(\sigma_0))\stackrel{\mathcal{R}(\textbf{nop},\rho)}{\rightarrow}(\mathcal{R}(S,\rho), \rho(\sigma_0))\in ex(\mathcal{R}(S_0,\rho))\).
The hypothesis follows.

\textbf{Step case (n$\rightarrow$n+1):}
Consider a ordered sequence of the derivation steps, which are derived from a derivation tree for \((S_0,\sigma_0)\stackrel{op_1}{\rightarrow}(S_1,\sigma_1)\) such that a step required by another step in the tree occurs earlier in the sequence.
Since \(n+1>1\), the last step in the sequence is a computational sequential composition steps or a parallel composition step.
Consider two cases.

\emph{Case 1 (\(S_0\equiv S;S' \wedge S\neq E\))}
Then, \(\mathcal{R}(S_0,\rho)=\mathcal{R}(S,\rho);\mathcal{R}(S',\rho)\).
Due to semantics, there exists \(S,\sigma_0\stackrel{op_1}(S'',\sigma_1)\) that can be derived in less than \(n+1\) steps and \(S_1=S'';S'\). 
By induction, there exists \((\mathcal{R}(S,\rho),\rho(\sigma_0))\stackrel{\mathcal{R}(op_1,\rho)}{\rightarrow}(S'',\rho),\rho(\sigma_1))\).
We conclude  \((\mathcal{R}(S;S',\rho),\rho(\sigma_0))\stackrel{\mathcal{R}(op_1,\rho)}{\rightarrow}(S'',\rho);\mathcal{R}(S',\rho),\rho(\sigma_1))\).
Since \(\mathcal{R}(S',\rho);\mathcal{R}(S'',\rho)=\mathcal{R}(S';S'',\rho)\), the induction hypothesis follows.

\emph{Case 2 (\(S_0\equiv [S'_1 \| \dots \| S'_i\|\dots \| S'_n] \wedge\exists j\in[1,n]: S'_i\neq E\))}
Then, \(\mathcal{R}(S'_0,\rho)=[\mathcal{R}(S'_1,\rho) \| \dots \| \mathcal{R}(S'_i,\rho)\|\dots \| \mathcal{R}(S_n,\rho)]\).
Due to semantics, there exists \(j\in[1,n]\) such that \(S'_j,\sigma_0\stackrel{op_1}(S''_j,\sigma_1)\), which can be derived in less than \(n+1\) steps, and \(S_1=[S'_1 \| \dots \| S''_j\|\dots \| S'_n]'\). 
By induction, there exists \((\mathcal{R}(S'_j,\rho),\rho(\sigma_0))\stackrel{\mathcal{R}(op_1,\rho)}{\rightarrow}(S''_j,\rho),\rho(\sigma_1))\).
Hence, \((\mathcal{R}([\mathcal{R}(S'_1,\rho) \| \dots \| \mathcal{R}(S'_j,\rho)\|\dots \| \mathcal{R}(S_n,\rho)],\rho),\rho(\sigma_0))\stackrel{\mathcal{R}(op_1,\rho)}{\rightarrow}([\mathcal{R}(S'_1,\rho) \| \dots \| \mathcal{R}(S''_j,\rho)\|\dots \| \mathcal{R}(S_n,\rho)],\rho(\sigma_1))\).
Finally, taking into account that \([\mathcal{R}(S'_1,\rho) \| \dots \| \mathcal{R}(S''_j,\rho)\|\dots \| \mathcal{R}(S_n,\rho)]=\mathcal{R}([S'_1 \| \dots \| S''_j\|\dots \| S'_n],\rho)\), the induction hypothesis follows.
\end{proof}

\begin{lemma}\label{lem:biRen}
Let \(S\) be a program and \(\rho\) be a renaming function.
\[\begin{array}{l}\forall (S_0,\sigma_0)\stackrel{op_1}{\rightarrow}\dots\stackrel{op_n}{\rightarrow}(S_n,\sigma_n)\in ex(S):\\\exists (\mathcal{R}(S_0,\rho),\rho(\sigma_0))\stackrel{\mathcal{R}(op_1,\rho)}{\rightarrow}\dots\stackrel{\mathcal{R}(op_n,\rho)}{\rightarrow}(\mathcal{R}(S_n,\rho),\rho(\sigma_n))\in ex(\mathcal{R}(S,\rho))\end{array}\]
\end{lemma}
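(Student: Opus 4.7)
The plan is to prove this by straightforward induction on the number~$n$ of execution steps, using Lemma~\ref{lem:binSing} to handle each individual step and the inductive definition of $ex(\cdot)$ to assemble the pieces.

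For the base case $n=0$, the execution is the trivial one $(S_0,\sigma_0)$ with $S_0 = S$. By the first rule defining $ex(\cdot)$, we have $(\mathcal{R}(S,\rho), \rho(\sigma_0)) \in ex(\mathcal{R}(S,\rho))$, which is precisely the claim.

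For the step case, suppose the result holds for executions of length $n$, and let $(S_0,\sigma_0)\xrightarrow{op_1}\cdots\xrightarrow{op_n}(S_n,\sigma_n)\xrightarrow{op_{n+1}}(S_{n+1},\sigma_{n+1})$ be an execution in $ex(S)$ of length $n+1$. By the inductive definition of $ex(\cdot)$, the prefix $(S_0,\sigma_0)\xrightarrow{op_1}\cdots\xrightarrow{op_n}(S_n,\sigma_n)$ lies in $ex(S)$, so the induction hypothesis yields an execution
\[
(\mathcal{R}(S_0,\rho),\rho(\sigma_0))\xrightarrow{\mathcal{R}(op_1,\rho)}\cdots\xrightarrow{\mathcal{R}(op_n,\rho)}(\mathcal{R}(S_n,\rho),\rho(\sigma_n)) \in ex(\mathcal{R}(S,\rho)).
\]
Separately, the single step $(S_n,\sigma_n)\xrightarrow{op_{n+1}}(S_{n+1},\sigma_{n+1})$ exists by assumption, so Lemma~\ref{lem:binSing}, applied to this transition (which holds as a derivation irrespective of the surrounding program), gives us
\[
(\mathcal{R}(S_n,\rho),\rho(\sigma_n))\xrightarrow{\mathcal{R}(op_{n+1},\rho)}(\mathcal{R}(S_{n+1},\rho),\rho(\sigma_{n+1})).
\]
The second rule defining $ex(\cdot)$ then lets us append this single transition to the renamed prefix, yielding the desired execution in $ex(\mathcal{R}(S,\rho))$.

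There is no real obstacle once Lemma~\ref{lem:binSing} is in hand: the renaming function~$\rho$ commutes with single execution steps (that is the content of Lemma~\ref{lem:binSing}), and the two inductive rules defining $ex(\cdot)$ let us stitch single steps into a full trace. The only subtlety is noting that Lemma~\ref{lem:binSing}'s quantification over $S$ at the outer level permits instantiating it at $S_n$ (the intermediate subprogram) rather than only at the original $S$, which is why the step case goes through cleanly.
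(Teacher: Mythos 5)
Your proof is correct and follows essentially the same route as the paper's: induction on the execution length, invoking Lemma~\ref{lem:binSing} (instantiated at the intermediate program \(S_n\)) for the final step and the inductive rules of \(ex(\cdot)\) to append it to the renamed prefix. The subtlety you flag about re-instantiating Lemma~\ref{lem:binSing} at \(S_n\) is exactly how the paper handles it as well.
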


\begin{proof}
Prove by induction on the length of the executions.
Show for all programs \(S\) that 
\(\forall (S_0,\sigma_0)\stackrel{op_1}{\rightarrow}\dots\stackrel{op_n}{\rightarrow}(S_n,\sigma_n)\in ex(S):\exists (\mathcal{R}(S_0,\rho),\rho(\sigma_0))\stackrel{\mathcal{R}(op_1,\rho)}{\rightarrow}\dots\stackrel{\mathcal{R}(op_n,\rho)}{\rightarrow}(\mathcal{R}(S_n,\rho),\rho(\sigma_n))\in ex(\mathcal{R}(S,\rho))\).

\textbf{Base case (n=0)}
Let \(S\) and \(\sigma\in\Sigma\) be arbitrary.
By definition, there exists \((\mathcal{R}(S,\rho),\rho(\sigma))\in ex(\mathcal{R}(S,\rho))\).
The induction hypothesis follows.

\textbf{Step case (n-1$\rightarrow$n):}
By definition of executions, \((S_0,\sigma_0)\stackrel{op_1}{\rightarrow}\dots\stackrel{op_n}{\rightarrow}(S_n,\sigma_n)\in ex(S)\) implies \((S_0,\sigma_0)\stackrel{op_1}{\rightarrow}\dots\stackrel{op_{n-1}}{\rightarrow}(S_{n-1},\sigma_{n-1})\in ex(S)\) and \((S_{n-1},\sigma_{n-1})\stackrel{op_n}{\rightarrow}(S_n,\sigma_n)\).
Furthermore, \((S_{n-1},\sigma_{n-1})\stackrel{op_n}{\rightarrow}(S_n,\sigma_n)\in ex(S_{n-1})\). 
By induction, exists \((\mathcal{R}(S_0,\rho),\rho(\sigma_0))\stackrel{\mathcal{R}(op_1,\rho)}{\rightarrow}\dots\stackrel{\mathcal{R}(op_{n-1},\rho)}{\rightarrow}(\mathcal{R}(S_{n-1},\rho),\rho(\sigma_{n-1}))\in ex(\mathcal{R}(S_0,\rho))\).
Due to Lemma~\ref{lem:binSing}, there exists \((\mathcal{R}(S_{n-1},\rho),\rho(\sigma_{n-1}))\stackrel{\mathcal{R}(op_n,\rho)}{\rightarrow}(\mathcal{R}(S_n,\rho),\rho(\sigma_n))\in ex(\mathcal{R}(S_{n-1},\rho))\).
Hence, \((\mathcal{R}(S_0,\rho),\rho(\sigma_0))\stackrel{\mathcal{R}(op_1,\rho)}{\rightarrow}(S_1,\rho),\rho(\sigma_1))\in ex(\mathcal{R}(S_0,\rho))\).
\end{proof}

\begin{corollary}\label{cor:biRen}
Let \(S\) be a program and \(\rho\) be a renaming function.
\[\begin{array}{l}\forall (\mathcal{R}(S_0,\rho),\rho(\sigma_0))\stackrel{\mathcal{R}(op_1,\rho)}{\rightarrow}\dots\stackrel{\mathcal{R}(op_n,\rho)}{\rightarrow}(\mathcal{R}(S_n,\rho),\rho(\sigma_n))\in ex(\mathcal{R}(S,\rho)):\\
\exists (S_0,\sigma_0)\stackrel{op_1}{\rightarrow}\dots\stackrel{op_n}{\rightarrow}(S_n,\sigma_n)\in ex(S):\end{array}\]
\end{corollary}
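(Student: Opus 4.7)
The plan is to derive the corollary by applying Lemma~\ref{lem:biRen} with the inverse renaming function $\rho^{-1}$. Since $\rho:\mathcal{V}\mapsto\mathcal{V}$ is bijective by definition of a renaming function, $\rho^{-1}$ exists and is itself a bijective renaming function, so Lemma~\ref{lem:biRen} applies to it as well.

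Before invoking the lemma I would establish three involution facts expressing that renaming by $\rho$ followed by renaming by $\rho^{-1}$ is the identity: (i)~for every (sub)program $S'$, $\mathcal{R}(\mathcal{R}(S',\rho),\rho^{-1})=S'$; (ii)~for every arithmetic or boolean expression $expr$, $\mathcal{R}(\mathcal{R}(expr,\rho),\rho^{-1})=expr$; and (iii)~for every data state $\sigma\in\Sigma$, $\rho^{-1}(\rho(\sigma))=\sigma$. Items (i) and (ii) follow by a routine structural induction, using that renaming is defined pointwise on variables and satisfies $\rho^{-1}(\rho(v))=v$; in particular the $\mathbf{nop}$ operation contains no variables and is therefore unaffected by renaming. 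Item (iii) is immediate from the definition of $\rho(\sigma)$: for every $v\in\mathcal{V}$, $\rho^{-1}(\rho(\sigma))(v)=\rho(\sigma)(\rho(v))=\sigma(\rho^{-1}(\rho(v)))=\sigma(v)$.

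Given an arbitrary execution $(\mathcal{R}(S_0,\rho),\rho(\sigma_0))\stackrel{\mathcal{R}(op_1,\rho)}{\rightarrow}\dots\stackrel{\mathcal{R}(op_n,\rho)}{\rightarrow}(\mathcal{R}(S_n,\rho),\rho(\sigma_n))\in ex(\mathcal{R}(S,\rho))$, I would apply Lemma~\ref{lem:biRen} to the program $\mathcal{R}(S,\rho)$ with renaming function $\rho^{-1}$. This yields a corresponding execution in $ex(\mathcal{R}(\mathcal{R}(S,\rho),\rho^{-1}))$ whose $i$-th state is $(\mathcal{R}(\mathcal{R}(S_i,\rho),\rho^{-1}),\rho^{-1}(\rho(\sigma_i)))$ and whose $i$-th operation is $\mathcal{R}(\mathcal{R}(op_i,\rho),\rho^{-1})$. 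Applying the three involution facts componentwise collapses these to $(S_i,\sigma_i)$ and $op_i$ respectively, so the execution lies in $ex(S)$ and has exactly the required shape, as required.

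The argument carries no real difficulty: the only care point is verifying the involution facts, which reduces to the standard observation that the composition of a bijection with its inverse is the identity together with the fact that the renaming operator commutes with all syntactic constructors of the language. Once those are in place, the corollary drops out immediately from Lemma~\ref{lem:biRen}, so this is genuinely a short corollary rather than a second induction.
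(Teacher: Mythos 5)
Your proposal is correct and follows exactly the paper's own route: the paper likewise applies Lemma~\ref{lem:biRen} to the program $\mathcal{R}(S,\rho)$ with the inverse renaming function $\rho^{-1}$ and then collapses the double renaming via $\rho^{-1}\circ\rho=id$. Your explicit spelling-out of the three involution facts is merely a more careful version of the paper's one-line appeal to that identity.
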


\begin{proof}
By construction, \(\mathcal{R}(S,\rho))\) is a program and \(\rho^{-1}\) is a bijective function.
Due to Lemma~\ref{lem:biRen}, \(\forall (\mathcal{R}(S_0,\rho),\rho(\sigma_0))\stackrel{\mathcal{R}(op_1,\rho)}{\rightarrow}\dots\stackrel{\mathcal{R}(op_n,\rho)}{\rightarrow}(\mathcal{R}(S_n,\rho),\rho(\sigma_n))\in ex(\mathcal{R}(S,\rho))\) \(\exists (\mathcal{R}(\mathcal{R}(S_0,\rho), \rho^{-1}),\rho^{-1}(\rho(\sigma_0)))\stackrel{\mathcal{R}(\mathcal{R}(op_n,\rho),\rho^{-1})}{\rightarrow}\dots\stackrel{\mathcal{R}((\mathcal{R}(op_n,\rho),\rho^{-1})}{\rightarrow}(\mathcal{R}((\mathcal{R}(S_n,\rho),\rho^{-1}),\rho^{-1}(\rho(\sigma_n)))\in ex(\mathcal{R}(S,\rho))\).
Since \(\rho^{-1}\circ\rho=id\), there exists \((S_0,\sigma_0)\stackrel{op_1}{\rightarrow}\dots\stackrel{op_n}{\rightarrow}(S_n,\sigma_n)\in ex(S)\).
\end{proof}

\begin{lemma}\label{lem:noInf}
Let \(S_1\) and \(S_2\) be two programs and \(\rho\) a renaming function that is appropriate for renaming.
\(\forall\sigma\in\Sigma: (\mathcal{R}(S_1,\rho),\sigma)\rightarrow^*(E,\sigma_1)\in ex(\mathcal{R}(S_1,\rho))\wedge (S_2,\sigma)\rightarrow^*(E,\sigma_2)\in ex(S_2)\Rightarrow \exists (\mathcal{R}(S_1,\rho);S_2,\sigma)\rightarrow^*(E,\sigma')\in ex(\mathcal{R}(S_1,\rho);S_2): \sigma'=_{|_{\mathcal{V}(\mathcal{R}(S_1,\rho))\cup\bigcup_{v\in\mathcal{M}(S_2)}\rho(v)}}\sigma_1 \wedge \sigma'=_{|_{\mathcal{V}(S_2)\cup\mathcal{M}(S_1)}}\sigma_2\)
\end{lemma}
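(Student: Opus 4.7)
The plan is to construct the joint execution by running $\mathcal{R}(S_1,\rho)$ first, discharging the resulting $E;S_2$ prefix with a \textbf{nop} step, and then running $S_2$ from the intermediate state $\sigma_1$; the core technical task is showing that $S_2$ still terminates from $\sigma_1$ and that the resulting data state satisfies the two agreement conditions. The essential reason why this works is that the appropriateness conditions on $\rho$ prevent any write/read or write/write conflict between $\mathcal{R}(S_1,\rho)$ and $S_2$.

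First I would lift the execution of $\mathcal{R}(S_1,\rho)$ through sequential composition: by repeated application of the sequential composition rule, there is an execution $(\mathcal{R}(S_1,\rho);S_2,\sigma)\rightarrow^{*}(E;S_2,\sigma_1)$, followed by the \textbf{nop} step $(E;S_2,\sigma_1)\rightarrow(S_2,\sigma_1)$. Next, I would record two non-interference facts derived directly from the appropriateness of $\rho$. Writing $\mathcal{M}(\mathcal{R}(S_1,\rho))\subseteq\rho(\mathcal{M}(S_1))$ (a straightforward structural fact, which I would state as a short auxiliary observation), condition~(b) yields $\rho(\mathcal{M}(S_1))\cap(\mathcal{V}(S_2)\cup\mathcal{M}(S_1))=\emptyset$ and $\rho(\mathcal{V}(S_1)\cup\mathcal{M}(S_2))\cap\mathcal{M}(S_2)=\emptyset$. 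Applying Lemma~\ref{lem:change} to the first execution, $\sigma$ and $\sigma_1$ therefore agree on $\mathcal{V}(S_2)\cup\mathcal{M}(S_1)$, and in particular on $\mathcal{UB}(S_2)$.

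Now I would invoke Lemma~\ref{lem:ubBehave} on the execution $(S_2,\sigma)\rightarrow^{*}(E,\sigma_2)$, using $\sigma_1$ as the alternative start state: this yields an execution $(S_2,\sigma_1)\rightarrow^{*}(E,\sigma')$ that agrees with $(S_2,\sigma)\rightarrow^{*}(E,\sigma_2)$ on every variable $v$ such that either $\sigma(v)=\sigma_1(v)$ or $v$ is assigned during the execution of $S_2$. Concatenating with the prefix above gives the desired joint execution $(\mathcal{R}(S_1,\rho);S_2,\sigma)\rightarrow^{*}(E,\sigma')$.

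It remains to verify the two agreement conditions on $\sigma'$. For the agreement with $\sigma_1$ on $\mathcal{V}(\mathcal{R}(S_1,\rho))\cup\{\rho(v)\mid v\in\mathcal{M}(S_2)\}$: $\sigma_1$ and $\sigma'$ differ only inside $\mathcal{M}(S_2)$ by Lemma~\ref{lem:change} applied to $(S_2,\sigma_1)\rightarrow^{*}(E,\sigma')$, and condition~(b) rules out $\mathcal{M}(S_2)$ intersecting either $\rho(\mathcal{V}(S_1))=\mathcal{V}(\mathcal{R}(S_1,\rho))$ or $\rho(\mathcal{M}(S_2))$. For the agreement with $\sigma_2$ on $\mathcal{V}(S_2)\cup\mathcal{M}(S_1)$: every $v\in\mathcal{V}(S_2)$ is outside $\rho(\mathcal{M}(S_1))$ by condition~(b), so $\sigma(v)=\sigma_1(v)$ and Lemma~\ref{lem:ubBehave} yields $\sigma_2(v)=\sigma'(v)$; for $v\in\mathcal{M}(S_1)\setminus\mathcal{V}(S_2)$, condition~(b) again gives $v\notin\rho(\mathcal{M}(S_1))$, hence $\sigma(v)=\sigma_1(v)$, and moreover $v\notin\mathcal{M}(S_2)$ (same condition), so both $\sigma_2(v)=\sigma(v)$ and $\sigma'(v)=\sigma_1(v)$ by Lemma~\ref{lem:change}, and the values coincide. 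The main obstacle is bookkeeping: ensuring the auxiliary fact $\mathcal{M}(\mathcal{R}(S_1,\rho))\subseteq\rho(\mathcal{M}(S_1))$ is available and consistently applying the two halves of condition~(b) to the right variable set in each case.
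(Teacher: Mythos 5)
Your proposal is correct and follows essentially the same route as the paper's proof: lift the execution of \(\mathcal{R}(S_1,\rho)\) through the sequential composition, use Lemma~\ref{lem:change} together with \(\mathcal{M}(\mathcal{R}(S_1,\rho))=\bigcup_{v\in\mathcal{M}(S_1)}\rho(v)\) (which the paper obtains from Lemma~\ref{lem:biRen} and Corollary~\ref{cor:biRen}) and the appropriateness of \(\rho\) to get \(\sigma=_{|_{\mathcal{V}(S_2)\cup\mathcal{M}(S_1)}}\sigma_1\), rerun \(S_2\) from \(\sigma_1\) via Lemma~\ref{lem:ubBehave}, and discharge the two agreement conditions with Lemma~\ref{lem:change} and condition~(b). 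The only cosmetic difference is that your final case split on \(\mathcal{V}(S_2)\) versus \(\mathcal{M}(S_1)\setminus\mathcal{V}(S_2)\) is unnecessary, since Lemma~\ref{lem:ubBehave} covers the whole set uniformly once \(\sigma=_{|_{\mathcal{V}(S_2)\cup\mathcal{M}(S_1)}}\sigma_1\) is established.
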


\begin{proof}
First, show \(\sigma=_{|_{\mathcal{V}(S_2)\cup\mathcal{M}(S_1)}}\sigma_1\).
Due to Lemma~\ref{lem:change}, \(\sigma=_{|_{\mathcal{V}\setminus\mathcal{M}(\mathcal{R}(S_1,\rho))}}\sigma_1\).
Due to Lemma~\ref{lem:biRen}, Corollary~\ref{cor:biRen}, and the definition of modified variables, \(\mathcal{M}(\mathcal{R}(S_1,\rho))=\bigcup_{v\in\mathcal{M}(S_1)} \rho(v)\).
Since \(\rho\) is appropriate for renaming, we conclude \((\mathcal{V}(S_2)\cup\mathcal{M}(S_1))\cap\mathcal{M}(\mathcal{R}(S_1,\rho))=\emptyset\).
Hence, \(\sigma=_{|_{\mathcal{V}(S_2)\cup\mathcal{M}(S_1)}}\sigma_1\).

Due to Lemma~\ref{lem:ubBehave}, \(\exists(S_2,\sigma_1)\rightarrow^*(E,\sigma')\in ex(S_2)\) with \(\sigma'=_{|_{\mathcal{V}(S_2)\cup\mathcal{M}(S_1)}}\sigma_2\).
Due to semantics, \((\mathcal{R}(S_1,\rho);S_2,\sigma)\rightarrow^*(S_2,\sigma_1)\rightarrow^*(E,\sigma')\in ex(\mathcal{R}(S_1,\rho);S_2)\).
From Lemma~\ref{lem:change}, we conclude \(\sigma_1=_{|_{\mathcal{V}\setminus\mathcal{M}(S_2)}}\sigma'\).
Since \(\rho\) is appropriate for renaming, we conclude \((\mathcal{V}(\mathcal{R}(S_1,\rho))\cup\bigcup_{v\in\mathcal{M}(S_2)}\rho(v))\cap\mathcal{M}(S_2)=\emptyset\).
The claim follows.
\end{proof}

\begin{lemma}\label{lem:stepsNotRep}
Let \(S\) and \(S'\) be two programs, \(\gamma\) be a replacement function such that \(S'=\Gamma(S,\gamma)\), and  \(V\subseteq\mathcal{V}\) be a set of outputs.
If \(\neg\exists S^s_1, S^s_2: S^s_1\in dom(\gamma)\wedge (S=S^s_1;S^s_2 \vee S=S^s_1)\), then \(\forall \sigma, \sigma'\in\Sigma: (S,\sigma)\stackrel{op_1}{\rightarrow}(S_1,\sigma_1)\in ex(S)\wedge \sigma=_{|_{\mathcal{L}(S, V)\cup\mathcal{L}(S', V)}}\sigma' \implies \exists (S',\sigma')\stackrel{op_1}{\rightarrow}(S'_1,\sigma'_1): S'_1=\Gamma(S_1,\gamma) \wedge  \sigma_1=_{|_{\mathcal{L}(S_1, V)\cup\mathcal{L}(S'_1,V)}}\sigma'_1\).
\end{lemma}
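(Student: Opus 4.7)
I would prove the lemma by structural induction on \(S\), dispatching the grammar productions in turn. The governing intuition is that the hypothesis on the absence of \(S^s_1 \in dom(\gamma)\) at the top position guarantees that the first step of \(S\) does not ``enter'' a subprogram that is about to be substituted; hence \(\Gamma\) preserves both the outer shape and the labelled operation of that step, while the agreement on live variables forces the relevant boolean/arithmetic expressions to evaluate identically in \(\sigma\) and \(\sigma'\).

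\textbf{Atomic and branching cases.} If \(S=E\) no step exists and the implication is vacuous. For \(S = v:=_\ell aexpr;\) or \(S = \mathbf{assert}_\ell~bexpr;\), the hypothesis forces \(S \notin dom(\gamma)\), so \(S' = S\) and by inspection of the semantic rules the same \(op_1\) is enabled from \(\sigma'\). A direct reading of \(\mathcal{L}(S,V)\) shows \(\mathcal{V}(aexpr)\) (resp.\ \(\mathcal{V}(bexpr)\)) is included in \(\mathcal{L}(S,V)\), so assumption (b) on expression evaluation gives \(\sigma(aexpr)=\sigma'(aexpr)\); since \(\mathcal{L}(E,V)=V\) and only \(v\) changes, the resulting states agree on \(\mathcal{L}(S_1,V)\cup\mathcal{L}(S'_1,V)\). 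The if/while cases are analogous: the same boolean guard is evaluated, the truth value is equal in both states, and the unfolded residual program on the \(S'\)-side is exactly \(\Gamma(S_1,\gamma)\) by the defining cases of \(\Gamma\). For a parallel statement \([S'_1\|\dots\|S'_n]\), the assumptions on \(\gamma\) rule out any \(S^s_1\in dom(\gamma)\) inside a parallel block, so \(\Gamma\) acts as the identity on the whole statement; the step is mirrored one-to-one.

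\textbf{Sequential composition.} For \(S = S_a;S_b\) the hypothesis yields \(S_a \notin dom(\gamma)\) and hence \(S' = \Gamma(S_a,\gamma);\Gamma(S_b,\gamma)\). If \(S_a = E\) the step is a \textbf{nop} that both sides perform identically. Otherwise the step comes from a step of \(S_a\) via the sequential composition rule, and I would invoke the induction hypothesis on \(S_a\) (taking the restriction of \(\gamma\) to subprograms of \(S_a\)) to obtain a matching step \((\Gamma(S_a,\gamma),\sigma')\xrightarrow{op_1}(S''_1,\sigma'_1)\) with \(S''_1 = \Gamma(S^a_1,\gamma)\) and the required live-variable agreement; composing with \(\Gamma(S_b,\gamma)\) on the right gives the desired step of \(S'\). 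Along the way I would use the syntactic liftings \(\mathcal{L}(S_a,V)\subseteq\mathcal{L}(S,V)\) (and the analogous inclusion for \(S'\)) so that the live-variable premise propagates down to \(S_a\).

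\textbf{Anticipated obstacles.} The main delicacy lies in the sequential case: to apply the induction hypothesis to \(S_a\) I must argue that \(S_a\) still satisfies the top-level non-replacement condition of the lemma, which is straightforward if sequences are read right-associatively (so that \(S_a\) is itself atomic, conditional, loop, or parallel) but needs extra book-keeping otherwise. The second subtlety is the liveness bookkeeping, since \(\mathcal{L}(\cdot,V)\) is a syntactic over-approximation and the reduction \(S_1\mapsto S'_1\) can strictly shrink the set of reachable continuations; I would verify each case that \(\mathcal{L}(S_1,V)\cup\mathcal{L}(S'_1,V)\) is contained in the union of live-variable contributions coming from \(\mathcal{L}(S,V)\cup\mathcal{L}(S',V)\), the assigned variable (for assignments), and the newly read variables, and close the argument by the usual update-lemma reasoning already available from Lemmas~\ref{lem:ubBehave} and~\ref{lem:change}.
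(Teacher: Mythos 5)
Your plan follows the same route as the paper's proof: a case analysis on the top-level form of \(S\), using the hypothesis that no member of \(dom(\gamma)\) sits at the head position to conclude that the first step of \(S'\) mirrors that of \(S\) with the identical operation, and then transferring state agreement via the liveness sets. The one organizational difference is that you phrase the sequential-composition case as a structural induction on the head \(S_a\), whereas the paper avoids induction and instead matches the single step of the head directly via Lemma~\ref{lem:ubBehave} (using \(\mathcal{UB}(S)\cup\mathcal{UB}(S')\subseteq\mathcal{L}(S,V)\cup\mathcal{L}(S',V)\)), doing the liveness transfer afterwards. Be careful with the inclusion \(\mathcal{L}(S_a,V)\subseteq\mathcal{L}(S_a;S_b,V)\) that you invoke to push the premise into the induction hypothesis: it is false in general, since a variable of \(V\) untouched by \(S_a\) but overwritten by \(S_b\) before any use lies in \(\mathcal{L}(S_a,V)\) but not in \(\mathcal{L}(S_a;S_b,V)\); so the induction hypothesis with output set \(V\) cannot be instantiated at \(S_a\) as stated. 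Your ``anticipated obstacles'' paragraph already contains the correct repair --- establish the matching step from \(\mathcal{UB}\)-agreement alone and then verify that \(\mathcal{L}(S_1,V)\cup\mathcal{L}(S'_1,V)\) is covered by the initially agreed variables together with the variable just assigned --- which is exactly the argument the paper carries out, so with that substitution in the sequential case your proof goes through.
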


\begin{proof}
Let \(S\) and \(S'\) be two programs, \(\gamma\) be a replacement function such that \(S'=\Gamma(S,\gamma)\) and \(\neg\exists S^s_1, S^s_2: S^s_1\in dom(\gamma)\wedge  (S=S^s_1;S^s_2 \vee S=S^s_1)\),  \(V\subseteq\mathcal{V}\) be a set of outputs and \(\sigma, \sigma'\in\Sigma\) with \(\sigma=_{|_{\mathcal{L}(S, V)\cup\mathcal{L}(S',V)}}\sigma'\).

Assume \(p=(S,\sigma)\stackrel{op_1}{\rightarrow}(S_1,\sigma_1)\in ex(S)\).
By definition, \(\mathcal{UB}(S)\cup\mathcal{UB}(S')\subseteq \mathcal{L}(S, V)\cup\mathcal{L}(S',V)\).
Consider two cases.

Case 1 (\(S_1=E\)):
Due to the semantics, we conclude that either \(S=E;E\), or \(S\) is  not a sequential composition, but a statement.
If \(S\) is an assignment, an assertion or a parallel statement, we conclude from  \(S\notin dom(\gamma)\), \(S'=\Gamma(S,\gamma)\), and \(\gamma\) does not replace statements in parallel statements that \(S'=S\). 
Similarly, if \(S=E;E\), also \(S=S'\). 
Due to Lemma~\ref{lem:ubBehave}, \(\exists (S',\sigma')\stackrel{op_1}{\rightarrow}(E,\sigma'_1)\) and \(\sigma_1=_{|_{\{v\in\mathcal{V}\mid \sigma(v)=\sigma'(v)\vee \exists 1\leq i\leq n: op_i\equiv v:=expr;\}}} \sigma'_1\).
Due to the definition of live variables, we conclude that \(\sigma_1=_{|_{\mathcal{L}(E, V)\cup\mathcal{L}(E, V)}}\sigma'_1\).
If \(S\) is an if- or while-statement, we conclude from  \(S\notin dom(\gamma)\) and \(S'=\Gamma(S,\gamma)\) that \(S'\) is an if-/while-statement and the condition is the same.
Due to semantics, definition of live variable analysis, and \(\sigma=_{|_{\mathcal{L}(S, V)\cup\mathcal{L}(S', V)}}\sigma'\), we then conclude that \((S',\sigma')\stackrel{op_1}{\rightarrow}(E,\sigma')\in ex(S')\) and \(\sigma_1=\sigma\).
Due to the definition of live variables, we conclude that \(\sigma_1=_{|_{\mathcal{L}(E,  V)\cup\mathcal{L}(E, V)}}\sigma'\).
By definition, \(E=\Gamma(E,\gamma)\).

Case 2 (\(S_1\neq E\)):
Since replacements do not occur in parallel statements and \(\neg\exists S^s_1, S^s_2: S^s_1\in dom(\gamma)\wedge  (S=S^s_1;S^s_2 \vee S=S^s_1)\), we conclude that \(\exists S^s_1, S^s_2, S^s_3, S^s_4:  S=S^s_1;S^s_2 \wedge S'=S^s_3;S^s_4\wedge S^s_3=\Gamma(S^s_1, \gamma) \wedge S^s_4=\Gamma(S^s_2, \gamma)\vee S=S^s_1\wedge S'=S^s_3\wedge S^s_3=\Gamma(S^s_1, \gamma)\) and either \(S^s_1=S^s_3\) or \(S^s_1\) and \(S^s_3\) are either both if- or both-while statements with the same condition and the if/else-body, the loop body of $S^s_3$ is a replacement of the body of $S^s_1$. 
First, consider the first case (\(S^s_1=S^s_3\)).
Due to semantics, either (1)~\(S_1=E;S^s_2\) and \((S^s_1, \sigma)\stackrel{op_1}{\rightarrow}(E,\sigma_1)\), (2)~\(S_1=S^s_2\wedge S^s_1=E\), \(\sigma=\sigma_1\), and \((S, \sigma)\stackrel{\textbf{nop}}{\rightarrow}(S_1,\sigma_1)\), or (3)~\(S_1=S^s_5;S^s_2\) and \((S^s_1, \sigma)\stackrel{op_1}{\rightarrow}(S^s_5,\sigma_1)\).
Due to Lemma~\ref{lem:ubBehave}, in case~(1) \(\exists (S^s_1,\sigma')\stackrel{op_1}{\rightarrow}(E,\sigma'_1)\) and \(\sigma'=\sigma'_1\), and in case~(3)  \(\exists (S^s_1,\sigma')\stackrel{op_1}{\rightarrow}(S^s_5,\sigma'_1)\).
Furthermore, \(\sigma_1=_{|_{\{v\in\mathcal{V}\mid \sigma(v)=\sigma'(v)\vee op_1\equiv v:=expr;\}}} \sigma'_1\).
Due to semantics, in case~(1) \(\exists (S',\sigma')\stackrel{op_1}{\rightarrow}(E;S^s_4,\sigma'_1)\), in case~(2) \((S', \sigma')\stackrel{\textbf{nop}}{\rightarrow}(S^s_4,\sigma')\), and in case~(3)
  \(\exists (S',\sigma')\stackrel{op_1}{\rightarrow}(S^s_5;S^s_4,\sigma'_1)\).
Since \(S^s_1=S^s_3=\Gamma(S^s_3,\gamma)\), \(\gamma\) is only defined for subprograms of \(S\) and statements (thus, subprograms) can be uniquely identified via labels, we get \(\Gamma(S^s_5,\gamma)=S^s_5\).
Hence, \(\Gamma(S^s_5;S^s_2,\gamma)=S^s_5;\Gamma(S^s_2,\gamma)=S^s_5;S^s_4\).
Similarly, \(\Gamma(E;S^s_2,\gamma)=E;\Gamma(S^s_2,\gamma)=S^s_5;S^s_4\).
Moreover, \(\sigma_1=_{|_{\{v\in\mathcal{V}\mid \sigma(v)=\sigma'(v)\vee op_1\equiv v:=expr;\}}} \sigma'_1\)., \(\sigma=_{|_{\mathcal{L}(S, V)\cup\mathcal{L}(S',V)}}\sigma'\), and the definition of live variable analyses let us conclude that \(\sigma_1=_{|_{\mathcal{L}(S_1, V)\cup\mathcal{L}(S'_1, V)}}\sigma'_1\). 

Second, consider that (\(S^s_1\neq S^s_3\)).
We know that \(S^s_1\) and \(S^s_3\) are either both if- or both-while statements with the same condition and the if/else-body, the loop body of $S^s_3$ is a replacement of the body of $S^s_1$.
Due to semantics, definition of live variable analysis, \(\sigma=_{|_{\mathcal{L}(S, V)\cup\mathcal{L}(S',V)}}\sigma'\), and the replacement function, we then conclude that \(\sigma=\sigma_1\) and \(\sigma'=\sigma'_1\) and either \(S=S^s_1\) and exists \((S',\sigma')\stackrel{op_1}{\rightarrow}(S'_1,\sigma_1)\in ex(S')\) with \(S'_1=\Gamma(S_1,\gamma)\) (due to \(S; \textbf{while}~expr~\textbf{do}~S\) is no subprogram of S) or \(S=S^s_1;S^s_2\) and \((S',\sigma')\stackrel{op_1}{\rightarrow}(S'_1,\sigma_1)\in ex(S')\) with \(S'_1=\Gamma(S_1,\gamma)\).
Due to the definition of live variables, we conclude that \(\sigma_1=_{|_{\mathcal{L}(S_1, V)\cup\mathcal{L}(S'_1, V)}}\sigma'_1\).
\end{proof}

\begin{corollary}\label{cor:stepsNotRep}
Let \(S\) and \(S'\) be two programs, \(\gamma\) be a replacement function such that \(S'=\Gamma(S,\gamma)\), and  \(V\subseteq\mathcal{V}\) be a set of outputs.
For all \((S_0,\sigma_0)\stackrel{op_1}{\rightarrow}\dots\stackrel{op_n}{\rightarrow}(S_n,\sigma_n)\in ex(S)\), if for all \(0\leq i<n\)  not exists \(S^{s1}_i, S^{s2}_i \) such that \((S_i=S^{s1}_i;S^{s2}_i \vee S_i=S^{s1}_i)\) and  \(S^{s1}_i \in dom(\gamma)\), then \(\forall \sigma'_0\in\Sigma: \sigma_0=_{|_{\mathcal{L}(S, V)\cup\mathcal{L}(S', V)}}\sigma'_0 \implies \exists (\Gamma(S_0,\gamma),\sigma'_0)\stackrel{op_1}{\rightarrow}\dots\stackrel{op_n}{\rightarrow}(\Gamma(S_n,\gamma),\sigma'_n)\in ex(S'): \forall 0\leq i\leq n: \sigma_i=_{|_{\mathcal{L}(S_i, V)\cup\mathcal{L}(\Gamma(S_i,\gamma),V)}}\sigma'_i\).
\end{corollary}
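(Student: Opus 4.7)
My plan is to prove Corollary~\ref{cor:stepsNotRep} by straightforward induction on the length $n$ of the execution, using Lemma~\ref{lem:stepsNotRep} to handle one step at a time.

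For the base case $n=0$ the execution is just the trivial configuration $(S_0,\sigma_0)=(S,\sigma_0)$, and I choose $(\Gamma(S_0,\gamma),\sigma'_0)=(S',\sigma'_0)\in ex(S')$. The only required state equality (for $i=0$) is $\sigma_0=_{|_{\mathcal{L}(S_0,V)\cup\mathcal{L}(\Gamma(S_0,\gamma),V)}}\sigma'_0$, which coincides with the assumption $\sigma_0=_{|_{\mathcal{L}(S,V)\cup\mathcal{L}(S',V)}}\sigma'_0$.

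For the inductive step, I split the length-$n$ execution into its first transition $(S,\sigma_0)\stackrel{op_1}{\rightarrow}(S_1,\sigma_1)$ and the residual execution $(S_1,\sigma_1)\stackrel{op_2}{\rightarrow}\dots\stackrel{op_n}{\rightarrow}(S_n,\sigma_n)\in ex(S_1)$ of length $n-1$. The hypothesis at $i=0$ states that $S$ itself does not begin with an element of $dom(\gamma)$, so Lemma~\ref{lem:stepsNotRep} (instantiated with $S$, $S'=\Gamma(S,\gamma)$, $\sigma_0,\sigma'_0$) yields a matching transition $(S',\sigma'_0)\stackrel{op_1}{\rightarrow}(S'_1,\sigma'_1)$ with $S'_1=\Gamma(S_1,\gamma)$ and $\sigma_1=_{|_{\mathcal{L}(S_1,V)\cup\mathcal{L}(S'_1,V)}}\sigma'_1$. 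This provides both the first step of the target execution and the premise needed to invoke the induction hypothesis on the residual execution: the no-replacement condition for indices $0\leq i<n-1$ of the residual execution coincides with the condition for $1\leq i<n$ in the original and is therefore part of the hypothesis, and the required state equality $\sigma_1=_{|_{\mathcal{L}(S_1,V)\cup\mathcal{L}(\Gamma(S_1,\gamma),V)}}\sigma'_1$ is exactly what Lemma~\ref{lem:stepsNotRep} just gave. The IH then supplies $(\Gamma(S_1,\gamma),\sigma'_1)\stackrel{op_2}{\rightarrow}\dots\stackrel{op_n}{\rightarrow}(\Gamma(S_n,\gamma),\sigma'_n)\in ex(\Gamma(S_1,\gamma))$ together with $\sigma_i=_{|_{\mathcal{L}(S_i,V)\cup\mathcal{L}(\Gamma(S_i,\gamma),V)}}\sigma'_i$ for all $1\leq i\leq n$. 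Prepending the first transition and appealing to the inductive definition of $ex(S')$ yields the desired execution, and together with the $i=0$ equality from the assumption all $n+1$ state equalities are established.

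I expect the main obstacle to be purely bookkeeping around $\Gamma$: I must argue that $\Gamma(S_i,\gamma)$ is well-defined even though intermediate $S_i$ may contain newly generated sequential compositions from while-unfolding (e.g., $S_{body};\mathbf{while}_\ell~bexpr~\mathbf{do}~S_{body}$), and that the transition $(S',\sigma'_0)\stackrel{op_1}{\rightarrow}(\Gamma(S_1,\gamma),\sigma'_1)$ composes cleanly with the IH's execution to form an element of $ex(S')$ rather than just $ex(\Gamma(S_1,\gamma))$. Both are handled by the convention that elements of $dom(\gamma)$ are identified via the unique labels preserved by the operational semantics, and by the inductive definition of $ex(\cdot)$, which treats any valid single step followed by a valid continuation as an execution of the original program.
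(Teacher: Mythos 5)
Your proof is correct and follows essentially the same route as the paper's: induction on the length of the execution, peeling off the first transition via Lemma~\ref{lem:stepsNotRep} and applying the induction hypothesis to the residual execution starting at \((S_1,\sigma_1)\). The bookkeeping you flag (well-definedness of \(\Gamma(S_i,\gamma)\) on intermediate configurations and recomposition of the step with the residual execution into an element of \(ex(S')\)) is glossed over in the paper's version but resolved exactly as you describe.
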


\begin{proof}
Proof by induction.

\textbf{Base case} (i=0):
By definition \((S',\sigma)=(\Gamma(S,\gamma),\sigma)\in ex(S')\) for arbitrary \(\sigma\in\Sigma\) (including all \(\sigma'_0\) with \(\sigma_0=_{|_{\mathcal{L}(S,V)\cup\mathcal{L}(S',V)}}\sigma'_0\)).

\textbf{Step case} (\(n-1\rightarrow n\)):
Due to Lemma~\ref{lem:stepsNotRep}, there exists \((\Gamma(S,\gamma),\sigma')\stackrel{op_1}{\rightarrow}(\Gamma(S_1,\gamma),\sigma'_1)\) with \(\sigma_1=_{|_{\mathcal{L}(S_1, V)\cup\mathcal{L}(S'_1,V)}}\sigma'_1\).
By induction,  \((\Gamma(S_1,\gamma),\sigma'_1)\stackrel{op_2}{\rightarrow}\dots\stackrel{op_n}{\rightarrow}(\Gamma(S_n,\gamma),\sigma'_n)\in ex(S') \wedge \forall 1\leq i\leq n: \sigma_i=_{|_{\mathcal{L}(S_i, V)\cup\mathcal{L}(\Gamma(S_i,\gamma),V)}}\sigma'_i\).
By definition, the induction hypothesis follows.
\end{proof}

\subsection{Auxiliary Lemmas for Soundness of Initialization and Equalization Part}

\begin{lemma}\label{lem:init}
Let \(\rho\) be a renaming function and \(V\subseteq\mathcal{V}\) a subset of variables such that \(\forall v\in V: \rho(v)=v\vee \rho(v)\notin V\).
Then, \(\forall \sigma\in\Sigma: (init(\rho,\mathrm{toSeq(V)}), \sigma)\rightarrow^* (E,\sigma')\implies \forall v\in V:\sigma'(v)=\sigma'(\rho(v))=\sigma(\rho(v))\).
\end{lemma}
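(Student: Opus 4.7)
The plan is to proceed by induction on the length \(n\) of the sequence \(\mathrm{toSeq}(V)\), writing \(\mathrm{toSeq}(V)=\langle v_1\rangle\circ V'\) in the inductive step where \(V'\) enumerates \(V\setminus\{v_1\}\). The base case \(n=0\) is immediate: \(init(\rho,\langle\rangle)=E\), so \(\sigma'=\sigma\) and the universal claim is vacuously true on \(V=\emptyset\).

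For the inductive step, the first semantic step of \(init(\rho,\mathrm{toSeq}(V))\) executes \(v_1:=\rho(v_1);\) yielding the intermediate state \(\sigma_1=\sigma[v_1:=\sigma(\rho(v_1))]\), after which the remaining program \(init(\rho,V')\) runs from \(\sigma_1\) to \(\sigma'\). Before invoking the induction hypothesis, I would record the key consequence of the assumption \(\forall v\in V:\rho(v)=v\vee\rho(v)\notin V\): for every \(w\in V\setminus\{v_1\}\) we have \(\rho(w)\neq v_1\), because \(\rho(w)=v_1\in V\) would force \(\rho(w)=w\), i.e.\ \(w=v_1\). Hence the single write to \(v_1\) in the first step cannot interfere with any value \(\sigma_1(\rho(w))\) used later, and subsequent assignments (all of the form \(w:=\rho(w)\) with \(w\neq v_1\)) never overwrite \(v_1\).

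Now apply the induction hypothesis to \(V\setminus\{v_1\}\), whose condition is inherited since \(\rho(v)\notin V\) implies \(\rho(v)\notin V\setminus\{v_1\}\): for every \(v\in V\setminus\{v_1\}\), \(\sigma'(v)=\sigma'(\rho(v))=\sigma_1(\rho(v))\); by the key observation \(\sigma_1(\rho(v))=\sigma(\rho(v))\), giving the desired equality. It remains to handle \(v=v_1\). Split into the two cases permitted by the hypothesis: if \(\rho(v_1)=v_1\), then \(\sigma_1=\sigma\) and \(v_1\) is never reassigned, so \(\sigma'(v_1)=\sigma(v_1)=\sigma(\rho(v_1))\); if \(\rho(v_1)\notin V\), then \(\rho(v_1)\) is not touched by any assignment in the whole sequence, so \(\sigma'(\rho(v_1))=\sigma(\rho(v_1))\), and since \(v_1\) is only written at the first step, \(\sigma'(v_1)=\sigma_1(v_1)=\sigma(\rho(v_1))\) as well.

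The main obstacle is purely bookkeeping around the renaming condition: one must be careful to verify that (a) no later step in the sequence inadvertently reads a freshly overwritten copy of \(v_1\) through some \(\rho(w)\), and (b) no later step writes to either \(v_1\) or \(\rho(v_1)\). Both are ruled out precisely by the disjunctive hypothesis on \(\rho\), which is why the lemma is stated with exactly that condition.
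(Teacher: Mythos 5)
Your proof is correct and takes essentially the same route as the paper's: induction that peels off the first assignment \(v_1:=\rho(v_1)\) and uses the disjunctive hypothesis on \(\rho\) to show that the remaining initializations neither read nor write \(v_1\) or \(\rho(v_1)\). The only cosmetic difference is that the paper invokes Lemma~\ref{lem:change} to carry the values of \(v_1\) and \(\rho(v_1)\) through the tail of the sequence, whereas you argue directly that no later assignment targets them; the substance is identical, and your explicit justification that \(\rho(w)\neq v_1\) for \(w\in V\setminus\{v_1\}\) is a detail the paper leaves implicit.
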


\begin{proof}
Proof by induction on the cardinality of V.

\textbf{Base case (\(|V|=0\))}
\(|V|=0\) implies \(V=\emptyset\), the hypothesis trivially holds.

\textbf{Step case (\(|V|=n, n>0\))}
Let \(\mathrm{toSeq}(V)=v_1,\dots, v_n\) and \(\sigma\in\Sigma\) be arbitrary.
Then, \(init(\rho,\mathrm{toSeq}(V)=init(\rho,v_1,\dots, v_n))=v:=\rho(v);init(\rho,v_2,\dots, v_n)=v:=\rho(v);init(\rho,\mathrm{toSeq}(V\setminus\{v_1\}))\).
Due to semantics, \((init(\rho,\mathrm{toSeq(V)}), \sigma)\rightarrow^* (E,\sigma')\) implies that \((init(\rho,\mathrm{toSeq(V)}), \sigma)\stackrel{v_1:=\rho(v_1);}{\rightarrow}(init(\rho,\mathrm{toSeq}(V\setminus\{v_1\})),\) \(\sigma[v_1:=\sigma(\rho(v_1))])\rightarrow^* (E,\sigma')\).
Thus, we get \(\sigma[v_1:=\sigma(\rho(v_1))](v_1)=\sigma(\rho(v_1))=\) \mbox{\(\sigma[v_1:=\sigma(\rho(v_1))](\rho(v_1))\)}.
By induction, \(\forall v\in V\setminus\{v_1\}:\sigma'(v)=\sigma'(\rho(v))=\sigma(\rho(v))\).
Since \(\forall v\in V: \rho(v)=v\vee \rho(v)\notin V\) and  \(\mathcal{M}(\rho,\mathrm{toSeq}(V\setminus\{v_1\}))\subseteq\{v_2,\dots, v_n\}\), we get \(\mathcal{M}(\rho,\mathrm{toSeq}(V\setminus\{v_1\}))\cap\{v_1, \rho(v_1)\}=\emptyset\).
Due to Lemma~\ref{lem:change}, \(\sigma[v:=\sigma(\rho(v))](v_1)=\sigma'(v_1)\) and \(\sigma[v:=\sigma(\rho(v))](\rho(v_1))=\sigma'(\rho(v_1))\).
Hence, \(\sigma'(v_1)=\sigma'(\rho(v))=\sigma(\rho(v))\).
The induction hypothesis follows.
\end{proof}

\begin{lemma}\label{lem:equiv}
Let \(\rho\) be a renaming function and \(V\subseteq\mathcal{V}\) a subset of variables.
Then, \(\forall \sigma\in\Sigma: (equal(\rho,\mathrm{toSeq(V)}, \sigma)\rightarrow^* (E,\sigma')\implies \forall v\in V:\sigma(v)=\sigma(\rho(v))\).
\end{lemma}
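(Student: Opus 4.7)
The plan is to prove the lemma by induction on the cardinality of $V$. The key observation is that $equal(\rho, \mathrm{toSeq}(V))$ is a sequence of $\mathbf{assert}$ statements, and by the rule for assertions in Fig.~\ref{fig:semRules} none of them modify the data state. Hence, whenever the equalization part terminates normally (i.e., reaches $(E,\sigma')$ without violating an assertion), every single assertion along the way must have been evaluated in the initial state $\sigma$ itself.

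For the base case $|V|=0$, the program $equal(\rho,\langle\rangle)$ is $E$, so the universally quantified statement is vacuous. For the step case, write $\mathrm{toSeq}(V) = \langle v\rangle \circ V'$, so that
\[ equal(\rho,\mathrm{toSeq}(V)) \;=\; \mathbf{assert}~\rho(v)==v;\; equal(\rho,V'). \]
Given any execution reaching $(E,\sigma')$, the semantics forces its first non-$\mathbf{nop}$ step to be $(\mathbf{assert}~\rho(v)==v;\; equal(\rho,V'),\sigma) \xrightarrow{\rho(v)==v} (E;\,equal(\rho,V'),\sigma)$, which requires $\sigma(\rho(v)==v)=\mathrm{true}$. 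By the stated assumption on the semantics of equality expressions ($\sigma(v==v')\Rightarrow \sigma(v)=\sigma(v')$), this yields $\sigma(\rho(v))=\sigma(v)$. After a $\mathbf{nop}$ step, we arrive at $(equal(\rho,V'),\sigma) \rightarrow^* (E,\sigma')$, still with the same data state $\sigma$. Since $V'=\mathrm{toSeq}(V\setminus\{v\})$, the induction hypothesis applies and gives $\sigma(w)=\sigma(\rho(w))$ for all $w\in V\setminus\{v\}$. Combining this with the equality just established for $v$ discharges the hypothesis.

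There is no real obstacle here: unlike Lemma~\ref{lem:init}, we do not need any condition of the form $\rho(v)=v \vee \rho(v)\notin V$, precisely because assertions are state-preserving and we quantify over $\sigma$ rather than $\sigma'$. The only subtlety is to be explicit about which state each $\mathbf{assert}~\rho(v_i)==v_i$ is evaluated in, but since the state never changes throughout the execution of $equal$, every assertion is evaluated in the same $\sigma$, making the inductive step essentially immediate from the expression-evaluation assumption.
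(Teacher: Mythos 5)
Your proof is correct and follows essentially the same route as the paper's: induction on \(|V|\), using that the first assert must succeed in \(\sigma\) (yielding \(\sigma(\rho(v))=\sigma(v)\) via the equality-evaluation assumption) and that assertions and \textbf{nop} steps leave the data state unchanged, so the induction hypothesis applies to the tail in the same state \(\sigma\). Your explicit remark that no side condition on \(\rho\) is needed here, in contrast to the initialization lemma, is accurate and matches the paper.
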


\begin{proof}
Proof by induction on the cardinality of V.

\textbf{Base case (\(|V|=0\))}
\(|V|=0\) implies \(V=\emptyset\), the hypothesis trivially holds.

\textbf{Step case (\(|V|=n, n>0\))}
Let \(\mathrm{toSeq}(V)=v_1,\dots, v_n\) and \(\sigma\in\Sigma\) be arbitrary.
From definition,  we conclude that \(equal(\rho,\mathrm{toSeq}(V)=equal(\rho,v_1,\dots, v_n)=\textbf{assert}~\rho(v_1)==v_1;equal(\rho,v_2,\dots, v_n)=\textbf{assert}~\rho(v_1)==v_1;equal(\rho,\mathrm{toSeq}(V\setminus\{v_1\}))\).
Due to semantics and \((equal(\rho,\mathrm{toSeq(V)}, \sigma)\rightarrow^* (E,\sigma')\), we infer that \((equal(\rho,\mathrm{toSeq(V)}, \sigma)\stackrel{\rho(v_1)==v_1}{\rightarrow}(equal(\rho,\mathrm{toSeq}(V\setminus\{v_1\})),\sigma)\rightarrow^* (E,\sigma')\) and \(\sigma(\rho(v_1)==v_1)=true\).
Thus, \(\sigma(\rho(v_1))=\sigma(v_1)\).
By induction, \(\forall v\in V\setminus\{v_1\}:\sigma(v)=\sigma(\rho(v))\).
The induction hypothesis follows.
\end{proof}

\subsection{Proof of Theorem~\ref{theo:EquivTotal}}\label{ssec:proof:theo:EquivTotal}

\begin{theorem1}
Let \(S_1\) and \(S_2\) be two (sub)programs. 
Given overapproximation 
 \(\mathcal{UB}(S_1)\subseteq U_1\subseteq \mathcal{V}(S_1)\) and \(\mathcal{UB}(S_2)\subseteq U_2\subseteq \mathcal{V}(S_2)\) of the variables used before definition and  overapproximations \(\mathcal{M}(S_1)\subseteq M_1\subseteq \mathcal{V}(S_1)\) and \(\mathcal{M}(S_2)\subseteq M_2\subseteq \mathcal{V}(S_2)\) of the modified variables, a renaming function \(\rho_\mathrm{switch}\), and \(C\subseteq M_1\cup M_2\).

If all \((eq\_task(S_1, S_2, \rho_\mathrm{switch},(U_1\cap U_2)\cap(M_1\cup M_2) ,C),\sigma)\rightarrow^*(S',\sigma')\in ex(eq\_task(S_1, S_2, \rho_\mathrm{switch}, (U_1\cap U_2)\cap(M_1\cup M_2) ,C))\) do not violate an assertion, then \(S_1\equiv_{\mathcal{V}\setminus((\mathcal{M}(S_1)\cup\mathcal{M}(S_2))\setminus C)} S_2\).

\end{theorem1}

\begin{proof}
Let \(\sigma\in\Sigma\), \((S_1,\sigma)\rightarrow^*(E,\sigma')\in ex(S_1)\), \((S_2,\sigma)\rightarrow^*(E,\sigma'')\in ex(S_2)\), and \(u\in \mathcal{V}\setminus((\mathcal{M}(S_1)\cup\mathcal{M}(S_2))\setminus C)\) be arbitrary.

Consider two cases.
First, consider \(u\in \mathcal{V}\setminus(\mathcal{M}(S_1)\cup\mathcal{M}(S_2))\).
Due to Lemma~\ref{lem:change}, \(\sigma(u)=\sigma'(u)\) and \(\sigma(u)=\sigma''(u)\).
Hence, \(\sigma'(u)=\sigma''(u)\).

Second, consider \(u\in (\mathcal{M}(S_1)\cup\mathcal{M}(S_2))\cap C\).
Let \(\sigma_r\in\Sigma\) with \(\sigma_r=_{|_{\mathcal{V}(S_2)\cup\mathcal{M}(S_1)}}\sigma\) and \(\sigma_r=_{|_{\mathcal{V}(\mathcal{R}(S_1,\rho_\mathrm{switch}))\cup\bigcup_{v\in\mathcal{M}(S_2)}\rho_\mathrm{switch}(v)}}\rho_\mathrm{switch}(\sigma)\).
Due to definition of \(\rho_\mathrm{switch}\) such a data state exists. 

Due to semantics and Lemma~\ref{lem:init}, \(\exists(init(\rho_\mathrm{switch},(U_1\cap U_2)\cap(M_1\cup M_2)), \sigma_r)\rightarrow^* (E,\sigma_\mathrm{init})\) with \(\forall v\in (U_1\cap U_2)\cap(M_1\cup M_2):\sigma_\mathrm{init}(v)=\sigma_\mathrm{init}(\rho_\mathrm{switch}(v))=\sigma_r(\rho_\mathrm{switch}(v))\).
By construction of \(\sigma_r\) and \(\rho_\mathrm{switch}\), we further get \(\forall v\in (U_1\cap U_2)\cap(M_1\cup M_2): \sigma_r(\rho_\mathrm{switch}(v))=\rho_\mathrm{switch}(\sigma)(\rho_\mathrm{switch}(v))=\sigma(\rho_\mathrm{switch}^{-1}(\rho_\mathrm{switch}(v)))=\sigma(v)\).
Due to Lemma~\ref{lem:change} and \(\mathcal{M}((init(\rho_\mathrm{switch},(U_1\cap U_2)\cap(M_1\cup M_2)))\subseteq (U_1\cap U_2)\cap(M_1\cup M_2)\), we infer that \(\forall v\in \mathcal{V}\setminus((U_1\cap U_2)\cap(M_1\cup M_2)): \sigma_\mathrm{init}(v)=\sigma_r(v)\).
Hence, \(\sigma_r=\sigma_\mathrm{init}\).

Due to Lemma~\ref{lem:biRen}, there exists \((\mathcal{R}(S_1,\rho_\mathrm{switch}),\rho_\mathrm{switch}(\sigma))\rightarrow^*(E,\rho_\mathrm{switch}(\sigma'))\in ex(\mathcal{R}(S_1,\rho_\mathrm{switch}))\).
By definition of \(\sigma_r\),  and Lemma~\ref{lem:ubBehave}, there exists \((S_2,\sigma_r)\rightarrow^*(E,\sigma''_r)\in ex(S_2)\) with \(\sigma''=_{|_{\mathcal{V}(S_2)\cup\mathcal{M}(S_1)}} \sigma''_r\) and \(((\mathcal{R}(S_1,\rho_\mathrm{switch}),\sigma_r)\rightarrow^*(E,\sigma'_r)\in ex((\mathcal{R}(S_1,\rho_\mathrm{switch}))\) with \(\rho_\mathrm{switch}(\sigma')=_{|_{\mathcal{V}(\mathcal{R}(S_1,\rho_\mathrm{switch}))\cup\bigcup_{v\in\mathcal{M}(S_2)}\rho_\mathrm{switch}(v)}} \sigma'_r\).
Due to Lemma~\ref{lem:noInf}, there exists \((\mathcal{R}(S_1,\rho_\mathrm{switch});S_2,\sigma_r)\rightarrow^*(E,\sigma_c)\in ex(\mathcal{R}(S_1,\rho_\mathrm{switch});S_2)\) with  \(\sigma_c=_{|_{\mathcal{V}(\mathcal{R}(S_1,\rho_\mathrm{switch}))\cup\bigcup_{v\in\mathcal{M}(S_2)}\rho_\mathrm{switch}(v)}}\sigma'_r\) and \(\sigma_c=_{|_{\mathcal{V}(S_2)\cup\mathcal{M}(S_1)}}\sigma''_r\).

Due to semantics ,\(\sigma_r=\sigma_\mathrm{init}\), and all \((eq\_task(S_1, S_2, \rho_\mathrm{switch}, (U_1\cap U_2)\cap(M_1\cup M_2) ,C),\sigma)\rightarrow^*(S',\sigma')\in ex(eq\_task(S_1, S_2, \rho_\mathrm{switch}, (U_1\cap U_2)\cap(M_1\cup M_2) ,C))\) do not violate assertions, there exists \((equal(\rho_\mathrm{switch}, C), \sigma_c)\rightarrow^*(E,\sigma'_c)\).
Due to Lemma~\ref{lem:equiv}, we infer for all \(v\in C\) that \(\sigma_c(v)=\sigma_c(\rho_\mathrm{switch}(v))\).
Since \(u\in (\mathcal{M}(S_1)\cup\mathcal{M}(S_2))\cap C\), we conclude that \(\sigma_c(\rho_\mathrm{switch}(u))=\sigma_c(u)=\sigma''_r(u)=\sigma''(u)\) and \(\sigma_c(\rho_\mathrm{switch}(u))=\sigma'_r(\rho_\mathrm{switch}(u))=\rho_\mathrm{switch}(\sigma')(\rho_\mathrm{switch}(u))=\sigma'(\rho_\mathrm{switch}^{-1}(\rho_\mathrm{switch}(u)))=\sigma'(u)\).
\end{proof}

\subsection{Proof of Theorem~\ref{theo:EquivNoApprox}}\label{ssec:proof:theo:EquivNoApprox}

\begin{lemma}\label{lem:equivCheck}
Let \(S_1\) and \(S_2\) be two (sub)programs, \(\mathcal{M}(S_1)\subseteq M_1\subseteq \mathcal{V}(S_1)\) and \(\mathcal{M}(S_2)\subseteq M_2\subseteq \mathcal{V}(S_2)\) overapproximations of the modified variables, \(\rho_\mathrm{switch}\) a renaming function , \(I=(\mathcal{UB}(S_1)\cap \mathcal{UB}(S_2))\cap(M_1\cup M_2)\), and \(C\subseteq M_1\cup M_2\).
If all \((eq\_task(S_1, S_2, \rho_\mathrm{switch},I ,C),\sigma)\rightarrow^*(S',\sigma')\in ex(eq\_task(S_1, S_2, \rho_\mathrm{switch}, I ,C))\) do not violate an assertion, then \(\forall \sigma_1, \sigma_2\in\Sigma, V\subseteq\mathcal{V}, v\in V\setminus((\mathcal{M}(S_1)\cup\mathcal{M}(S_2))\setminus C): p_1=(S_1,\sigma_1)\rightarrow^*(E,\sigma'_1)\in ex(S_1)\wedge p_2=(S_2,\sigma_2)\rightarrow^*(E,\sigma_2')\in ex(S_2)\wedge \mathcal{UB}(p_1)\cup\mathcal{UB}(p_2)\subseteq V\wedge \sigma_1=_{|_{V}}\sigma_2\implies \sigma'_1(v)=\sigma'_2(v)\).
\end{lemma}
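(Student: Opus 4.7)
The plan is to generalize the proof of Theorem~\ref{theo:EquivTotal} to the setting where the two program executions $p_1, p_2$ start from possibly distinct states $\sigma_1, \sigma_2$ that merely agree on a set $V \supseteq \mathcal{UB}(p_1) \cup \mathcal{UB}(p_2)$. Accordingly, I would fix arbitrary $\sigma_1, \sigma_2, V, v, p_1, p_2$ satisfying the hypotheses and split into two cases depending on $v$. If $v \in V \setminus (\mathcal{M}(S_1) \cup \mathcal{M}(S_2))$, then Lemma~\ref{lem:change} gives $\sigma_1(v) = \sigma'_1(v)$ and $\sigma_2(v) = \sigma'_2(v)$; combined with $\sigma_1 =_{|_V} \sigma_2$ and $v \in V$, this yields $\sigma'_1(v) = \sigma'_2(v)$.

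The interesting case is $v \in (\mathcal{M}(S_1) \cup \mathcal{M}(S_2)) \cap C$, and here I would mimic the Theorem~\ref{theo:EquivTotal} argument but with a carefully constructed \emph{joint} initial state $\sigma_r \in \Sigma$ for the verification task. Specifically, $\sigma_r$ should satisfy $\sigma_r =_{|_{\mathcal{V}(S_2) \cup \mathcal{M}(S_1)}} \sigma_2$ and $\sigma_r =_{|_{\mathcal{V}(\mathcal{R}(S_1,\rho_\mathrm{switch})) \cup \bigcup_{w \in \mathcal{M}(S_2)} \rho_\mathrm{switch}(w)}} \rho_\mathrm{switch}(\sigma_1)$. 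Existence requires that the two constraint sets are jointly satisfiable; conflicts can only arise on variables $w \in \mathcal{V}(S_1) \cap \mathcal{V}(S_2)$ that $\rho_\mathrm{switch}$ does not rename (i.e.\ $w \notin M_1 \cup M_2$), for which appropriateness of $\rho_\mathrm{switch}$ forces $w \notin \mathcal{M}(S_1) \cup \mathcal{M}(S_2)$, so $w$ either is not used in $p_1, p_2$ or lies in $\mathcal{UB}(p_1) \cup \mathcal{UB}(p_2) \subseteq V$ and hence $\sigma_1(w) = \sigma_2(w)$.

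From here I follow the Theorem~\ref{theo:EquivTotal} template verbatim. Lemma~\ref{lem:init} yields a state $\sigma_\mathrm{init}$ after the initialization segment with $\sigma_\mathrm{init}(w) = \sigma_\mathrm{init}(\rho_\mathrm{switch}(w)) = \sigma_r(\rho_\mathrm{switch}(w))$ for all $w \in (U_1 \cap U_2)\cap(M_1 \cup M_2)$; by Lemma~\ref{lem:change} all other variables of $\sigma_\mathrm{init}$ equal $\sigma_r$, and by the choice of $\sigma_r$ the equality $\sigma_\mathrm{init} = \sigma_r$ holds. Lemma~\ref{lem:biRen} transports $p_1$ to an execution of $\mathcal{R}(S_1,\rho_\mathrm{switch})$ starting from $\rho_\mathrm{switch}(\sigma_1)$; Lemma~\ref{lem:ubBehave}, used twice, transfers this renamed execution and $p_2$ to executions starting from $\sigma_r$ on the same data states modulo the used-before-definition agreement (here the inclusions $\mathcal{UB}(p_1) \cup \mathcal{UB}(p_2) \subseteq V$ do the work). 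Lemma~\ref{lem:noInf} stitches these into a single execution of $\mathcal{R}(S_1,\rho_\mathrm{switch}); S_2$ reaching a state $\sigma_c$ agreeing with $\rho_\mathrm{switch}(\sigma'_1)$ on the renamed variables and with $\sigma'_2$ on $\mathcal{V}(S_2) \cup \mathcal{M}(S_1)$. The no-assertion-violation hypothesis then forces the $equal$ segment to run through without violation, and Lemma~\ref{lem:equiv} yields $\sigma_c(u) = \sigma_c(\rho_\mathrm{switch}(u))$ for every $u \in C$. Chaining equalities gives $\sigma'_1(v) = \rho_\mathrm{switch}(\sigma'_1)(\rho_\mathrm{switch}(v)) = \sigma_c(\rho_\mathrm{switch}(v)) = \sigma_c(v) = \sigma'_2(v)$.

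The main obstacle is the construction of $\sigma_r$: unlike in Theorem~\ref{theo:EquivTotal}, where $\sigma_1 = \sigma_2$ makes consistency trivial, here I must verify that the two defining constraints are compatible, which forces a careful case analysis on whether a shared variable is renamed by $\rho_\mathrm{switch}$ and whether it is used before definition in either $p_1$ or $p_2$. All subsequent steps are essentially a replay of the existing Theorem~\ref{theo:EquivTotal} proof.
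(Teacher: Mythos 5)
Your proposal follows the paper's proof almost step for step: the same two-case split on $v$, the same replay of the Theorem~1 machinery (Lemmas on initialization, renaming, $\mathcal{UB}$-insensitivity, non-interference, and equalization), and the same recognition that the only new work is constructing a joint start state $\sigma_r$ for the verification task out of the two distinct states $\sigma_1,\sigma_2$. However, the one place where you deviate from the paper is exactly where your argument has a gap. You carry over Theorem~1's constraint sets verbatim, requiring $\sigma_r$ to agree with $\sigma_2$ on all of $\mathcal{V}(S_2)\cup\mathcal{M}(S_1)$ and with $\rho_\mathrm{switch}(\sigma_1)$ on all of $\mathcal{V}(\mathcal{R}(S_1,\rho_\mathrm{switch}))\cup\rho_\mathrm{switch}(\mathcal{M}(S_2))$. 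The overlap of these two sets is $(\mathcal{V}(S_1)\cap\mathcal{V}(S_2))\setminus(M_1\cup M_2)$, and on such a variable $w$ the two constraints demand $\sigma_r(w)=\sigma_2(w)$ and $\sigma_r(w)=\sigma_1(w)$ simultaneously. Your escape — ``$w$ is not used in $p_1,p_2$ or lies in $\mathcal{UB}(p_1)\cup\mathcal{UB}(p_2)\subseteq V$'' — does not close this: the constraint sets are \emph{syntactic} ($\mathcal{V}(S_1)$, $\mathcal{V}(S_2)$), so a shared unmodified variable that is read only on a branch not taken by $p_1$ or $p_2$ (or only in dead code) still lies in both sets, is in neither $\mathcal{UB}(p_1)\cup\mathcal{UB}(p_2)$ nor necessarily $V$, and may satisfy $\sigma_1(w)\neq\sigma_2(w)$; then no $\sigma_r$ meeting your specification exists. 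The paper avoids this by shrinking both constraint sets to $\mathcal{UB}(S_1)\cup\mathcal{UB}(S_2)\cup\mathcal{M}(S_1)\cup\mathcal{M}(S_2)$ and its $\rho_\mathrm{switch}$-image, for which the same overlap analysis forces every conflicting variable into $\mathcal{UB}(S_1)\cup\mathcal{UB}(S_2)$, where $\sigma_1$ and $\sigma_2$ agree. (These smaller sets still suffice for the downstream applications of Lemmas~\ref{lem:ubBehave} and \ref{lem:noInf}, because those only need agreement on used-before-definition and modified variables.)

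A second, smaller point: for the step $\sigma_\mathrm{init}=\sigma_r$ you need $\sigma_1(v)=\sigma_2(v)$ for every $v\in I=(\mathcal{UB}(S_1)\cap\mathcal{UB}(S_2))\cap(M_1\cup M_2)$, and $\mathcal{UB}(S_1)$ is the union over \emph{all} executions, not just $p_1$. So the execution-level inclusion $\mathcal{UB}(p_1)\cup\mathcal{UB}(p_2)\subseteq V$ that you invoke is not enough here; you implicitly need $V$ to contain the program-level sets as well (the paper's own proof quietly assumes $\mathcal{UB}(S_1)\cup\mathcal{UB}(S_2)\subseteq V$ from the outset, which is how the lemma is actually applied in Theorem~2). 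Both issues are repairable, but as written the existence of $\sigma_r$ — which you correctly identify as the main obstacle — is not established.
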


\begin{proof}
Let  \(\mathcal{UB}(S_1)\cup\mathcal{UB}(S_2)\subseteq V\subseteq\mathcal{V}\) and \(u\in V\setminus((\mathcal{M}(S_1)\cup\mathcal{M}(S_2))\setminus C)\) be arbitrary.
Furthermore, consider arbitrary \(p_1=(S_1,\sigma_1)\rightarrow^*(E,\sigma'_1)\in ex(S_1)\) and \(p_2=(S_2,\sigma_2)\rightarrow^*(E,\sigma_2')\in ex(S_2)\) with \(\sigma_1=_{|_{V}}\sigma_2\).

Consider two cases.
First, consider \(u\in V\setminus(\mathcal{M}(S_1)\cup\mathcal{M}(S_2))\).
Due to Lemma~\ref{lem:change}, \(\sigma_1(u)=\sigma'_1(u)\) and \(\sigma_2(u)=\sigma'_2(u)\).
Hence, \(\sigma'_1(u)=\sigma'_2(u)\).

Second, consider \(u\in (\mathcal{M}(S_1)\cup\mathcal{M}(S_2))\cap C\).
Let us consider \(\sigma_r\in\Sigma\) with \(\sigma_r=_{|_{(\mathcal{UB}(S_1)\cup\mathcal{UB}(S_2)\cup\mathcal{M}(S_1)\cup\mathcal{M}(S_2))}}\sigma_2\) and \(\sigma_r=_{|_{\bigcup_{v\in(\mathcal{UB}(S_1)\cup\mathcal{UB}(S_2)\cup\mathcal{M}(S_1)\cup\mathcal{M}(S_2))}\rho_\mathrm{switch}(v)}}\rho_\mathrm{switch}(\sigma_1)\).
Due to definition of \(\rho_\mathrm{switch}\) such a data state exists. 

Due to semantics and Lemma~\ref{lem:init}, there exists \((init(\rho_\mathrm{switch},(\mathcal{UB}(S_1)\cap\mathcal{UB}(S_2))\cap(M_1\cup M_2)), \sigma_r)\rightarrow^* (E,\sigma_\mathrm{init})\) with \(\forall v\in (\mathcal{UB}(S_1)\cap\mathcal{UB}(S_2))\cap(M_1\cup M_2):\sigma_\mathrm{init}(v)=\sigma_\mathrm{init}(\rho_\mathrm{switch}(v))=\sigma_r(\rho_\mathrm{switch}(v))\).
By construction of \(\sigma_r\) and \(\rho_\mathrm{switch}\), \(\mathcal{UB}(S_1)\cap\mathcal{UB}(S_2)\subseteq V\subseteq\mathcal{V}\), and \(\sigma_1=_{|_{V}}\sigma_2\), we get \(\forall v\in (\mathcal{UB}(S_1)\cap\mathcal{UB}(S_2))\cap(M_1\cup M_2):\sigma_r(\rho_\mathrm{switch}(v))=\rho_\mathrm{switch}(\sigma_1)(\rho_\mathrm{switch}(v))=\sigma_1(\rho_\mathrm{switch}^{-1}(\rho_\mathrm{switch}(v)))=\sigma_1(v)=\sigma_2(v)\).
Due to Lemma~\ref{lem:change} and \(\mathcal{M}((init(\rho_\mathrm{switch},(\mathcal{UB}(S_1)\cap\mathcal{UB}(S_2))\cap(M_1\cup M_2)))\subseteq (\mathcal{UB}(S_1)\cap\mathcal{UB}(S_2))\cap(M_1\cup M_2)\), we infer that \(\forall v\in \mathcal{V}\setminus((\mathcal{UB}(S_1)\cap\mathcal{UB}(S_2))\cap(M_1\cup M_2)): \sigma_\mathrm{init}(v)=\sigma_r(v)\).
Hence, \(\sigma_r=\sigma_\mathrm{init}\).

Due to Lemma~\ref{lem:biRen}, there exists \((\mathcal{R}(S_1,\rho_\mathrm{switch}),\rho_\mathrm{switch}(\sigma_1))\rightarrow^*(E,\rho_\mathrm{switch}(\sigma'_1))\in ex(\mathcal{R}(S_1,\rho_\mathrm{switch}))\).
By definition of \(\sigma_r\) and Lemma~\ref{lem:ubBehave}, there exists execution \((S_2,\sigma_r)\rightarrow^*(E,\sigma''_r)\in ex(S_2)\) with \(\sigma'_2=_{|_{(\mathcal{UB}(S_1)\cup\mathcal{UB}(S_2))\cup\mathcal{M}(S_1)\cup\mathcal{M}(S_2)}} \sigma''_r\) as well as execution \(((\mathcal{R}(S_1,\rho_\mathrm{switch}),\sigma_r)\rightarrow^*(E,\sigma'_r)\in ex((\mathcal{R}(S_1,\rho_\mathrm{switch}))\) such that \(\rho_\mathrm{switch}(\sigma'_1)=_{|_{\bigcup_{v\in(\mathcal{UB}(S_1)\cup\mathcal{UB}(S_2)\cup\mathcal{M}(S_1)\cup\mathcal{M}(S_2))}\rho_\mathrm{switch}(v)}} \sigma'_r\).
Due to Lemma~\ref{lem:noInf}, there exists execution \((\mathcal{R}(S_1,\rho_\mathrm{switch});S_2,\sigma_r)\rightarrow^*(E,\sigma_c)\in ex(\mathcal{R}(S_1,\rho_\mathrm{switch});S_2)\) with  \(\sigma_c=_{|_{\mathcal{V}(\mathcal{R}(S_1,\rho_\mathrm{switch}))\cup\bigcup_{v\in\mathcal{M}(S_2)}\rho_\mathrm{switch}(v)}}\sigma'_r\) and \(\sigma_c=_{|_{\mathcal{V}(S_2)\cup\mathcal{M}(S_1)}}\sigma''_r\).

Due to semantics, \(\sigma_r=\sigma_\mathrm{init}\), and all \((eq\_task(S_1, S_2, \rho_\mathrm{switch}, \mathcal{UB}(S_1)\cap\mathcal{UB}(S_2))\cap(M_1\cup M_2) ,C),\sigma)\rightarrow^*(S',\sigma')\in ex(eq\_task(S_1, S_2, \rho_\mathrm{switch}, (\mathcal{UB}(S_1)\cap\mathcal{UB}(S_2))\cap(M_1\cup M_2) ,C))\) do not violate assertions, there exists \((equal(\rho_\mathrm{switch}, C), \sigma_c)\rightarrow^*(E,\sigma'_c)\).
Due to Lemma~\ref{lem:equiv}, we infer for all \(v\in C\) that \(\sigma_c(v)=\sigma_c(\rho_\mathrm{switch}(v))\).
Since \(u\in (\mathcal{M}(S_1)\cup\mathcal{M}(S_2))\cap C\), we conclude that \(\sigma_c(\rho_\mathrm{switch}(u))=\sigma_c(u)=\sigma''_r(u)=\sigma'_2(u)\) and \(\sigma_c(\rho_\mathrm{switch}(u))=\sigma'_r(\rho_\mathrm{switch}(u))=\rho_\mathrm{switch}(\sigma'_1)(\rho_\mathrm{switch}(u))=\sigma'_1(\rho_\mathrm{switch}^{-1}(\rho_\mathrm{switch}(u)))=\sigma'_1(u)\).

\end{proof}

\begin{theorem2}
Let \(S\) and \(S'\) be two programs, \(\gamma\) be a replacement function such that \(S'=\Gamma(S,\gamma)\), and  \(V\subseteq\mathcal{V}\) be a set of outputs.
If for all \((S_1, S_2)\in\gamma\) there exists \(\mathcal{M}(S_1)\subseteq M_1\subseteq \mathcal{V}(S_1)\), \(\mathcal{M}(S_2)\subseteq M_2\subseteq \mathcal{V}(S_2)\), \(\mathcal{L}(S_1, S, V)\subseteq L_1\subseteq \mathcal{V}\), \(\mathcal{L}(S_2, S', V)\subseteq L_2\subseteq \mathcal{V}\), and renaming function \(\rho_\mathrm{switch}\) such that the equivalence task \(eq\_task(S_1, S_2, \rho_\mathrm{switch}, (\mathcal{UB}(S_1)\cap\mathcal{UB}(S_2))\cap(M_1\cup M_2), (M_1\cup M_2)\cap (L_1\cup L_2))\) does not violate an assertion, then \(S\equiv_V S'\).
\end{theorem2}

\begin{proof}
Consider \((S_p,\sigma)\rightarrow^*(E,\sigma')=(S_0,\sigma_0)\stackrel{op_1}{\rightarrow}\dots\stackrel{op_n}{\rightarrow}(S_n,\sigma_n)\) be a path for an arbitrary program \(S_p\).
We define the splitting of the path into \(m\geq0\) segments such that each segment represents either a sequence in which each program of the sequence's states except for the last one does not start with a replaced subprogram or the execution of the subprogram that will be replaced.
In case that there exists multiple replacements (nesting of replaced subprograms), we use the largest replacement.
Show by induction over the number of segments that for all programs \(S_p\) such that \(\exists\sigma,\sigma'\in\Sigma: (S,\sigma)\rightarrow^*(S_p,\sigma')\) if \((S_p,\sigma)\rightarrow^*(E,\sigma')\in ex(S_p)\), \(S'_p=\Gamma(S_p,\gamma)\), \(\exists\sigma,\sigma'\in\Sigma: (S',\sigma)\rightarrow^*(S'_p,\sigma')\), \(\sigma''\in\Sigma\) with \(\sigma=_{|_{\mathcal{L}(S_p,V)\cup\mathcal{L}(S'_p,V)}}\sigma''\), and \((S'_p,\sigma'')\rightarrow^*(E,\sigma''')\in ex(S'_p)\), then \(\sigma'=_{|_{\mathcal{L}(E,V)\cup\mathcal{L}(E,V)}}\sigma'''\). 

\textbf{Base case} (m=0):
Since \(m=0\), we conclude that \(S_p=E\). 
Since \(S'_p=\Gamma(S_p,\gamma)=\Gamma(E,\gamma)\), we conclude that \(S'_p=E\).
Hence, \(\sigma'=\sigma\wedge \sigma''=\sigma'''\).
By assumption \(\sigma=_{|_{\mathcal{L}(S_p,V)\cup\mathcal{L}(S_p',V)}}\sigma''\).
Thus, the induction hypothesis follows.

\textbf{Step case} (\(m>0\)):
Let \((S_0,\sigma_0)\stackrel{op_1}{\rightarrow}\dots\stackrel{op_i}{\rightarrow}(S_i,\sigma_i)\) be the first segment and  \(\sigma'_0\in\Sigma\) be arbitrary such that \(\sigma_0=_{|_{\mathcal{L}(S_p,V)\cup\mathcal{L}(S'_p,V)}}\sigma'_0\) and assume \((S'_p,\sigma'')\rightarrow^*(E,\sigma''')\in ex(S'_p)\).
We know that \(S_p=S_0\) and \(\sigma_0=\sigma\).
Consider two cases.

First, assume that the first segment represents a sequence in which each program of the first i-1 states does not start with a replaced subprogram.
Due to Corollary~\ref{cor:stepsNotRep}, there exists execution \((\Gamma(S_0,\gamma),\sigma'_0)\stackrel{op_1}{\rightarrow}\dots\stackrel{op_i}{\rightarrow}(\Gamma(S_i,\gamma),\sigma'_i)\) with \(\sigma_i=_{|_{\mathcal{L}(S_i, V)\cup\mathcal{L}(\Gamma(S_i,\gamma),V)}}\sigma'_i\).
By assumption \(S_p'=\Gamma(S_0,\gamma)\).
By definition, \((S_i,\sigma_i)\rightarrow^*(E,\sigma')\in ex(S_i)\), which consists of \(m-1\) segments and is reachable from \(S\).
Due to semantics, semantics being deterministic, and \((S'_p,\sigma'_0)\rightarrow^*(E,\sigma''')\in ex(S'_p)\), there exists \(((\Gamma(S_i,\gamma),\sigma'_i)\rightarrow^*(E,\sigma''')\in ex(\Gamma(S_i,\gamma))\).
Furthermore, since \(S'_p\) reachable from \(S'\), \(S_p'=\Gamma(S_0,\gamma)\), and \((\Gamma(S_0,\gamma),\sigma'_0)\rightarrow^*(\Gamma(S_i,\gamma),\sigma'_i)\), also \(\Gamma(S_i,\gamma)\) reachable from \(S'\).
By induction, \(\sigma'=_{|_{\mathcal{L}(E,V)\cup\mathcal{L}(E,V)}}\sigma'''\).
The induction hypothesis follows. 

Second, assume that the first segment is the execution of a subprogram \(S_p^s\) that will be replaced, i.e., \(S_p^s\in dom(\gamma)\) and \(S_p=S_0=S_p^s\wedge S_i=E\vee S_p=S_0=S_p^s;S_i\).
Furthermore, from \(S'_p=\Gamma(S_p,\gamma)\), we conclude that \(S'_p=\gamma(S_p^s)=\Gamma(S_p,\gamma)\) if \(S_p=S_0=S_p^s\) and  \(S'_p=\Gamma(S_p,\gamma)=\Gamma(S_p^s,\gamma);\Gamma(S_i,\gamma)=\gamma(S_p^s);\Gamma(S_i,\gamma)\)  otherwise.
Due to semantics, semantics being deterministic, and \((S'_p,\sigma'_0)\rightarrow^*(E,\sigma''')\in ex(S'_p)\), there exists \((S'_p,\sigma'_0)\rightarrow^*(S'_i,\sigma'_i)\in ex(S'_p)\) with \(S'_i=\Gamma(S_i,\gamma)\).
Furthermore, there exists \(S'_i\rightarrow^*(E,\sigma''')\in ex(S'_i)\).
We conclude that \((\Gamma(S_p^s,\gamma),\sigma'_0)\rightarrow^*(E,\sigma'_i)\) (semantics).
Due to the definitions of \(\mathcal{UB}\) and \(\mathcal{L}\), \(\mathcal{UB}(S_p^s)\cup\mathcal{UB}(\gamma(S_p^s))\subseteq\mathcal{UB}(S_p)\cup\mathcal{UB}(S'_p)\subseteq\mathcal{L}(S_p)\cup\mathcal{L}(S'_p)\).
Due to Lemma~\ref{lem:equivCheck}, \(\sigma_i=_{|_{((\mathcal{L}(S_p,V)\cup\mathcal{L}(S'_p,V))\setminus(\mathcal{M}(S_p^s)\cup\mathcal{M}(\gamma(S_p^s))))\cup((\mathcal{M}(S_p^s)\cup\mathcal{M}(\gamma(S_p^s))\cap(L_1\cup L_2)))}}\sigma'_i\).
By definition and \(S_p\) reachable from \(S\), \(\mathcal{L}(S_i,V)\subseteq\mathcal{L}(S, S_p^s, V)\subseteq L_1\).
Similarly,  \(\mathcal{L}(\Gamma(S_i,\gamma),V)\subseteq\mathcal{L}(S', \gamma(S_p^s), V)\subseteq L_2\).
We conclude from the definition of live variables that \(\sigma_i=_{|_{(\mathcal{L}(S_i,V)\cup\mathcal{L}(S'_i,V))}}\sigma'_i\).
By definition, \((S_i,\sigma_i)\rightarrow^*(E,\sigma')\in ex(S_i)\), which consists of \(m-1\) segments and is reachable from \(S\).
Furthermore, we can conclude from \(S'_p\) reachable from \(S'\), also \(\Gamma(S_i,\gamma)\) reachable from \(S'\).
By induction, \(\sigma'=_{|_{\mathcal{L}(E,V)\cup\mathcal{L}(E,V)}}\sigma'''\).
The induction hypothesis follows.

Let \(\sigma\in\Sigma\), \((S,\sigma)\rightarrow^*(E,\sigma')\in ex(S)\), \((S',\sigma)\rightarrow^*(E,\sigma'')\in ex(S')\), and \(u\in V\) be arbitrary.
Since \(\sigma=_{|_{\mathcal{L}(S,V)\cup\mathcal{L}(S',V)}}\sigma\), \(S'=\Gamma(S,\gamma)\), and \((S',\sigma)\rightarrow^*(E,\sigma'')\in ex(S')\), the induction hypothesis gives us \(\sigma'=_{|_{\mathcal{L}(E,V)\cup\mathcal{L}(E,V)}}\sigma''\).
By definition of live variable analysis, \(u\in\mathcal{L}(E,V)\).
Hence, \(\sigma'(u)=\sigma''(u)\).
\end{proof}

\subsection{Proof of Theorem~\ref{theo:EquivApprox}}\label{ssec:proof:theo:EquivApprox}

\begin{lemma}\label{lem:stepsNotRep2}
Let \(S\) and \(S'\) be two programs, \(\gamma\) be a replacement function such that \(S'=\Gamma(S,\gamma)\), and  \(V\subseteq\mathcal{V}\) be a set of outputs.
If \(\neg\exists S^s_1, S^s_2: S^s_1\in dom(\gamma)\wedge (S=S^s_1;S^s_2 \vee S=S^s_1)\), then \(\forall \sigma, \sigma'\in\Sigma: (S,\sigma)\stackrel{op_1}{\rightarrow}(S_1,\sigma_1)\in ex(S)\wedge \sigma=_{|_{\mathcal{L}(S', V)}}\sigma' \implies \exists (S',\sigma')\stackrel{op_1}{\rightarrow}(S'_1,\sigma'_1): S'_1=\Gamma(S_1,\gamma) \wedge  \sigma_1=_{|_{\mathcal{L}(S'_1,V)}}\sigma'_1\).
\end{lemma}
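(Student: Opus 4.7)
The plan is to mirror the case analysis in the proof of Lemma~\ref{lem:stepsNotRep}, exploiting that the hypothesis $\neg\exists S^s_1, S^s_2\colon S^s_1\in\mathrm{dom}(\gamma)\wedge(S=S^s_1;S^s_2\vee S=S^s_1)$ guarantees that $S$ and $S'=\Gamma(S,\gamma)$ agree on their outermost syntactic form: any difference lies strictly inside an $\mathbf{if}$/$\mathbf{while}$ body or inside a trailing sequential component, neither of which is touched by the first step $op_1$. This lets us replace the bulky $\mathcal{L}(S,V)\cup\mathcal{L}(S',V)$ hypothesis of Lemma~\ref{lem:stepsNotRep} with just $\mathcal{L}(S',V)$.

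I would split on the form of $S$ exactly as in the proof of Lemma~\ref{lem:stepsNotRep}: the cases in which $S_1=E$ (an assignment, an assertion, an $\mathbf{if}$/$\mathbf{while}$ whose guard terminates the program in one step, $[E\|\dots\|E]$, or $E;E$) and the cases in which $S_1\neq E$ (a sequential composition $S^s_1;S^s_2$ with $S^s_1\notin\mathrm{dom}(\gamma)$, or an $\mathbf{if}$/$\mathbf{while}$ that enters its body). In every case $S'$ has the same outer head as $S$, so the set $R$ of variables read by $op_1$ is identical whether the step is taken from $S$ or from $S'$. By the syntactic-path definition in the paper, $R\subseteq\mathcal{UB}(S')$, and by the definition of $\mathcal{L}$, $\mathcal{UB}(S')\subseteq\mathcal{L}(S',V)$. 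Hence the hypothesis already yields $\sigma=_{|_R}\sigma'$, and the deterministic-expression axioms together with the matching semantic rule give $(S',\sigma')\xrightarrow{op_1}(S'_1,\sigma'_1)$ with $S'_1=\Gamma(S_1,\gamma)$, using that $\Gamma$ distributes over sequential composition, $\mathbf{if}$, $\mathbf{while}$-unrolling, and parallel structural updates.

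To close the step, I would verify $\sigma_1=_{|_{\mathcal{L}(S'_1,V)}}\sigma'_1$ by a case split on $op_1$. If $op_1$ is a guard, a $\mathbf{nop}$, or an empty-parallel step, then $\sigma_1=\sigma$ and $\sigma'_1=\sigma'$, and the claim follows from $\mathcal{L}(S'_1,V)\subseteq\mathcal{L}(S',V)$, a standard live-variable propagation that uses only the syntactic-path definition of $\mathcal{L}$ and the shape of one-step reductions. If $op_1\equiv w:=aexpr;$, then on $\mathcal{L}(S'_1,V)\setminus\{w\}\subseteq\mathcal{L}(S',V)$ both states remain unchanged, so equality transfers from the hypothesis, while for the remaining variable $w$ we obtain $\sigma_1(w)=\sigma(aexpr)=\sigma'(aexpr)=\sigma'_1(w)$ because $\mathcal{V}(aexpr)\subseteq R\subseteq\mathcal{L}(S',V)$.

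The main obstacle I expect is the $\mathbf{if}$/$\mathbf{while}$ subcase in which the body itself lies in $\mathrm{dom}(\gamma)$: there $S\notin\mathrm{dom}(\gamma)$, yet $\Gamma$ does alter the body, so the structural identity $S'_1=\Gamma(S_1,\gamma)$ must be verified carefully --- in particular, in the while-true case where $S_1$ becomes $S^s;\mathbf{while}_\ell\,bexpr\,\mathbf{do}\,S^s$, both copies of the body must be consistently rewritten by $\gamma$, which works because $\gamma$ is indexed by labels that remain unique across the unfolding. A smaller technicality is the live-variable propagation $\mathcal{L}(S'_1,V)\setminus\{w\}\subseteq\mathcal{L}(S',V)$ used above; although standard, it should be derived directly from the paper's syntactic-path definition of $\mathcal{L}$ rather than taken for granted, since the paper does not include an explicit dataflow inequality of this form.
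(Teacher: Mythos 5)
Your proposal is correct and follows essentially the same route as the paper's own proof: the same case split on the outer form of $S$ (and on whether $S_1=E$), the same key observation that the hypothesis forces $S$ and $S'$ to share their outermost head so the first step can be matched with $\mathcal{UB}(S')\subseteq\mathcal{L}(S',V)$ supplying the needed state agreement, and the same backward live-variable propagation to close the step. The only cosmetic difference is that the paper delegates the state-agreement bookkeeping to Lemma~\ref{lem:ubBehave} where you inline the case split on $op_1$; the if/while-with-replaced-body subcase you flag as the main obstacle is handled in the paper exactly as you sketch it.
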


\begin{proof}
Let \(S\) and \(S'\) be two programs, \(\gamma\) be a replacement function such that \(S'=\Gamma(S,\gamma)\) and \(\neg\exists S^s_1, S^s_2: S^s_1\in dom(\gamma)\wedge  (S=S^s_1;S^s_2 \vee S=S^s_1)\),  \(V\subseteq\mathcal{V}\) be a set of outputs and \(\sigma, \sigma'\in\Sigma\) with \(\sigma=_{|_{\mathcal{L}(S',V)}}\sigma'\).

Assume \(p=(S,\sigma)\stackrel{op_1}{\rightarrow}(S_1,\sigma_1)\in ex(S)\).
By definition, \(\mathcal{UB}(S')\subseteq \mathcal{L}(S',V)\).
Consider two cases.

Case 1 (\(S_1=E\)):
Due to the semantics, we conclude that either \(S=E;E\), or \(S\) is  not a sequential composition, but a statement.
If \(S\) is an assignment, an assertion or a parallel statement, we conclude from  \(S\notin dom(\gamma)\), \(S'=\Gamma(S,\gamma)\), and \(\gamma\) does not replace statements in parallel statements that \(S'=S\). 
Similarly, if \(S=E;E\), also \(S=S'\). 
Due to Lemma~\ref{lem:ubBehave}, \(\exists (S',\sigma')\stackrel{op_1}{\rightarrow}(E,\sigma'_1)\) and \(\sigma_1=_{|_{\{v\in\mathcal{V}\mid \sigma(v)=\sigma'(v)\vee \exists 1\leq i\leq n: op_i\equiv v:=expr;\}}} \sigma'_1\).
Due to the definition of live variables, we conclude that \(\sigma_1=_{|_{\mathcal{L}(E, V)}}\sigma'_1\).
If \(S\) is an if- or while-statement, we conclude from  \(S\notin dom(\gamma)\) and \(S'=\Gamma(S,\gamma)\) that \(S'\) is an if-/while-statement and the condition is the same.
Due to semantics, definition of live variable analysis, and \(\sigma=_{|_{\mathcal{L}(S', V)}}\sigma'\), we then conclude that \((S',\sigma')\stackrel{op_1}{\rightarrow}(E,\sigma')\in ex(S')\) and \(\sigma_1=\sigma\).
Due to the definition of live variables, we conclude that \(\sigma_1=_{|_{\mathcal{L}(E, V)}}\sigma'\).
By definition, \(E=\Gamma(E,\gamma)\).

Case 2 (\(S_1\neq E\)):
Since replacements do not occur in parallel statements and \(\neg\exists S^s_1, S^s_2: S^s_1\in dom(\gamma)\wedge  (S=S^s_1;S^s_2 \vee S=S^s_1)\), we conclude that \(\exists S^s_1, S^s_2, S^s_3, S^s_4:  S=S^s_1;S^s_2 \wedge S'=S^s_3;S^s_4\wedge S^s_3=\Gamma(S^s_1, \gamma) \wedge S^s_4=\Gamma(S^s_2, \gamma)\vee S=S^s_1\wedge S'=S^s_3\wedge S^s_3=\Gamma(S^s_1, \gamma)\) and either \(S^s_1=S^s_3\) or \(S^s_1\) and \(S^s_3\) are either both if- or both-while statements with the same condition and the if/else-body, the loop body of $S^s_3$ is a replacement of the body of $S^s_1$. 
First, consider the first case (\(S^s_1=S^s_3\)).
Due to semantics, either (1)~\(S_1=E;S^s_2\) and \((S^s_1, \sigma)\stackrel{op_1}{\rightarrow}(E,\sigma_1)\), (2)~\(S_1=S^s_2\wedge S^s_1=E\), \(\sigma=\sigma_1\), and \((S, \sigma)\stackrel{\textbf{nop}}{\rightarrow}(S_1,\sigma_1)\), or (3)~\(S_1=S^s_5;S^s_2\) and \((S^s_1, \sigma)\stackrel{op_1}{\rightarrow}(S^s_5,\sigma_1)\).
Due to Lemma~\ref{lem:ubBehave}, in case~(1) \(\exists (S^s_1,\sigma')\stackrel{op_1}{\rightarrow}(E,\sigma'_1)\) and \(\sigma'=\sigma'_1\), and in case~(3)  \(\exists (S^s_1,\sigma')\stackrel{op_1}{\rightarrow}(S^s_5,\sigma'_1)\).
Furthermore, \(\sigma_1=_{|_{\{v\in\mathcal{V}\mid \sigma(v)=\sigma'(v)\vee op_1\equiv v:=expr;\}}} \sigma'_1\).
Due to semantics, in case~(1) \(\exists (S',\sigma')\stackrel{op_1}{\rightarrow}(E;S^s_4,\sigma'_1)\), in case~(2) \((S', \sigma')\stackrel{\textbf{nop}}{\rightarrow}(S^s_4,\sigma')\), and in case~(3)
  \(\exists (S',\sigma')\stackrel{op_1}{\rightarrow}(S^s_5;S^s_4,\sigma'_1)\).
Since \(S^s_1=S^s_3=\Gamma(S^s_3,\gamma)\), \(\gamma\) is only defined for subprograms of \(S\) and statements (thus, subprograms) can be uniquely identified via labels, we get \(\Gamma(S^s_5,\gamma)=S^s_5\).
Hence, \(\Gamma(S^s_5;S^s_2,\gamma)=S^s_5;\Gamma(S^s_2,\gamma)=S^s_5;S^s_4\).
Similarly, \(\Gamma(E;S^s_2,\gamma)=E;\Gamma(S^s_2,\gamma)=S^s_5;S^s_4\).
Moreover, \(\sigma_1=_{|_{\{v\in\mathcal{V}\mid \sigma(v)=\sigma'(v)\vee op_1\equiv v:=expr;\}}} \sigma'_1\), \(\sigma=_{|_{\mathcal{L}(S',V)}}\sigma'\), and the definition of live variable analyses let us conclude that \(\sigma_1=_{|_{\mathcal{L}(S'_1, V)}}\sigma'_1\). 

Second, consider that (\(S^s_1\neq S^s_3\)).
We know that \(S^s_1\) and \(S^s_3\) are either both if- or both-while statements with the same condition and the if/else-body, the loop body of $S^s_3$ is a replacement of the body of $S^s_1$.
Due to semantics, definition of live variable analysis, \(\sigma=_{|_{\mathcal{L}(S',V)}}\sigma'\), and the replacement function, we then conclude that \(\sigma=\sigma_1\) and \(\sigma'=\sigma'_1\) and either \(S=S^s_1\) and exists \((S',\sigma')\stackrel{op_1}{\rightarrow}(S'_1,\sigma_1)\in ex(S')\) with \(S'_1=\Gamma(S_1,\gamma)\) (due to \(S; \textbf{while}~expr~\textbf{do}~S\) is no subprogram of S) or \(S=S^s_1;S^s_2\) and \((S',\sigma')\stackrel{op_1}{\rightarrow}(S'_1,\sigma_1)\in ex(S')\) with \(S'_1=\Gamma(S_1,\gamma)\).
Due to the definition of live variables, we conclude that \(\sigma_1=_{|_{\mathcal{L}(S'_1, V)}}\sigma'_1\).
\end{proof}

\begin{corollary}\label{cor:stepsNotRep2}
Let \(S\) and \(S'\) be two programs, \(\gamma\) be a replacement function such that \(S'=\Gamma(S,\gamma)\), and  \(V\subseteq\mathcal{V}\) be a set of outputs.
For all \((S_0,\sigma_0)\stackrel{op_1}{\rightarrow}\dots\stackrel{op_n}{\rightarrow}(S_n,\sigma_n)\in ex(S)\), if for all \(0\leq i<n\)  not exists \(S^{s1}_i, S^{s2}_i \) such that \((S_i=S^{s1}_i;S^{s2}_i \vee S_i=S^{s1}_i)\) and  \(S^{s1}_i \in dom(\gamma)\), then \(\forall \sigma'_0\in\Sigma: \sigma_0=_{|_{\mathcal{L}(S', V)}}\sigma'_0 \implies \exists (\Gamma(S_0,\gamma),\sigma'_0)\stackrel{op_1}{\rightarrow}\dots\stackrel{op_n}{\rightarrow}(\Gamma(S_n,\gamma),\sigma'_n)\in ex(S'):\) \mbox{\(\forall 0\leq i\leq n: \sigma_i=_{|_{\mathcal{L}(\Gamma(S_i,\gamma),V)}}\sigma'_i\)}.
\end{corollary}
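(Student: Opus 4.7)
The plan is to prove this corollary by induction on the length~\(n\) of the execution, closely mirroring the proof of Corollary~\ref{cor:stepsNotRep} but replacing every use of Lemma~\ref{lem:stepsNotRep} by Lemma~\ref{lem:stepsNotRep2}. Since the new corollary weakens the conclusion by dropping the \(\mathcal{L}(S_i,V)\) component and keeping only \(\mathcal{L}(\Gamma(S_i,\gamma),V)\), and Lemma~\ref{lem:stepsNotRep2} is exactly the single-step variant that already uses only \(\mathcal{L}(S',V)\) on the left and \(\mathcal{L}(S'_1,V)\) on the right, the induction carries through essentially by substitution.

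For the base case \(n=0\), I would note that by the definition of \(ex(S)\) we have \(S_0=S\), hence \(\Gamma(S_0,\gamma)=\Gamma(S,\gamma)=S'\). Then \((\Gamma(S_0,\gamma),\sigma'_0)=(S',\sigma'_0)\in ex(S')\) trivially, and the agreement \(\sigma_0=_{|_{\mathcal{L}(\Gamma(S_0,\gamma),V)}}\sigma'_0\) is exactly the premise \(\sigma_0=_{|_{\mathcal{L}(S',V)}}\sigma'_0\).

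For the step case \(n-1\rightarrow n\), I would split off the first transition \((S_0,\sigma_0)\stackrel{op_1}{\rightarrow}(S_1,\sigma_1)\). The hypothesis instantiated at \(i=0\) guarantees that \(S_0\) neither equals a program in \(dom(\gamma)\) nor has the form \(S^{s1}_0;S^{s2}_0\) with \(S^{s1}_0\in dom(\gamma)\), which is exactly the precondition of Lemma~\ref{lem:stepsNotRep2}. Applying that lemma yields \((\Gamma(S_0,\gamma),\sigma'_0)\stackrel{op_1}{\rightarrow}(\Gamma(S_1,\gamma),\sigma'_1)\in ex(S')\) with \(\sigma_1=_{|_{\mathcal{L}(\Gamma(S_1,\gamma),V)}}\sigma'_1\). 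The remaining execution \((S_1,\sigma_1)\stackrel{op_2}{\rightarrow}\dots\stackrel{op_n}{\rightarrow}(S_n,\sigma_n)\) has length \(n-1\), lies in \(ex(S_1)\), and inherits the non-replacement condition for indices \(1,\dots,n-1\). Applying the induction hypothesis to this shorter path — with \(S_1\) in the role of \(S\), \(\Gamma(S_1,\gamma)\) in the role of \(S'\), and the restriction of \(\gamma\) to subprograms of \(S_1\) — gives the continuation \((\Gamma(S_1,\gamma),\sigma'_1)\stackrel{op_2}{\rightarrow}\dots\stackrel{op_n}{\rightarrow}(\Gamma(S_n,\gamma),\sigma'_n)\) together with \(\sigma_i=_{|_{\mathcal{L}(\Gamma(S_i,\gamma),V)}}\sigma'_i\) for \(1\leq i\leq n\). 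Prepending the first step yields the required execution in \(ex(S')\).

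The main obstacle is the same subtlety that appears implicitly in the proof of Corollary~\ref{cor:stepsNotRep}, namely the bookkeeping around re-invoking the corollary on \(S_1\) in place of \(S\): one must argue that the (restricted) replacement function is still well-defined (using unique labels to identify subprograms so that \(\Gamma(S_1,\gamma)\) is unambiguous) and that the non-replacement hypothesis transfers correctly to the shorter execution. Once that is clean, the inductive argument is structurally identical to Corollary~\ref{cor:stepsNotRep}, and no new semantic content is needed beyond Lemma~\ref{lem:stepsNotRep2}.
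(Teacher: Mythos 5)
Your proposal is correct and follows essentially the same route as the paper: induction on the execution length, using Lemma~\ref{lem:stepsNotRep2} to simulate the first step and the induction hypothesis (re-instantiated with \(S_1\) and \(\Gamma(S_1,\gamma)\)) for the remainder. Your explicit remark about the bookkeeping needed to re-invoke the statement on \(S_1\) with the (restricted) replacement function is a detail the paper's own two-line proof silently glosses over, but it does not change the argument.
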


\begin{proof}
Proof by induction.

\textbf{Base case} (i=0):
By definition \((S',\sigma)=(\Gamma(S,\gamma),\sigma)\in ex(S')\) for arbitrary \(\sigma\in\Sigma\) (including all \(\sigma'_0\) with \(\sigma_0=_{|_{\mathcal{L}(S',V)}}\sigma'_0\)).

\textbf{Step case} (\(n-1\rightarrow n\)):
Due to Lemma~\ref{lem:stepsNotRep2}, there exists \((\Gamma(S,\gamma),\sigma')\stackrel{op_1}{\rightarrow}(\Gamma(S_1,\gamma),\sigma'_1)\) with \(\sigma_1=_{|_{\mathcal{L}(S'_1,V)}}\sigma'_1\).
By induction,  \((\Gamma(S_1,\gamma),\sigma'_1)\stackrel{op_2}{\rightarrow}\dots\stackrel{op_n}{\rightarrow}(\Gamma(S_n,\gamma),\sigma'_n)\in ex(S'): \forall 1\leq i\leq n: \sigma_i=_{|_{\mathcal{L}(\Gamma(S_i,\gamma),V)}}\sigma'_i\).
By definition, the induction hypothesis follows.
\end{proof}

\begin{lemma}\label{lem:equivCheck3}
Let \(S_1\) and \(S_2\) be two (sub)programs of programs \(S\) and \(S'\), respectively.
Consider arbitrary \((S,\sigma)\rightarrow^*(S_i,\sigma_i)\rightarrow^*(S_j,\sigma_j)\in ex(S)\) and  \((S',\sigma')\rightarrow^*(S'_i,\sigma'_i)\rightarrow^*(S'_j,\sigma'_j)\in ex(S')\) such that \(S_i=S_1\wedge S_j=E \vee S_i=S_1;S_j\), \(S'_i=S_2\wedge S'_j=E \vee S'_i=S_2;S'_j\), \((S_1,\sigma_i)\rightarrow^*(E,\sigma_j)\in ex(S_1)\), and \((S_2,\sigma'_i)\rightarrow^*(E,\sigma'_j)\in ex(S_2)\).
Given overapproximations \(\mathcal{UB}(S_1)\subseteq U_1\subseteq \mathcal{V}(S_1)\), \(\mathcal{UB}(S_2)\subseteq U_2\subseteq \mathcal{V}(S_2)\) and overapproximations \(\mathcal{M}(S_1)\subseteq M_1\subseteq \mathcal{V}(S_1)\) 
and \(\mathcal{M}(S_2)\cup\{v\in\mathcal{V}\mid\exists S_2\rightarrow^*S'_k\stackrel{v:=aexpr}{\rightarrow}S'_r\in syn_P(S_2)\}\subseteq M_2\subseteq \mathcal{V}(S_2)\) of the modified variables, a renaming function \(\rho_\mathrm{switch}\), and \((\mathcal{L}(S_2, S', V))\cap(M_1\cup M_2)\subseteq C\subseteq M_1\cup M_2\).
If all executions \((eq\_task(S_1, S_2, \rho_\mathrm{switch},(U_1\cap U_2)\cap(M_1\cup M_2) ,C),\sigma)\rightarrow^*(S',\sigma')\in ex(eq\_task(S_1, S_2, \rho_\mathrm{switch}, (U_1\cap U_2)\cap(M_1\cup M_2) ,C))\) do not violate an assertion, \(V\subseteq\mathcal{V}\), and \(\sigma_i=_{|_{\mathcal{L}(S'_i,V)}} \sigma'_i\), then \(\sigma_j=_{|_{\mathcal{L}(S'_j,V)}}\sigma'_j\).
\end{lemma}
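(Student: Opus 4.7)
The plan is to fix an arbitrary \(u\in\mathcal{L}(S'_j,V)\) and show \(\sigma_j(u)=\sigma'_j(u)\), splitting on whether \(u\in M_1\cup M_2\). In the easy case (\(u\notin M_1\cup M_2\)), Lemma~\ref{lem:change} applied to \(S_1\) gives \(\sigma_i(u)=\sigma_j(u)\) since \(u\notin M_1\supseteq\mathcal{M}(S_1)\), while the new overapproximation condition on \(M_2\) — which forces every syntactically assigned variable of \(S_2\) into \(M_2\) — means \(u\) is never assigned in \(S_2\), so Lemma~\ref{lem:change} similarly yields \(\sigma'_i(u)=\sigma'_j(u)\). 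From the syntactic definition of \(\mathcal{L}\), any witnessing path for \(u\in\mathcal{L}(S'_j,V)\) can then be prefixed with a syntactic path through \(S_2\) (which preserves \(u\) since it never assigns \(u\)) to obtain \(u\in\mathcal{L}(S'_i,V)\), so the hypothesis \(\sigma_i=_{|_{\mathcal{L}(S'_i,V)}}\sigma'_i\) closes the case.

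For the main case \(u\in(M_1\cup M_2)\cap\mathcal{L}(S'_j,V)\), I first verify \(u\in C\): the given execution \((S',\sigma')\to^*(S'_i,\sigma'_i)\) together with \(u\in\mathcal{L}(S'_j,V)\) (which specialises to \(u\in V=\mathcal{L}(E,V)\) when \(S'_j=E\)) witnesses \(u\in\mathcal{L}(S_2,S',V)\), and combined with \(u\in M_1\cup M_2\) this places \(u\) in \(C\) by the standing assumption on \(C\). Next, I would construct a bridging data state \(\sigma_r\) as in the proof of Lemma~\ref{lem:equivCheck}: \(\sigma_r\) agrees with \(\sigma'_i\) on \(\mathcal{V}(S_2)\cup M_1\) and with \(\rho_\mathrm{switch}(\sigma_i)\) on \(\rho_\mathrm{switch}(M_1\cup M_2)\), with consistency guaranteed because \(\rho_\mathrm{switch}\) maps \(M_1\cup M_2\) outside \(\mathcal{V}(S_1)\cup\mathcal{V}(S_2)\).

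The core chain then applies, in order: Lemma~\ref{lem:init} to determine the state \(\sigma_\mathrm{init}\) after the initialization phase; Lemma~\ref{lem:biRen} to transport \(S_1\)'s execution from \(\sigma_i\) to \(\mathcal{R}(S_1,\rho_\mathrm{switch})\) run from \(\rho_\mathrm{switch}(\sigma_i)\); Lemma~\ref{lem:ubBehave} twice, once on each half of the composition, to replace the idealised initial states by \(\sigma_\mathrm{init}\) and by the post-\(\mathcal{R}(S_1,\rho_\mathrm{switch})\) state respectively; and Lemma~\ref{lem:noInf} to stitch the two runs into one execution of \(\mathcal{R}(S_1,\rho_\mathrm{switch});S_2\) from \(\sigma_\mathrm{init}\) yielding a final state \(\sigma_c\). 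Since the equivalence task does not violate any assertion, Lemma~\ref{lem:equiv} delivers \(\sigma_c(u)=\sigma_c(\rho_\mathrm{switch}(u))\) for \(u\in C\). Tracing \(\rho_\mathrm{switch}(u)\) back through the renamed \(S_1\) yields \(\sigma_c(\rho_\mathrm{switch}(u))=\sigma_j(u)\), and tracing \(u\) through \(S_2\) yields \(\sigma_c(u)=\sigma'_j(u)\); combined with the assertion equality this closes the case.

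The main obstacle is precisely what motivated the stronger overapproximation on \(M_2\). When \(u\in M_1\cup M_2\) but \(u\) is not syntactically assigned in \(S_2\), the \(S_2\)-side equality \(\sigma_c(u)=\sigma'_j(u)\) cannot be obtained from Lemma~\ref{lem:ubBehave}'s ``variables assigned during execution'' clause and must instead be derived by tracing \(u\) backwards via Lemma~\ref{lem:change} through both \(S_2\) and the renamed \(S_1\) all the way to \(\sigma_\mathrm{init}\). The resulting value depends on whether \(u\in I=(U_1\cap U_2)\cap(M_1\cup M_2)\); if it is, bridging \(\sigma_i(u)\) to \(\sigma'_i(u)\) requires the inclusion chain \(u\in\mathcal{UB}(S_2)\subseteq\mathcal{L}(S_2,V)\subseteq\mathcal{L}(S'_i,V)\) so the hypothesis becomes applicable. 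Verifying that all such inclusions and each variable/state combination across the initialization, renamed-\(S_1\), and \(S_2\) phases line up — under the new \(M_2\) condition and the weakened hypothesis \(\sigma_i=_{|_{\mathcal{L}(S'_i,V)}}\sigma'_i\) — is where essentially all of the bookkeeping lives.
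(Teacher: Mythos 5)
Your proposal follows essentially the same route as the paper's proof: the same bridging state \(\sigma_r\), the same chain of Lemmas~\ref{lem:init}, \ref{lem:biRen}, \ref{lem:ubBehave}, \ref{lem:noInf}, and \ref{lem:equiv}, the same derivation of \(u\in C\) via \(\mathcal{L}(S_2,S',V)\), and the same key dichotomy (either \(u\) is assigned on the \(S_2\)-execution or \(u\in\mathcal{L}(S'_i,V)\)) that the strengthened condition on \(M_2\) is there to resolve. Your case split on \(u\in M_1\cup M_2\) is a slightly cleaner reorganization of the paper's split on \(\mathcal{M}(S_1)\cup\mathcal{M}(S_2)\) (which forces the paper into an extra contradiction sub-case), but it is not a genuinely different argument, and the deferred bookkeeping is exactly the bookkeeping the paper carries out.
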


\begin{proof}
Let  \(u\in \mathcal{L}(S'_j,V)\) be arbitrary.
In the following, we write \(v\) is assigned in \(p=(S_0,\sigma_0)\stackrel{op_1}{\rightarrow}\dots\stackrel{op_n}{\rightarrow}(S_n,\sigma_n)\in ex(S)\) if \(\exists i\in[1,n]: op_i=v:=aexpr;\).

First, consider \(u\notin(\mathcal{M}(S_1)\cup\mathcal{M}(S_2))\) and  \(\sigma_i(u)=\sigma_j(u)\).
Due to Lemma~\ref{lem:change}, \(\sigma_i(u)=\sigma_j(u)\) and \(\sigma'_i(u)=\sigma'_j(u)\).
Hence, \(\sigma_j(u)=\sigma'_j(u)\).

Second, consider \(u\in (\mathcal{M}(S_1)\cup\mathcal{M}(S_2))\) or \(\sigma_i(u)\neq\sigma_j(u)\).
Let \(\sigma_r\in\Sigma\) with \(\sigma_r=_{|_{(\mathcal{L}(S'_i,V))\cup\mathcal{M}(S_1)\cup\mathcal{M}(S_2)}}\sigma'_i\) and \(\sigma_r=_{|_{\bigcup_{v\in(\mathcal{V}(S_1)\cup\mathcal{M}(S_2))}\rho_\mathrm{switch}(v)}}\rho_\mathrm{switch}(\sigma_i)\).
Due to definition of \(\rho_\mathrm{switch}\) such a data state exists. 

Due to semantics and Lemma~\ref{lem:init}, \((init(\rho_\mathrm{switch},(U_1\cap U_2)\cap(M_1\cup M_2)), \sigma_r)\rightarrow^* (E,\sigma_\mathrm{init})\) with \(\forall v\in (U_1\cap U_2)\cap(M_1\cup M_2):\sigma_\mathrm{init}(v)=\sigma_\mathrm{init}(\rho_\mathrm{switch}(v))=\sigma_r(\rho_\mathrm{switch}(v))\).
By construction of \(\sigma_r\) and \(\rho_\mathrm{switch}\), and \(\sigma_i=_{|_{\mathcal{L}(S'_i,V)}}\sigma'_i\), we get \(\forall v\in (\mathcal{L}(S'_i,V)):\sigma_r(\rho_\mathrm{switch}(v))=\rho_\mathrm{switch}(\sigma_i)(\rho_\mathrm{switch}(v))=\sigma_i(\rho_\mathrm{switch}^{-1}(\rho_\mathrm{switch}(v)))=\sigma_i(v)=\sigma'_i(v)\).
Due to Lemma~\ref{lem:change}  and modifications \(\mathcal{M}((init(\rho_\mathrm{switch},(U_1\cup U_2)\cap(M_1\cup M_2))))\subseteq (U_1\cap U_2)\cap(M_1\cup M_2)\), we infer that \(\forall v\in \mathcal{V}\setminus((U_1\cap U_2)\cap(M_1\cup M_2)): \sigma_\mathrm{init}(v)=\sigma_r(v)\).
Hence, \(\sigma'_i=_{|_{\mathcal{L}(S'_i,V)}}\sigma_\mathrm{init}\) and \(\sigma_{init}=_{|_{\bigcup_{v\in(\mathcal{V}(S_1)\cup\mathcal{M}(S_2))}\rho_\mathrm{switch}(v)}}\rho_\mathrm{switch}(\sigma_i)\).

Due to Lemma~\ref{lem:biRen}, there exists \((\mathcal{R}(S_1,\rho_\mathrm{switch}),\rho_\mathrm{switch}(\sigma_i))\rightarrow^*(E,\rho_\mathrm{switch}(\sigma_j))\) \(\in ex(\mathcal{R}(S_1,\rho_\mathrm{switch}))\).
Since \(\sigma'_i=_{|_{\mathcal{L}(S'_i,V)}}\sigma_\mathrm{init}\), by definition \(\mathcal{UB}(S_2)\subseteq\mathcal{L}(S'_i,V)\), and \(\sigma_{init}=_{|_{\bigcup_{v\in(\mathcal{V}(S_1)\cup\mathcal{M}(S_2))}\rho_\mathrm{switch}(v)}}\rho_\mathrm{switch}(\sigma_i)\), we infer from Lemma~\ref{lem:ubBehave} that \mbox{\(\exists p'=(S_2,\sigma_\mathrm{init})\rightarrow^*(E,\sigma''_r)\in ex(S_2)\)} with \(\sigma'_j=_{|_{\mathcal{L}(S'_i,V))\cup\{v\in\mathcal{V}\mid v~\textrm{~assigned in~}p'\}}} \sigma''_r\) and \(\exists p=(\mathcal{R}(S_1,\rho_\mathrm{switch}),\sigma_\mathrm{init})\rightarrow^*(E,\sigma'_r)\in ex(\mathcal{R}(S_1,\rho_\mathrm{switch}))\) such that \(\rho_\mathrm{switch}(\sigma_j)\) \(=_{|_{\bigcup_{v\in(\mathcal{V}(S_1)\cup\mathcal{M}(S_2))}\rho_\mathrm{switch}(v)}} \sigma'_r\).
Due to Lemma~\ref{lem:noInf}, \(\exists(\mathcal{R}(S_1,\rho_\mathrm{switch});S_2,\sigma_\mathrm{init})\rightarrow^*(E,\sigma_c)\in ex(\mathcal{R}(S_1,\rho_\mathrm{switch});S_2)\) with  \(\sigma_c=_{|_{\mathcal{V}(\mathcal{R}(S_1,\rho_\mathrm{switch}))\cup\bigcup_{v\in\mathcal{M}(S_2)}\rho_\mathrm{switch}(v)}}\sigma'_r\) and \(\sigma_c=_{|_{\mathcal{V}(S_2)\cup\mathcal{M}(S_1)}}\sigma''_r\).

Due to semantics and all \((eq\_task(S_1, S_2, \rho_\mathrm{switch}, U_1\cap U_2\cap(M_1\cup M_2) ,C),\sigma)\rightarrow^*(S',\sigma')\in ex(eq\_task(S_1, S_2, \rho_\mathrm{switch}, (U_1\cap U_2)\cap(M_1\cup M_2) ,C))\) do not violate assertions, there exists \((equal(\rho_\mathrm{switch}, C), \sigma_c)\rightarrow^*(E,\sigma'_c)\).
Due to Lemma~\ref{lem:equiv}, we infer for all \(v\in C\) that \(\sigma_c(v)=\sigma_c(\rho_\mathrm{switch}(v))\).

Distinguish two cases.
First, consider \(u\in (\mathcal{M}(S_1)\cup\mathcal{M}(S_2))\).
We conclude that \(u\in C\).
Hence, \(\sigma_c(u)=\sigma_c(\rho_\mathrm{switch}(u))\).
By definition of \(\mathcal{L}\), we conclude that \(u\in\mathcal{L}(S'_i,V)\) or \(u\) is assigned on \((S_2,\sigma'_i)\rightarrow^*(E,\sigma'_j)\in ex(S_2)\).
Due to Lemma~\ref{lem:ubBehave}, we infer that \(u\in\mathcal{L}(S'_i,V)\) or \(u\) is assigned on \(p'\).
We conclude that \(\sigma_c(u)=\sigma''_r(u)=\sigma'_j(u)\) and \(\sigma_c(\rho_\mathrm{switch}(u))=\sigma'_r(\rho_\mathrm{switch}(u)) = \rho_\mathrm{switch}(\sigma_j)(\rho_\mathrm{switch}(u))=\sigma_j(\rho_\mathrm{switch}^{-1}(\rho_\mathrm{switch}(u))\) \(=\sigma_j(u)\).
Since \(\sigma_c(u)=\sigma_c(\rho_\mathrm{switch}(u))\), we get \(\sigma'_j(u)=\sigma_j(u)\).

Second, consider \(\sigma_j(u)\neq\sigma'_j(u)\) and \(u\notin (\mathcal{M}(S_1)\cup\mathcal{M}(S_2))\).
Since \(u\in\mathcal{L}(S'_j,V)\) and \(\sigma_i(u)\neq\sigma_j(u)\), we conclude that \(u\notin\mathcal{L}(S'_i,V)\).
By definition of \(\mathcal{L}\), we conclude that \(u\) is assigned on \((S_2,\sigma'_i)\rightarrow^*(E,\sigma'_j)\in ex(S_2)\).
Hence, \(u\in M_2\) and, therefore, \(u\in C\).
Then, due to Lemma~\ref{lem:ubBehave}, \(u\) is assigned on \(p'\).
Furthermore, we conclude that \(\sigma_c(u)=\sigma''_r(u)=\sigma'_j(u)\) and \(\sigma_c(\rho_\mathrm{switch}(u))=\sigma'_r(\rho_\mathrm{switch}(u)) = \rho_\mathrm{switch}(\sigma_j)(\rho_\mathrm{switch}(u))=\sigma_j(\rho_\mathrm{switch}^{-1}(\rho_\mathrm{switch}(u))=\sigma_j(u)\).
Since \(u\in C\), we conclude that \(\sigma'_j(u)=\sigma_c(u)=\sigma_c(\rho_\mathrm{switch}(u))=\sigma_j(u)\).
\end{proof}

\begin{theorem3b}
Let \(S\) and \(S'\) be two programs, \(\gamma\) be a replacement function such that \(S'=\Gamma(S,\gamma)\), and  \(V\subseteq\mathcal{V}\) be a set of outputs.
If for all \((S_1, S_2)\in\gamma\) there exists overapproximations \(\mathcal{UB}(S_1)\subseteq U_1\subseteq \mathcal{V}(S_1)\), \(\mathcal{UB}(S_2)\subseteq U_2\subseteq \mathcal{V}(S_2)\),
 \(\mathcal{M}(S_1)\subseteq M_1\subseteq \mathcal{V}(S_1)\), 
\(\mathcal{M}(S_2)\cup\{v\in\mathcal{V}\mid\exists S_2\rightarrow^*S_k\stackrel{v:=aexpr}{\rightarrow}S_r\in syn_P(S_2)\}\subseteq M_2\subseteq \mathcal{V}(S_2)\), 
 \(\mathcal{L}(S_1, S, V)\subseteq L_1\subseteq \mathcal{V}\), \(\mathcal{L}(S_2, S', V)\subseteq L_2\subseteq \mathcal{V}\), and renaming \(\rho_\mathrm{switch}\) s.t.\ \mbox{\(eq\_task(S_1, S_2, \rho_\mathrm{switch}, (U_1\cap U_2)\cap(M_1\cup M_2), (M_1\cup M_2)\cap (L_1\cup L_2))\)} does not violate an assertion, then \(S\equiv_V S'\).
\end{theorem3b}

\begin{proof}
Consider \((S_p,\sigma)\rightarrow^*(E,\sigma')=(S_0,\sigma_0)\stackrel{op_1}{\rightarrow}\dots\stackrel{op_n}{\rightarrow}(S_n,\sigma_n)\) be a path for an arbitrary program \(S_p\).
We define the splitting of the path into \(m\geq0\) segments such that each segment represents either a sequence in which each program of the sequence's states except for the last one does not start with a replaced subprogram or the execution of the subprogram that will be replaced. In case that there exists multiple replacements (nesting of replaced subprograms), we use the largest replacement.
Show by induction over the number of segments that for all programs \(S_p\) such that \(\exists\sigma,\sigma'\in\Sigma: (S,\sigma)\rightarrow^*(S_p,\sigma')\) if \((S_p,\sigma)\rightarrow^*(E,\sigma')\in ex(S_p)\), \(S'_p=\Gamma(S_p,\gamma)\), \(\exists\sigma,\sigma'\in\Sigma: (S',\sigma)\rightarrow^*(S'_p,\sigma')\), \(\sigma''\in\Sigma\) with \(\sigma=_{|_{\mathcal{L}(S'_p,V)}}\sigma''\), and \((S'_p,\sigma'')\rightarrow^*(E,\sigma''')\in ex(S'_p)\), then \(\sigma'=_{|_{\mathcal{L}(E,V)}}\sigma'''\). 

\textbf{Base case} (m=0):
Since \(m=0\), we conclude that \(S_p=E\). 
Since \(S'_p=\Gamma(S_p,\gamma)=\Gamma(E,\gamma)\), we conclude that \(S'_p=E\).
Hence, \(\sigma'=\sigma\wedge \sigma''=\sigma'''\).
By assumption \(\sigma=_{|_{\mathcal{L}(S_p',V)}}\sigma''\).
Thus, the induction hypothesis follows.

\textbf{Step case} (\(m>0\)):
Let \((S_0,\sigma_0)\stackrel{op_1}{\rightarrow}\dots\stackrel{op_i}{\rightarrow}(S_i,\sigma_i)\) be the first segment and  \(\sigma'_0\in\Sigma\) be arbitrary such that \(\sigma_0=_{|_{\mathcal{L}(S'_p,V)}}\sigma'_0\) and assume \((S'_p,\sigma'')\rightarrow^*(E,\sigma''')\in ex(S'_p)\).
We know that \(S_p=S_0\) and \(\sigma_0=\sigma\).
Consider two cases.

First, assume that the first segment represents a sequence in which each program of the first i-1 states does not start with a replaced subprogram.
Due to Corollary~\ref{cor:stepsNotRep2}, there exists execution \((\Gamma(S_0,\gamma),\sigma'_0)\stackrel{op_1}{\rightarrow}\dots\stackrel{op_i}{\rightarrow}(\Gamma(S_i,\gamma),\sigma'_i)\) with \(\sigma_i=_{|_{\mathcal{L}(\Gamma(S_i,\gamma),V)}}\sigma'_i\).
By assumption \(S_p'=\Gamma(S_0,\gamma)\).
By definition, \((S_i,\sigma_i)\rightarrow^*(E,\sigma')\in ex(S_i)\), which consists of \(m-1\) segments and is reachable from \(S\).
Due to semantics, semantics being deterministic, and \((S'_p,\sigma'_0)\rightarrow^*(E,\sigma''')\in ex(S'_p)\), there exists \(((\Gamma(S_i,\gamma),\sigma'_i)\rightarrow^*(E,\sigma''')\in ex(\Gamma(S_i,\gamma))\).
Furthermore, since \(S'_p\) reachable from \(S'\), \(S_p'=\Gamma(S_0,\gamma)\), and \((\Gamma(S_0,\gamma),\sigma'_0)\rightarrow^*(\Gamma(S_i,\gamma),\sigma'_i)\), also \(\Gamma(S_i,\gamma)\) reachable from \(S'\).
By induction, \(\sigma'=_{|_{\mathcal{L}(E,V)}}\sigma'''\).
The induction hypothesis follows. 

Second, assume that the first segment is the execution of a subprogram \(S_p^s\) that will be replaced, i.e., \(S_p^s\in dom(\gamma)\) and \(S_p=S_0=S_p^s\wedge S_i=E\vee S_p=S_0=S_p^s;S_i\wedge (S_p^s,\sigma_0)\rightarrow^*(E,\sigma_i)\in ex(S_p^s)\).
Furthermore, from \(S'_p=\Gamma(S_p,\gamma)\), we conclude that \(S'_p=\gamma(S_p^s)=\Gamma(S_p,\gamma)\) if \(S_p=S_0=S_p^s\) and  \(S'_p=\Gamma(S_p,\gamma)=\Gamma(S_p^s,\gamma);\Gamma(S_i,\gamma)=\gamma(S_p^s);\Gamma(S_i,\gamma)\)  otherwise.
Due to semantics, semantics being deterministic, and \((S'_p,\sigma'_0)\rightarrow^*(E,\sigma''')\in ex(S'_p)\), there exists \((S'_p,\sigma'_0)\rightarrow^*(S'_i,\sigma'_i)\in ex(S'_p)\) with \(S'_i=\Gamma(S_i,\gamma)\) and \((\gamma(S_p^s),\sigma'_0)\rightarrow^*(E,\sigma'_i)\).
Furthermore, there exists \(S'_i\rightarrow^*(E,\sigma''')\in ex(S'_i)\).
We conclude that \((\Gamma(S_p^s,\gamma),\sigma'_0)\rightarrow^*(E,\sigma'_i)\) (semantics).
Due to Lemma~\ref{lem:equivCheck3}, \(\sigma_i=_{|_{\mathcal{L}(S'_i,V)}}\sigma'_i\).
By definition, \((S_i,\sigma_i)\rightarrow^*(E,\sigma')\in ex(S_i)\), which consists of \(m-1\) segments and is reachable from \(S\).
Furthermore, we can conclude from \(S'_p\) reachable from \(S'\), also \(\Gamma(S_i,\gamma)\) reachable from \(S'\).
By induction, \(\sigma'=_{|_{\mathcal{L}(E,V)}}\sigma'''\).
The induction hypothesis follows. 

Let \(\sigma\in\Sigma\), \((S,\sigma)\rightarrow^*(E,\sigma')\in ex(S)\), \((S',\sigma)\rightarrow^*(E,\sigma'')\in ex(S')\), and \(u\in V\) be arbitrary.
Since \(\sigma=_{|_{\mathcal{L}(S',V)}}\sigma\), \(S'=\Gamma(S,\gamma)\), and \((S',\sigma)\rightarrow^*(E,\sigma'')\in ex(S')\), the induction hypothesis gives us \(\sigma'=_{|_{\mathcal{L}(E,V)}}\sigma''\).
By definition of live variable analysis, \(u\in\mathcal{L}(E,V)\).
Hence, \(\sigma'(v)=\sigma''(v)\).
\end{proof}

\subsection{Correctness of \(\rho_\mathrm{switch}\)}

\begin{lemma}\label{lem:corrRenaming}
Let \(S_1\) and \(S_2\) be two (sub)programs. 
Given overapproximation 
 \(\mathcal{UB}(S_1)\) \(\subseteq U_1\subseteq \mathcal{V}(S_1)\) and \(\mathcal{UB}(S_2)\subseteq U_2\subseteq \mathcal{V}(S_2)\) of the variables used before definition and  overapproximations \(\mathcal{M}(S_1)\subseteq M_1\subseteq \mathcal{V}(S_1)\) and \(\mathcal{M}(S_2)\subseteq M_2\subseteq \mathcal{V}(S_2)\) of the modified variables.
Any function \(\rho_\mathrm{switch}\) is  appropriate for renaming and ensures \(\forall v\in (U_1\cap U_2)\cap(M_1\cup M_2): \rho_\mathrm{switch}(v)=v\vee \rho_\mathrm{switch}(v)\notin  (U_1\cap U_2)\cap(M_1\cup M_2)\).
\end{lemma}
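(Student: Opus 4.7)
The plan is to verify the three claimed properties of $\rho_\mathrm{switch}$ by direct case analysis on its definition. Recall that $\rho_\mathrm{switch}(v)$ has three cases: $\mathrm{switch}(v)$ when $v \in M_1 \cup M_2$; the unique preimage $v_m \in M_1 \cup M_2$ when $v \in img(\mathrm{switch})$; and $v$ otherwise. The key structural fact driving every sub-argument is that $img(\mathrm{switch}) \subseteq \mathcal{V}\setminus(\mathcal{V}(S_1)\cup\mathcal{V}(S_2))$, which pushes every renamed modified variable outside both subprograms' variable sets.

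First I would establish the two interference-prohibition conditions that constitute being \emph{appropriate for renaming}. For $v \in \mathcal{V}(S_1) \cup \mathcal{M}(S_2)$, I split on the three cases of $\rho_\mathrm{switch}$. The first case gives $\rho_\mathrm{switch}(v) = \mathrm{switch}(v) \notin \mathcal{V}(S_2)$, which suffices since $\mathcal{M}(S_2) \subseteq \mathcal{V}(S_2)$. The second case is vacuous because $v \in \mathcal{V}(S_1) \cup \mathcal{V}(S_2)$ but $img(\mathrm{switch})$ is disjoint from this set. In the default case $\rho_\mathrm{switch}(v) = v$, and the conclusion $v \notin \mathcal{M}(S_2)$ follows from $v \notin M_1 \cup M_2$ together with the overapproximation $\mathcal{M}(S_2) \subseteq M_2$. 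For the second part, $v \in \mathcal{M}(S_1) \subseteq M_1$ forces the first case, so $\rho_\mathrm{switch}(v) = \mathrm{switch}(v) \notin \mathcal{V}(S_1) \cup \mathcal{V}(S_2)$, which immediately yields $\rho_\mathrm{switch}(v) \notin \mathcal{V}(S_2) \cup \mathcal{M}(S_1)$ using $\mathcal{M}(S_1) \subseteq \mathcal{V}(S_1)$.

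For condition (a), let $v \in I := (U_1 \cap U_2) \cap (M_1 \cup M_2)$. Since $v \in M_1 \cup M_2$, the first case applies and $\rho_\mathrm{switch}(v) = \mathrm{switch}(v) \notin \mathcal{V}(S_1)$. Because $I \subseteq U_1 \subseteq \mathcal{V}(S_1)$, we obtain $\rho_\mathrm{switch}(v) \notin I$, fulfilling the required disjunction.

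I do not anticipate any substantive obstacle; the proof is a mechanical case distinction relying on the codomain restriction of $\mathrm{switch}$ and the hypotheses $U_i \subseteq \mathcal{V}(S_i)$ and $\mathcal{M}(S_i) \subseteq M_i \subseteq \mathcal{V}(S_i)$. The subtlest point, which I would write out most carefully, is the default subcase in the first half of (b): one must remember to transfer $v \notin M_1 \cup M_2$ to the desired $v \notin \mathcal{M}(S_2)$ via the overapproximation inclusion, rather than conflating $M_2$ with $\mathcal{M}(S_2)$.
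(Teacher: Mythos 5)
Your proof is correct and follows essentially the same route as the paper's: a direct case analysis on the definition of $\rho_\mathrm{switch}$, driven by the disjointness of $img(\mathrm{switch})$ from $\mathcal{V}(S_1)\cup\mathcal{V}(S_2)$ together with the inclusions $\mathcal{M}(S_i)\subseteq M_i\subseteq\mathcal{V}(S_i)$ and $U_i\subseteq\mathcal{V}(S_i)$. The only point the paper covers that you omit is the one-line observation that $\rho_\mathrm{switch}$ is bijective (part of being a renaming function), which follows immediately from the injectivity of $\mathrm{switch}$ and the construction.
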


\begin{proof}
Let \(\mathrm{switch}: M_1\cup M_2\rightarrow \mathcal{V}\setminus(\mathcal{V}(S_1)\cup\mathcal{V}(S_2))\) be an arbitrary injective function.
Due to injectivity of function \(\mathrm{switch}\) and the construction of \(\rho_\mathrm{switch}\), function \(\rho_\mathrm{switch}\) is bijective.

Since \(\mathcal{M}(S_2)\subseteq M_2\subseteq\mathcal{V}(S_2)\) and by definition of \(\rho_\mathrm{switch}\) for all \(v\in\mathcal{V}(S_1)\) either \(\rho_\mathrm{switch}(v)=v\) and \(v\notin M_1\cup M_2\) or \(\rho_\mathrm{switch}(v)\in\mathcal{V}\setminus(\mathcal{V}(S_1)\cup\mathcal{V}(S_2))\), we infer \(\forall v\in\mathcal{V}(S_1)\cup\mathcal{M}(S_2):\rho(v)\notin\mathcal{M}(S_2)\)

Since \(\mathcal{M}(S_1)\subseteq M_1\) and for all \(v\in M_1\cup M_2\) renamed variable \(\rho_\mathrm{switch}(v)\in\mathcal{V}\setminus(\mathcal{V}(S_1)\cup\mathcal{V}(S_2))\), we infer \(\forall v\in\mathcal{M}(S_1): \rho(v)\notin\mathcal{V}(S_2)\cup\mathcal{M}(S_1)\).

We conclude that \(\rho_\mathrm{switch}\) is appropriate for renaming.

By definition, \((U_1\cap U_2)\cap(M_1\cup M_2)\subseteq U_1\cap U_2\subseteq \mathcal{V}(S_1)\cup\mathcal{V}(S_2)\).
By construction of \(\rho_\mathrm{switch}\), for all \(v\in \mathcal{V}(S_1)\cup\mathcal{V}(S_2)\) either \(\rho_\mathrm{switch}(v)=v\) or \(\rho_\mathrm{switch}(v)\notin \mathcal{V}(S_1)\cup\mathcal{V}(S_2)\).
Hence, \(\forall v\in(U_1\cap U_2)\cap(M_1\cup M_2): \rho_\mathrm{switch}(v)=v\vee \rho_\mathrm{switch}(v)\notin  (U_1\cap U_2)\cap(M_1\cup M_2)\).
\end{proof}

\end{document}